\renewcommand{\@notice}{}
\renewcommand{\epsilon}{\varepsilon}
\title{The Distortion of Binomial Voting Defies Expectation\thanks{Gonczarowski gratefully acknowledges research support by the Harvard FAS Inequality in America Initiative. Procaccia gratefully acknowledges research support by the National Science Foundation under grants IIS-2147187, IIS-2229881 and CCF-2007080; and by the Office of Naval Research under grant N00014-20-1-2488. Schiffer was supported by an NSF Graduate Research Fellowship. Zhang was supported by an NSF Graduate Research Fellowship.}}
\author{%
	Yannai A. Gonczarowski\thanks{Department of Economics and Paulson School of Engineering and Applied Sciences, Harvard University | \emph{E-mail}: \href{mailto:yannai@gonch.name}{yannai@gonch.name}.}
	\And
	Gregory Kehne\thanks{Department of Computer Science, University of Texas at Austin | \emph{E-mail}: \href{mailto:gkehne@utexas.edu}{gkehne@utexas.edu}.}
	\And
	Ariel D. Procaccia\thanks{Paulson School of Engineering and Applied Sciences, Harvard University | \emph{E-mail}: \href{mailto:arielpro@seas.harvard.edu}{arielpro@seas.harvard.edu}.}
	\AND
	Ben Schiffer\thanks{Department of Statistics, Harvard University | \emph{E-mail}: \href{mailto:bschiffer1@g.harvard.edu}{bschiffer1@g.harvard.edu}.}
	\And
	Shirley Zhang\thanks{Paulson School of Engineering and Applied Sciences, Harvard University | \emph{E-mail}: \href{mailto:szhang2@g.harvard.edu}{szhang2@g.harvard.edu}.}
}
\newtheorem{thm}{Theorem}[section]
\newtheorem{obs}[thm]{Observation}
\newtheorem{cor}[thm]{Corollary}
\newtheorem{lemma}[thm]{Lemma}
\newtheorem{definition}[thm]{Definition}
\newenvironment{proof}{\noindent\bf{Proof.}\rm}{\hfill$\blacksquare$\bigskip}
\newcommand{\qedsymbol}{\hfill\blacksquare}
\newcommand{\mc}{\mathcal}
\DeclareMathOperator*{\argmax}{arg\,max}
\DeclareMathOperator*{\argmin}{arg\,min}
\DeclareMathOperator{\sw}{sw}
\DeclareMathOperator{\SW}{dsw}
\DeclareMathOperator{\dist}{ddist}
\DeclareMathOperator{\invdist}{invddist}
\newcommand{\emdash}{\,---\,}
\DeclareMathOperator{\edmr}{edmr}
\DeclareMathOperator{\ewmr}{ewmr}
\DeclareMathOperator*{\E}{\mathbb{E}}
\DeclareMathOperator*{\Var}{Var}
\DeclareMathOperator*{\bu}{\mathbf{u}}
\DeclareMathOperator*{\bsigma}{\boldsymbol{\sigma}}
\DeclareMathOperator*{\scr}{Score}
\DeclareMathOperator*{\brda}{Borda}
\DeclareMathOperator*{\bern}{Bern}
\DeclareMathOperator*{\bin}{Bin}
\DeclareMathOperator*{\unif}{Unif}
\DeclareMathOperator{\tha}{Top-Half}
\DeclareMathOperator{\bsr}{Binom}
\DeclareMathOperator{\gbsr}{GBinom}
\DeclareMathOperator{\plur}{Plurality}
\DeclareMathOperator{\samp}{Sampling}
\begin{document}

\maketitle

\begin{abstract}
In computational social choice, the \emph{distortion} of a voting rule quantifies the degree to which the rule overcomes limited preference information to select a socially desirable outcome. This concept has been investigated extensively, but only through a worst-case lens. Instead, we study the \emph{expected distortion} of voting rules with respect to an underlying distribution over voter utilities. Our main contribution is the design and analysis of a novel and intuitive rule, \textit{binomial voting}, which provides strong distribution-independent guarantees for both expected distortion and expected welfare.
\end{abstract}

\section{Introduction}

In an election, voters report their preferences by casting ballots. Under the ubiquitous plurality rule, each voter names a single alternative, whereas other rules\emdash such as the badly named \emph{ranked-choice voting},\footnote{Also known as ``instant-runoff voting'' or ``alternative vote.''} whose adoption is rapidly expanding in the United States\emdash require voters to rank the alternatives. 
However, even these ostensibly expressive ordinal ballots (whereby voters rank the alternatives) fail to capture voters' \emph{intensity} of preference under truthful reporting.

If voters could report utility functions that are comparable to each other, then we would want to select socially desirable alternatives with respect to these utilities, but it is typically impractical to expect voters to compute and report such utilities. This creates a tension between the limited information available to the voting rule (through the report of ordinal ballots only) and its goal (good outcomes with respect to latent cardinal utilities). 

A significant body of work in computational social choice aims to understand and alleviate this tension~\citep{AFSV21}. It revolves around the notion of \emph{distortion}, defined as the worst-case ratio between the utilitarian \emph{social welfare} (sum of utilities) of the voting rule's outcome and that of the welfare-maximizing alternative. The worst case is taken over rankings, which serve as input to the voting rule, and over utilities that are consistent with these rankings.

As is often the case with worst-case analysis, however, the classic notion of distortion is arguably too conservative. In particular, nontrivial guarantees require restrictive assumptions (see Section~\ref{subsec:rel}), and so this type of analysis may not help identify appealing voting rules.

With this difficulty in mind, we focus on \emph{expected distortion}. Its definition includes the same ratio as before, and we are still interested in the worst case over rankings. However, we now take the conditional expectation over utilities consistent with the rankings, given an i.i.d.\ distribution over the utilities. Of course, it might not be realistic for such a distribution to be known a priori,\footnote{For this reason, we avoid calling this distribution a Bayesian \emph{prior}.} and thus we search for voting rules that are \emph{distribution independent}\footnote{This term is inspired by the literature on prior-independent mechanisms within mechanism design.}: rules that choose the outcome in a way that does not depend on this distribution, even though their guarantees are stated in terms of this distribution.

Overall, our goal is to
\textbf{design distribution-independent voting rules that provide appealing expected distortion guarantees.} We uncover novel, potentially practical, voting rules with these properties.

\subsection{Our Results}

We start by considering the important case of two alternatives (e.g., US presidential elections or yes/no decisions) in Section~\ref{sec:two}. We show that, for any underlying distribution, the \emph{majority} rule optimizes both expected distortion and expected social welfare. 

This result suggests that maximizing expected social welfare may be a good approach for optimizing expected distortion. Indeed, our main result in Section~\ref{sec:edmr} is that, under mild conditions on the underlying distribution and for a sufficiently large number of voters \emph{or} alternatives, the expected-welfare-maximizing rule optimizes expected distortion almost perfectly.

The expected-welfare-maximizing rule, however, is tailored to the underlying distribution and relies on intimate knowledge thereof. Our aim is therefore to approximately optimize expected welfare via a distribution-independent rule. 

In Section~\ref{sec:ewmr}, we design and analyze such a rule, which belongs to the family of (positional) \emph{scoring rules}. Under this rule, which we call \emph{binomial voting}, each voter awards $\sum_{\ell=k}^m\binom{m}{\ell}$ points to the alternative ranked in the $k$th position, where $m$ is the number of alternatives, and the alternative awarded the largest number of points overall is selected; note that this rule is distribution independent. Our main result of Section~\ref{sec:ewmr} is that for any underlying distribution supported on $[0,1]$ with (largest) median $\nu$, binomial voting provides a multiplicative $\frac{\nu}{2}$-approximation to the optimal expected welfare. Combining this result with that of Section~\ref{sec:edmr}, it follows that binomial voting gives almost the same $\frac{\nu}{2}$ guarantee for expected distortion, when the number of voters or alternatives is sufficiently large. 

It is worth noting that binomial voting is \emph{not} an outlandish rule designed purely to achieve low expected distortion. On the contrary: as a positional voting rule, it inherits the desirable properties of this family. In fact, positional scoring rules are characterized by a number of natural axioms~\citep{Young75}. Furthermore, we are aware of very few positional scoring rules that have received attention in their own right, as it is typically difficult to justify any specific choice of scores; in this sense, the expected distortion framework can be seen as a way of pinpointing particularly useful parameters. In summary, we identify binomial voting as an unusually attractive rule when viewed through the lens of expected distortion.

\subsection{Related Work} 
\label{subsec:rel}

The literature on (worst-case) distortion~\citep{AFSV21} can generally be partitioned into two threads. In the first thread~\citep{PR06,BCHL+15,CNPS17,MPSW19,EKPS22}, it is assumed that voters have normalized utilities, that is, for each voter, the sum of utilities is one. Even with this restrictive assumption, deterministic voting rules cannot give nontrivial distortion bounds~\citep{CNPS17}, and the best possible distortion for randomized rules is $\Theta(\nicefrac{1}{\sqrt{m}})$~\citep{EKPS22}. In the second thread~\citep{AP17,GAX17,ABEP+18,GHS20,KK22}, known as \emph{metric distortion}, it is assumed that utilities (rather, costs) are induced by an underlying metric space. While some well-known voting rules have constant distortion in this setting~\citep{ABEP+18}, the metric assumption is arguably difficult to justify in most domains of interest. By contrast, we make no assumptions on utilities. 

Our work is most closely related to that of \citet{BCHL+15}. While their most substantial results pertain to worst-case distortion, one of their results deals with a distributional setting that can be seen as the starting point for our work. They show that the voting rule that maximizes expected social welfare is a scoring rule whose scores depend on the underlying distribution; we will revisit and build on this result. However, they do not study expected distortion, nor do they explore (in this context) voting rules that are agnostic to the distribution. 

Previous papers that analyze expected distortion include those of \citet{CDK17,CDK18}. However, their papers are fundamentally different. For one, they study metric distortion. More importantly, they focus on one intuitive but very specific distribution, where the positions of alternatives in the underlying metric space are drawn uniformly at random from the voter positions. By contrast, we study general (i.i.d.)\ distributions over utilities and design distribution-independent voting rules. The work of \citet{GLS19} is more distantly related: they also analyze expected distortion in the metric setting, assuming that voters abstain with some probability. 

By replacing worst-case analysis with expectation, our work introduces a Bayesian view canonical to economic theory into the literature on distortion. The Wilson doctrine \citep{Wilson87} advocates for the use of mechanisms that require as little prior information as possible. Distribution-independent (sometimes called prior-independent) mechanisms are, through this lens, the most desirable ones as they require no prior information at all, and have been studied within economics \cite[see, e.g.,][]{McAfee92,Segal03} as well as within computer science \cite[see, e.g.,][]{HR09,DHKT11,BCGZ18}. Our work can also be seen as belonging to a recent push within computer science on ``beyond worst-case'' analysis of algorithms \citep{Roughgarden21}.
\section{Model and Preliminaries}
\label{sec:model}

Let there be a set $N=\{1,2,\ldots,n\}$ of \emph{voters} and a set $A=\{1,2,\ldots,m\}$ of \emph{alternatives}. Each voter $i$ has \emph{utility} $u_{ij}$ for each alternative $j$. The voter utilities collectively form a \emph{utility profile} $\bu$, and the utilitarian \emph{social welfare} for alternative $j$ is denoted as $\sw(j, \bu) = \sum_{i \in N} u_{ij}$. Each voter's utilities induce a \emph{ranking} $\sigma_i=\sigma_i(\bu)$ over alternatives, where $j$ is ranked higher than $k$ in $\sigma_i$ only if $u_{ij}\geq u_{ik}$.\footnote{Tie breaking can be performed arbitrarily, and is of no consequence for our results.} Let $\bsigma=(\sigma_1,\ldots,\sigma_n)$ be the \emph{preference profile} consisting of the rankings of all voters. 

A (deterministic) \emph{voting rule} $f$ takes as input a preference profile and outputs a winning alternative $f(\bsigma) \in A$. One can also define a randomized voting rule, which returns a distribution over alternatives. The definitions given below easily generalize to randomized rules, by considering their expected social welfare, where the expectation is taken over the randomness of the rule. We focus on deterministic rules, however, for two reasons: first, for ease of exposition, and second, because deterministic rules are in fact as powerful as randomized rules in our setting, as we show in Appendix~\ref{app:rand}. 

The \emph{worst-case distortion} of $f$ intuitively measures the loss in social welfare incurred by $f$, and is given by:
\begin{equation}\label{eq:wc_dist_defn}
    \sup_{\bu} \frac{\max_{j \in A}\sw(j,\bu)}{\sw\bigl(f(\bsigma(\bu)),\bu\bigr)}.
\end{equation}
The numerator is the optimal social welfare, whereas the denominator is the social welfare of the outcome under $f$; the worst case of this ratio is taken over utility profiles $u$. Note that a voting rule $f$ minimizes this expression if and only if it maximizes its inverse, and, more generally, that inverting this ratio does not change any of the results for worst-case distortion. 

\subsection{Distributional and Expected Distortion}

Our point of departure is that we introduce an ``average-case'' setting; for the rest of this paper, we will assume that the voters' utilities for the alternatives are drawn i.i.d.~from an underlying distribution $\mathcal{D}$. Unless otherwise noted, we will also assume that $\mathcal{D}$ is supported on $[0,1]$. We will frequently use $\mu$ and $s^2$ to denote the mean and variance of $\mathcal{D}$, respectively. 

Let $\bu \triangleright \bsigma$ denote that the utility vector $\bu$ is consistent with the preference profile $\bsigma$. Given a preference profile $\bsigma$, a utility vector $\bu \triangleright \bsigma$ is drawn with respect to the distribution $\mathcal{D}$ as follows. Each voter $i \in N$ draws $m$ i.i.d. utilities from $\mathcal{D}$, and these $m$ utilities are assigned from highest to lowest to the $m$ alternatives according to the order of $\bsigma_i$, the ranking of voter $i$ under $\bsigma$. We define the \emph{distributional distortion} under $\mathcal{D}$ of a voting rule $f$ for a preference profile $\bsigma$ as the following random variable, in the probability space defined by the above construction (for drawing $\bu \triangleright \bsigma$ with respect to the distribution $\mathcal{D}$):
\begin{equation}\label{eq:eid_defn}
    \dist(f,\bsigma) = \frac{\sw(f(\bsigma),\bu)}{\max_{j \in A}\sw(j,\bu)}.
\end{equation}
Note that this random variable representation of distributional distortion is in contrast to the single number representation in worst-case distortion. For the rest of this paper, when we refer to the \emph{distortion} of a rule $f$, we are referring to the distributional distortion. We define the \emph{expected distortion} of $f$ for $\bsigma$ as $\mathbb{E}\bigl[\dist(f, \bsigma)\bigr]$.

In Equation \eqref{eq:eid_defn}, we have defined distortion with the maximum in the denominator of the fraction. This means that the expected distortion of a voting rule lies in the range $[0,1]$, and a larger value of expected distortion (closer to $1$) corresponds to a better voting rule.
Note that this choice stands in contrast to Equation~\eqref{eq:wc_dist_defn}. As noted previously, inverting the ratio does not change the optimal rule $f$ for worst-case distortion. However, this no longer holds true when considering expected distortion. For a random variable $X$, in general $\mathbb{E}\bigl[\frac{1}{X}\bigr] \ne \frac{1}{\mathbb{E}[X]}$, and therefore rules that minimize $\mathbb{E}[X]$ can look very different compared to rules that maximize $\mathbb{E}\bigl[\frac{1}{X}\bigr]$. The reason for our choice is twofold. First, if the ratio were inverted then the random variable would be unbounded, which would imply that its expectation could be infinite even when the distribution $\mathcal{D}$ is bounded on $[0,1]$. It turns out that this might be the case even for nice distributions such as beta distributions, as we prove in Appendix~\ref{app:iddist_inf}. Second, if the ratio were inverted then there would exist distributions for which the best deterministic rule performs arbitrarily worse than the best randomized rule, even in situations where we would normatively expect the two rules to behave the same. That would prevent us from focusing on deterministic rules, as we are able to do here.
For more details, see Appendix~\ref{app:iddist_rand}.

We define an \emph{expected-distortion-maximizing rule} (henceforth, an EDMR) as a rule $f$ that, for every preference profile~$\bsigma$, maximizes the expected distortion under that profile. In other words, an expected-distortion-maximizing rule $f$ satisfies
\[
    \mathbb{E}\bigl[\dist(f, \bsigma)\bigr] = \max_{g} \mathbb{E}\bigl[\dist(g, \bsigma)\bigr] 
\]
for all preference profiles $\bsigma$, where the maximum is taken over all possible voting rules $g$. We write  $\edmr(\cdot)$ to refer to an EDMR.

\subsection{Distributional and Expected Social Welfare}

Similarly to \cite{BCHL+15}, we also wish to consider expected social welfare. To this end, we define the \emph{distributional social welfare} of a rule $f$ for preference profile $\bsigma$ as the random variable 
\[
    \SW(f,\bsigma)=\sw(f(\bsigma), \bu),
\]
in the same  probability space as before (defined by the construction for drawing $\bu \triangleright \bsigma$ with respect to a distribution $\mathcal{D}$). We define the \emph{expected social welfare} of $f$ for $\bsigma$ as $\mathbb{E}\bigl[\SW(f, \bsigma)\bigr]$.

An \emph{expected-welfare-maximizing rule} (henceforth, an EWMR) is a rule $f$ that, for all preference profiles $\bsigma$, maximizes the expected social welfare, i.e., $f$ such that for all $\bsigma$,
\[
    \mathbb{E}\bigl[\SW(f, \bsigma)\bigr] = \max_{g} \mathbb{E}\bigl[\SW(g, \bsigma)\bigr].
\]
We write $\ewmr(\cdot)$ to refer to an EWMR. 

The family of (positional) \emph{scoring rules} consists of voting rules that assign a vector of scores $(s_1,...,s_m)$ to alternatives, from highest ranked through lowest ranked, respectively. The alternative that is chosen is the alternative that has the highest total score when adding up all of the rankings across the $n$ voters, with ties broken arbitrarily. \cite{BCHL+15} show that a scoring rule using scores 
\[
    (s_1,...,s_m) = \bigl(\mathbb{E}[X_{(m)}], \mathbb{E}[X_{(m-1)}],..., \mathbb{E}[X_{(1)}]\bigr)
\]
is an EWMR, where $\mathbb{E}\bigl[X_{(m-k)}\bigr]$ is the expected value of the $m\!-\!k$ order statistic from $m$ i.i.d.\ draws from $\mathcal{D}$.  Note, however, that this rule is tailored to the specific distribution $\mathcal{D}$.

\subsection{Approximately Optimal Rules}

The goal of this paper is to explore rules that are approximately as good as an EDMR, but are distribution independent, i.e., are not tailored to $\mathcal{D}$. This will rely on approximations to an EWMR. We call a voting rule $f$ an \emph{$\alpha$-Expected-Distortion-Maximizing Rule} ($\alpha$-EDMR) for distribution $\mathcal{D}$ if, for all preference profiles $\bsigma$, the expected distortion of $f$ is at least $\alpha$ times the expected distortion of an EDMR. In other words, $f$ is an $\alpha$-EDMR for $\mathcal{D}$ if for all $\bsigma$,
\[
    \mathbb{E}\bigl[\dist(f, \bsigma)\bigr] \ge  \alpha \cdot  \mathbb{E}\bigl[\dist(\edmr, \bsigma)\bigr].
\]
If a voting rule is a $(1-\epsilon)$-EDMR, then we refer to it as an \emph{$\epsilon$-approximation} of an EDMR. 

We use the analogous terminology of $\alpha$-EWMR for rules $f$ such that, for all $\bsigma$,
\[
    \mathbb{E}\bigl[\SW(f, \bsigma)\bigr] \ge \alpha \cdot \mathbb{E}\bigl[\SW(\ewmr,\bsigma)\bigr].
\]

To make our goal more tangible, consider a policy maker in charge of choosing a voting rule for a certain municipality. There are various guarantees such a policy maker might wish for the voting rule to satisfy with respect to (expected) distortion. If said policy maker is choosing a voting rule to use for years to come, they are likely to be interested in distribution-independent guarantees, since demographic distributions might shift over time.

A first guarantee in which such a policy maker might be interested is that \textit{ex ante}, before an election is held, the distortion is expected to be low. Note that this expectation is over the entire space of voter utility profiles. While this is an appealing guarantee, it is rather weak (and easy to satisfy). To see the sense in which it is weak, consider the policy maker not at the day on which they choose the voting rule, but rather a while later, in a specific election that uses this rule, after the votes have been cast and officially tallied. At this point in time, the ordinal preferences $\bsigma$ are public knowledge, and, depending on what they are, it might be the case that despite the expected distortion having been low \textit{ex ante}, it turns out that the expected distortion is high \textit{ex post}, that is, it turns out that conditioned on all of the public information so far---i.e., conditioned on the \textit{ordinal} preference profile $\bsigma$---the (conditioned) expected distortion is high. While the policy maker had good reasons to choose the voting rule \textit{ex ante}, a news outlet might still run a news story noting that given the tallied (realized, possibly low probability) ordinal preferences $\bsigma$, there is reason to expect very high distortion. A forward-looking (and negative-press-averse) policy maker might want a (worst-case) guarantee that this scenario cannot happen, i.e., a worst-case guarantee on the \textit{ex post} expected distortion. This is of course a much stronger guarantee (in particular, it also implies the same guarantee on \textit{ex ante} expected distortion), and is considerably harder to satisfy. Seeking rules that give such a guarantee is precisely the goal of this paper.

We note that such an \textit{ex post} guarantee is in a sense the strongest guarantee that such a negative-press-averse policy maker might wish for. Indeed, an even stronger guarantee would be for the distortion to be low given the actual realized \textit{cardinal} preferences $\bu$, which coincides with ``traditional'' worst-case distortion, but since the policy maker cares about press coverage, and since the realized cardinal preferences never become public knowledge (and hence, never become the basis for a news story), such a strong guarantee would be an overkill for the policy maker. This is of course fortunate, because as noted in Section~\ref{subsec:rel}, there are impossibilities that preclude getting such an overly strong guarantee in a distribution-independent setting like ours.
\section{The Case of Two Alternatives}
\label{sec:two}

To develop some intuition, let us start with a na\"ive attempt to compute an EDMR: calculate the expected distortion for each alternative $j \in A$, and choose the alternative that optimizes the expected distortion. This algorithm has two major shortcomings, however. First, calculating the expected distortion of a fixed alternative requires $nm$ concentric integrals, and therefore this algorithm would take time that is exponential in both $n$ and $m$. Second, the na\"ive algorithm requires knowledge of the underlying distribution $\mathcal{D}$. 

Even for the case of two alternatives, the above algorithm is exponential in $n$ and therefore not computationally feasible---not to mention that it still requires knowledge of the distribution. However, it turns out that for this special case, \emph{majority}---which selects the alternative preferred by the majority of voters (with ties broken arbitrarily)---is both an EDMR and an EWMR. Majority is furthermore computationally efficient and distribution-independent (i.e., agnostic to the distribution), so it satisfies all of the desiderata discussed so far. As a bonus, it is also indisputably practical. 

\begin{thm}
\label{thm:plurality}
    Let $m=2$. For every distribution $\mathcal{D}$, majority is both an expected-distortion-maximizing rule and an expected-welfare-maximizing rule. 
\end{thm}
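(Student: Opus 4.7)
My plan has two parts, one for each claim.

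For the \textbf{EWMR} claim, I would argue by direct computation. Fix any preference profile $\bsigma$, let $n_1$ be the number of voters ranking alternative $1$ first and $n_2=n-n_1$ the number ranking $2$ first, and write $\mu_{\max}=\mathbb{E}[\max(X,Y)]$, $\mu_{\min}=\mathbb{E}[\min(X,Y)]$ for $X,Y\sim\mathcal{D}$ i.i.d. By the sampling procedure for $\bu\triangleright\bsigma$, each voter who ranks $j$ first has $\mathbb{E}[u_{ij}]=\mu_{\max}$ and $\mathbb{E}[u_{i,3-j}]=\mu_{\min}$. Therefore $\mathbb{E}[\SW(j,\bsigma)]=n_j\mu_{\max}+n_{3-j}\mu_{\min}$ for $j\in\{1,2\}$, and the difference has the same sign as $n_1-n_2$ because $\mu_{\max}\geq\mu_{\min}$. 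Hence any rule maximizing expected welfare picks an alternative with the larger $n_j$, which is exactly the majority rule. (Alternatively, invoke the \citet{BCHL+15} scoring-rule characterization of EWMR with scores $(\mu_{\max},\mu_{\min})$.)

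For the \textbf{EDMR} claim, the main structural observation is that for $m=2$ the denominator of the distortion depends on $\bu$ only through $S\coloneqq\sw(1,\bu)$ and $T\coloneqq\sw(1,\bu)+\sw(2,\bu)$. Writing $d_j(\bsigma,\bu)=\dist(\text{``pick }j\text{''},\bsigma)$, we get
\[
d_1=\frac{S}{\max(S,T-S)},\qquad d_2=\frac{T-S}{\max(S,T-S)},
\]
and a case split on whether $S\ge T/2$ shows that, with $T$ held fixed, $d_1$ is a non-decreasing function of $S$ (it equals $S/(T-S)$ on $\{S<T/2\}$ and $1$ on $\{S\ge T/2\}$, matching at $S=T/2$).

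I then compare two profiles $\bsigma$ and $\bsigma^{+}$ that differ only in one voter $i^{\ast}$, who prefers alternative $2$ under $\bsigma$ and alternative $1$ under $\bsigma^{+}$. I would couple the draws by sampling $(Z_i,W_i)$ i.i.d.\ from $\mathcal{D}$ for each voter and assigning $\max$ and $\min$ to their favorite and least favorite alternative in both profiles. Under this coupling $T$ is identical and $S$ weakly increases when flipping $i^{\ast}$, so by the monotonicity above $d_1(\bsigma^{+},\bu^{+})\ge d_1(\bsigma,\bu)$ pointwise, hence $\mathbb{E}[d_1(\bsigma^{+})]\ge\mathbb{E}[d_1(\bsigma)]$. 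A separate symmetry argument, swapping the labels $1\leftrightarrow 2$ throughout $\bsigma$ to produce $\tilde{\bsigma}$, gives $\mathbb{E}[d_2(\bsigma)]=\mathbb{E}[d_1(\tilde{\bsigma})]$. Combining these two facts, if $n_1(\bsigma)\ge n_2(\bsigma)$ then $n_1(\bsigma)\ge n_1(\tilde{\bsigma})$, and iterating the single-voter flip from $\tilde{\bsigma}$ up to $\bsigma$ yields $\mathbb{E}[d_1(\bsigma)]\ge\mathbb{E}[d_1(\tilde{\bsigma})]=\mathbb{E}[d_2(\bsigma)]$, which shows that picking the majority alternative is expected-distortion optimal among the only two feasible choices.

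The EWMR part is essentially routine. The main obstacle is the EDMR part, and specifically avoiding any attempt to directly integrate the ratio $S/\max(S,T-S)$ against the underlying distribution of $(S,T)$, which would be intractable for arbitrary $\mathcal{D}$. The clean route is the monotonicity-in-$S$-at-fixed-$T$ observation combined with the swap coupling, which lets us compare $\mathbb{E}[d_1]$ and $\mathbb{E}[d_2]$ without ever writing down their values.
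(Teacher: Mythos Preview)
Your proposal is correct and takes essentially the same approach as the paper: both prove the EWMR claim by the same direct computation, and both prove the EDMR claim via a one-voter swap coupling together with the monotonicity fact that (in your notation) $S/\max(S,T-S)$ is nondecreasing in $S$ at fixed $T$. The paper states this monotonicity as the lemma $\frac{Z_1+x}{\max(Z_1+x,Z_2+y)}\ge\frac{Z_1+y}{\max(Z_1+y,Z_2+x)}$ for $x\ge y$, which is exactly the same statement once one observes that $T=Z_1+Z_2+x+y$ is identical on both sides; your fixed-$T$ framing and full coupling are a slightly cleaner packaging of the paper's conditional-expectation-plus-integral presentation, but the underlying idea is identical.
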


\begin{proof}
    We first show that majority is an EWMR. Let $\mu_1$ and $\mu_2$ be the expected values for the minimum and maximum, respectively, of two i.i.d.\ draws from $\mathcal{D}$. Let $0\le k\le n$ be the number of voters that prefer alternative~$1$ to alternative~$2$. Then, by linearity of expectation, the expected welfares for alternative~$1$ and alternative~$2$ are $k\mu_2 + (n-k)\mu_1$ and $(n-k)\mu_2 + k\mu_1$, respectively. Since $\mu_2 \ge \mu_1$, alternative~$1$ is an expected-welfare-maximizing alternative if and only if $k \ge n-k$.

    To prove that majority is an EDMR, we use a coupling argument to show that the expected distortion of an alternative that is ranked first by $k$ out of $n$ voters is increasing in $k$. A direct consequence of this is that a majority winner is an expected-distortion-maximizing alternative. 

    Recall that $u_{ij}$ is defined as the utility that voter $i$ has for alternative $j$ for $i \in N$ and $j \in \{1,2\}$. Let $\bsigma^k$ be the preference profile for $n$ voters that has alternative $1$ ranked first by the first $k$ voters (and only by them). Similarly, let $\bsigma^{k+1}$ be the preference profile for $n$ voters that has alternative $1$ ranked first by the first $k$ voters and by the last voter (and only by them). To show that the majority winner is an EDMR, it is sufficient (by symmetry) to show that
    \[
        \E_{\bu \triangleright \bsigma^k }\left[\frac{ \sum_{i=1}^n u_{i1}}{\max\bigl\{\sum_{i=1}^n u_{i1}, \sum_{i=1}^n u_{i2}\bigr\}}\right] \le \E_{\bu \triangleright \bsigma^{k+1} }\left[\frac{ \sum_{i=1}^n u_{i1}}{\max\bigl\{ \sum_{i=1}^n u_{i1}, \sum_{i=1}^n u_{i2}\bigr\}}\right].
    \]
    Note that we used $\mathbb{E}_{\bu \triangleright \bsigma^k }$ to denote the expectation over a draw of a utility profile $\bu$ consistent with~$\bsigma^k$, and similarly for $k+1$. 
    
    Next, define $Z_j = \sum_{i=1}^{n-1} u_{ij}$ for $j \in \{1,2\}$. Furthermore, define $\bsigma_{-n}^k$ as a preference profile on $n\!-\!1$ voters that has the first $k$ voters ranking alternative~$1$ higher than alternative~$2$. Finally, let $\bu_{-n}$ be the truncated version of $u$ that excludes the $n$th voter. By the law of total expectation, we can take an outer expectation over the utilities of the first $n\!-\!1$ voters and an inner expectation over the utilities of the $n$th voter:
    \begin{align*}
        \E_{ \bu \triangleright \bsigma^k}\left[\frac{ \sum_{i=1}^n u_{i1}}{\max\bigl\{\sum_{i=1}^n u_{i1}, \sum_{i=1}^n u_{i2}\bigr\}} \right] 
        &= \E_{\bu_{-n} \triangleright \bsigma^{k}_{-n}}\left[\E_{u_{n1} \le u_{n2}}\left[\frac{Z_1 + u_{n1}}{\max\bigl\{Z_1 + u_{n1}, Z_2 + u_{n2}\bigr\}} \: \middle | \: \mathbf{u}_{-n} \right] \right], \\
        \E_{ \bu \triangleright \bsigma^{k+1}}\left[\frac{ \sum_{i=1}^n u_{i1}}{\max\bigl\{\sum_{i=1}^n u_{i1}, \sum_{i=1}^n u_{i2}\bigr\}} \right] 
        &= \E_{\bu_{-n} \triangleright \bsigma^{k}_{-n}}\left[\E_{u_{n1} \ge u_{n2}}\left[\frac{Z_1 + u_{n1}}{\max\bigl\{Z_1 + u_{n1}, Z_2 + u_{n2}\bigr\}} \: \middle | \: \mathbf{u}_{-n} \right] \right].
    \end{align*}
    \normalsize

    By the definition of $\bsigma^k$ and $\bsigma^{k+1}$, the two final expressions only differ in the inequality between the random variables $u_{n1}$ and $u_{n2}$. Furthermore, since $Z_1$ and $Z_2$ are both functions of $\bu_{-n}$, they are both constants in the inner expectation. Therefore, to prove the desired result it is sufficient to show that for any constants $Z_1, Z_2 \ge 0$,
    \begin{align*}
        &\E_{u_{n1} \le u_{n2}}\left[\frac{Z_1 + u_{n1}}{\max\bigl\{Z_1 + u_{n1}, Z_2 + u_{n2}\bigr\}}\right]
        \le \E_{u_{n1} \ge u_{n2}}\left[\frac{Z_1 + u_{n1}}{\max\bigl\{Z_1 + u_{n1}, Z_2 + u_{n2}\bigr\}}\right].
    \end{align*}

    Assume first that $f$ is a continuous distribution. Let $f$ be the PDF of $\mathcal{D}$. The joint PDF of the minimum and maximum of two draws from $\mathcal{D}$ is $2f(x)f(y)$ for $x \geq y$. This means that the expectation can be rewritten as a double integral giving the desired result:
    \begin{align*}
      \E_{u_{n1} \ge u_{n2}}\left[\frac{Z_1 + u_{n1}}{\max\bigl\{Z_1 + u_{n1}, Z_2 + u_{n2}\bigr\}} \right] &= \int_{0}^\infty  \int_{y}^{\infty } \frac{Z_1 + x}{\max \bigl\{Z_1 + x, Z_2 + y\bigr\}} 2f(x)f(y)dxdy \\
       &\ge \int_{0}^\infty  \int_{y}^{\infty } \frac{Z_1 +y}{\max\bigl\{Z_1 + y, Z_2 + x\bigr\}} 2f(x)f(y)dxdy  \\
       &= \E_{u_{n1} \le u_{n2}}\left[\frac{Z_1 + u_{n1}}{\max\bigl\{Z_1 + u_{n1}, Z_2 + u_{n2}\bigr\}}\right]. 
    \end{align*}

    The inequality in the second line follows from a simple lemma in Appendix~\ref{app:plurality_lemma}, which takes advantage of the fact that $x \ge y$.

    A similar derivation using the Lebesgue decomposition or the Riemann--Stieltjes integral gives the same result for general (not necessarily continuous) $f$.
\end{proof}

When $m \ge 2$, majority is no longer defined, and simple extensions (like plurality) are neither an EDMR nor an EWMR. We therefore require a different strategy for such $m$.
\section{From Expected Distortion to Expected Welfare}
\label{sec:edmr}

Despite failing to extend beyond the case of two alternatives, Theorem~\ref{thm:plurality} does give some hope that perhaps an expected-welfare-maximizing rule, which is equivalent to an expected-distortion-maximizing rule when $m = 2$, can provide good expected distortion for larger $m$.  
This is not the case in general, however: If we consider the space of all distributions $\mathcal{D}$ supported on $[0,1]$, an EWMR may not provide a good approximation to an EDMR. In fact, there exist distributions and preference profiles for which the expected distortion of an EWMR is an $\Omega(m)$ factor worse than that of an EDMR, as shown in Appendix~\ref{app:proof_of_neg}.

\begin{thm}
\label{thm:neg}
There exists a constant $C>0$ such that for every $m\ge5$ there exist a number of voters $n$, a distribution $\mathcal{D}$ supported on $[0,1]$, and a preference profile $\bsigma$, such that $\mathbb{E}\bigl[\dist(\ewmr, \bsigma)\bigr] \le \frac{C}{m} \cdot \mathbb{E}\bigl[\dist(\edmr, \bsigma)\bigr]$.
\end{thm}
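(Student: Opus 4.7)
My plan is to construct, for each $m \geq 5$, an explicit triple $(n, \mathcal{D}, \bsigma)$ witnessing the $\Omega(m)$ gap. The guiding intuition combines two ideas. First, by \citet{BCHL+15}, the EWMR is the positional scoring rule with weights $s_k = \E[X_{(m-k+1)}]$, so by a careful choice of $\mathcal{D}$ this scoring rule can be made to behave essentially like plurality. Second, on preference profiles reminiscent of the classical lower-bound construction for plurality's worst-case distortion---those where a single broadly-liked alternative $A$ is ranked second by every voter while a ``dark horse'' $B$ narrowly wins the plurality vote---plurality (and hence EWMR) picks the dark horse even though $A$ is a much better choice in terms of realized welfare ratios.

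Concretely, I would take $\mathcal{D}$ to be the two-point distribution with $\Pr[X = 1] = p$ and $\Pr[X = \epsilon] = 1-p$, for $p = \Theta(1/m^2)$ and $\epsilon$ negligibly small. A direct calculation then yields $s_1 = \Theta(1/m)$, $s_2 = \Theta(1/m^2)$, and $s_k$ much smaller for $k \geq 3$, so $s_1/s_2 = \Omega(m)$ and EWMR is plurality-like. I would then design the profile $\bsigma$ on $n = \Theta(m)$ voters so that $A$ is second-ranked by every voter, $B$ is first-ranked by strictly more voters than any other alternative, and the remaining alternatives $C_1, \ldots, C_{m-2}$ split the remaining first-place votes evenly.

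The analysis splits into two halves. First, verify $\ewmr(\bsigma) = B$: a direct score comparison using $s_1/s_2 = \Omega(m)$ shows that $B$'s score $k_B \cdot s_1$ exceeds $A$'s score $n \cdot s_2$ (because $k_B$ is chosen to exceed $n s_2 / s_1 = \Theta(n/m)$) as well as every $C_i$'s score. Second, prove the distortion gap: (a) by conditioning on each voter's number $r_i$ of ``spike'' draws, show that the realized welfare of $B$ is typically much smaller than that of one of the $C_i$'s or of $A$, yielding $\E[\dist(B,\bsigma)] = O(1/m)$; and (b) lower bound $\E[\dist(\edmr,\bsigma)]$ by $\E[\dist(A,\bsigma)]$, and argue via a symmetric realization-by-realization analysis that $A$ sits within a constant factor of the welfare-maximizing alternative in typical realizations, giving $\E[\dist(A,\bsigma)] = \Omega(1)$.

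The main obstacle is calibrating the parameters. The distribution parameter $p$ must be small enough that $s_1/s_2 = \Omega(m)$ (so that EWMR really does behave like plurality and picks $B$ over $A$), yet large enough that ``spike'' events occur with nontrivial probability at the relevant positions---otherwise the realized welfares are dominated by their $\epsilon n$ floor and the distortion analysis collapses into a tie-breaking case. Simultaneously, $n$ must be chosen moderate: too large and the law of large numbers concentrates every alternative's welfare distribution, causing $\E[\sw(f)/\max_j\sw(j)]$ to behave like $\E[\sw(f)]/\E[\max_j \sw(j)]$ and collapsing the gap (indeed, this is exactly the positive regime identified in Section~\ref{sec:edmr}); too small and the ``dark horse'' scoring advantage for $B$ is not realizable. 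Negotiating this three-way tension between $m$, $n$, and $p$, and handling the resulting combinatorial case analysis cleanly---in particular, carefully treating the convention for $\sw(f,\bu)/\max_j\sw(j,\bu)$ in realizations where all welfares vanish---is the delicate step of the proof.
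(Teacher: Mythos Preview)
Your plan has a genuine gap at step (a): with the two-point distribution $\Pr[X=1]=p=\Theta(1/m^2)$, $\Pr[X=\epsilon]=1-p$, and $n=\Theta(m)$ voters, you \emph{cannot} obtain $\E[\dist(B,\bsigma)]=O(1/m)$. The total number of draws is $nm=\Theta(m^2)$, so the expected number of ``spike'' draws is $nmp=\Theta(1)$, and hence $\Pr(\text{no spike at all})=(1-p)^{nm}\to e^{-\Theta(1)}$, a positive constant. On this event every utility equals $\epsilon$, every alternative has welfare $n\epsilon$, and $\dist(B,\bsigma)=1$. Therefore $\E[\dist(B,\bsigma)]\ge \Pr(\text{no spike})=\Omega(1)$, which kills the $\Omega(m)$ separation. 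You correctly flag the three-way tension in your last paragraph, but with only two atoms it is not merely ``delicate''---it is unresolvable: the very rarity of spikes that forces $s_1/s_2=\Omega(m)$ (so that EWMR behaves like plurality) simultaneously guarantees that the no-spike tie event has constant probability. Taking $n$ larger does not rescue the argument either; once $nmp\to\infty$ all the relevant spike counts concentrate and the ratio $\sw(B)/\max_j\sw(j)$ becomes $\Theta(1)$, which is precisely the positive regime of Theorem~\ref{thm:pos}.

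The paper's construction resolves this by using a \emph{three}-point distribution, with values (after rescaling) $1$, $\Theta(n^{-3}m^{-1})$, and $\Theta(n^{-4}m^{-1})$. The top atom has probability $\Theta(1/(n^2m))$ and exists solely to inflate $\E[X_{(m)}]$ far above $n\cdot\E[X_{(m-1)}]$, making EWMR plurality; but it essentially never occurs in any realization (probability $O(1/n)$). The realized distortion gap is then driven entirely by the middle and bottom atoms: the ``broadly liked'' alternative, ranked second by everyone, reliably collects the middle value from each voter, while the plurality winner, ranked last by a $1-\Theta(1/m)$ fraction of voters, collects the middle value only from a $\Theta(1/m)$ fraction of them. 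Decoupling ``what makes EWMR act like plurality'' from ``what drives realized welfare'' is the missing idea in your plan.
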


It is worth noting that the rule that chooses one of the $m$ alternatives uniformly at random has expected distortion no more than a factor $O(m)$ wose than that of an EDMR. Therefore, this theorem shows that for some distributions and preference profiles, an EWMR performs almost as poorly as the rule that chooses an alternative uniformly at random.

One distinctive trait of the distribution used to prove Theorem~\ref{thm:neg} is that both the values that the distribution can take and their respective probabilities depend on $n$ and $m$. This is perhaps unnatural, as we might expect that as the numbers of voters and alternatives increase, the i.i.d.\ distribution from which each voter draws utilities is fixed. In fact, with such an assumption that the distribution $\mathcal{D}$ is independent of $n$ and $m$, an EWMR will be an $\epsilon$-approximation for an EDMR for sufficiently large $n$ (regardless of the value of $m$). Furthermore, with the additional assumptions that the distribution is continuous and that the derivative of the CDF is bounded from below, an EWMR will also be an $\epsilon$-approximation for an EDMR for sufficiently large $m$ (regardless of the value of $n$). Therefore, asymptotically in both $n$ and $m$, an EWMR does perform well by the metric of expected distortion; the proof is relegated to Appendix~\ref{app:proof_of_pos}.

\begin{thm}
\label{thm:pos}
    Assume $\mathcal{D}$ is supported on $[0,1]$ and has constant (relative to $n,m$) mean $\mu$ and variance $s^2$. Then for every $\epsilon > 0$, there exists $n_0$ such that if $n \ge n_0$, then $\mathbb{E}\bigl[\dist(\ewmr, \bsigma)\bigr] \ge (1 - \epsilon)\mathbb{E}\bigl[\dist(\edmr, \bsigma)\bigr]$. 

    Furthermore, if $\mathcal{D}$ has a continuously differentiable CDF $F$ satisfying $\frac{dF}{dx} > 0$, then there exists $m_0$ such that the same result holds if $m \ge m_0$.
\end{thm}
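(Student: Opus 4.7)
Since $\dist(f, \bsigma) \in [0, 1]$ for every rule $f$, we have $\mathbb{E}[\dist(\edmr, \bsigma)] \leq 1$, so it suffices to show $\mathbb{E}[\dist(\ewmr, \bsigma)] \geq 1 - \epsilon$. Let $W_j := \sw(j, \bu)$, $\mu_j := \mathbb{E}[W_j \mid \bsigma]$, and $j^* := \ewmr(\bsigma) = \argmax_j \mu_j$. A simple averaging argument shows $\mu_{j^*} \geq n\mu$: summing, $\sum_j \mu_j = \sum_i \mathbb{E}\bigl[\sum_j u_{ij}\bigr] = n m \mu$, so the maximum is at least the average. On any event where $\max_j |W_j - \mu_j| \leq t$,
\[
    \frac{W_{j^*}}{\max_j W_j} \;\geq\; \frac{\mu_{j^*} - t}{\mu_{j^*} + t} \;\geq\; 1 - \frac{2t}{n\mu}.
\]
If this event has probability at least $1 - p$, then $\mathbb{E}[\dist(\ewmr, \bsigma)] \geq (1 - 2t/(n\mu))(1-p)$, and the problem reduces to controlling $\Pr[\max_j |W_j - \mu_j| > t]$ for a suitable $t$.

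For the first claim (large $n$), the variables $(u_{1j}, \ldots, u_{nj})$ are independent with values in $[0,1]$ conditional on $\bsigma$, so Hoeffding gives $\Pr[|W_j - \mu_j| \geq t] \leq 2\exp(-2t^2/n)$. Taking $t = (\epsilon/4) n \mu$ and union-bounding over $m$ alternatives yields $p \leq 2m \exp(-\epsilon^2 n \mu^2/8)$, which is below $\epsilon/2$ for $n$ sufficiently large (depending on $\epsilon$, $m$, $\mu$), delivering $\mathbb{E}[\dist(\ewmr, \bsigma)] \geq 1 - \epsilon$.

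The second claim (large $m$) is more delicate: $\Var[W_j] = O(n/m)$ under the Lipschitz hypothesis, but the union bound over $m$ alternatives in Chebyshev exactly cancels the $1/m$ gain, so pointwise concentration is insufficient. Instead we bound $\max_j |W_j - \mu_j|$ jointly across alternatives. Since $F$ is continuously differentiable with $F' > 0$ on the compact $[0,1]$, $c := \min_x F'(x) > 0$, and $F^{-1}$ is Lipschitz with constant $1/c$. Writing $u_{ij} = F^{-1}\bigl(U^{(i)}_{(m - P_{ij} + 1)}\bigr)$, where $P_{ij}$ is the position of $j$ in $\bsigma_i$ and $U^{(i)}_{(k)}$ is the $k$-th order statistic of $m$ i.i.d.\ uniforms drawn for voter $i$, the Dvoretzky--Kiefer--Wolfowitz inequality yields $\mathbb{E}\bigl[\max_k |U^{(i)}_{(k)} - \mathbb{E}[U^{(i)}_{(k)}]|\bigr] = O(1/\sqrt m)$. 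Composing with $F^{-1}$ gives $\mathbb{E}[\max_j |u_{ij} - \tilde u_{ij}|] = O(1/(c\sqrt m))$, where $\tilde u_{ij} := \mathbb{E}[u_{ij} \mid \bsigma_i]$. Applying $\max_j |W_j - \mu_j| \leq \sum_i \max_j |u_{ij} - \tilde u_{ij}|$ yields $\mathbb{E}[\max_j |W_j - \mu_j|] = O(n/(c\sqrt m))$. Choosing $t = n \cdot m^{-1/4}$ and applying Markov, both $2t/(n\mu)$ and $p$ are $O(m^{-1/4})$, hence both below $\epsilon/2$ for $m$ sufficiently large.

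The main obstacle is the second claim: routine pointwise concentration (Chebyshev, Hoeffding, Bernstein) applied to each $W_j$ does not improve with $m$ once one union-bounds over $m$ alternatives. The remedy is joint concentration of order statistics via empirical process theory (DKW), combined with the Lipschitz regularity $(F^{-1})' \leq 1/c$ afforded by the hypothesis $F' > 0$ on $[0,1]$. This controls all $m$ deviations within each voter simultaneously, bypassing the fatal union bound.
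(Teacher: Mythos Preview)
Your argument is correct and proves the theorem, but the route differs from the paper's in an instructive way.

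You establish the stronger conclusion $\mathbb{E}[\dist(\ewmr,\bsigma)]\ge 1-\epsilon$, which forces you to control $\max_j W_j$ and hence to union-bound over all $m$ alternatives. The paper instead fixes an arbitrary alternative $j$ and shows $\mathbb{E}[\dist(j,\bsigma)]\le (1+\epsilon)\,\mathbb{E}[\dist(\ewmr,\bsigma)]$ via an event involving only \emph{two} tails (one for $W_j$, one for $W_{\ewmr(\bsigma)}$); since the resulting bound is uniform in $j$, it applies in particular to $j=\edmr(\bsigma)$, and no factor of $m$ ever enters the failure probability. One concrete consequence: the paper obtains $n_0$ depending only on $(\epsilon,\mu)$, whereas your $n_0$ depends on $m$ through the $2m\exp(-\epsilon^2 n\mu^2/8)$ term. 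The theorem as literally stated tolerates this, but the paper's surrounding discussion advertises the result as holding ``regardless of the value of $m$,'' which your version does not deliver.

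For the large-$m$ claim, your diagnosis that ``pointwise concentration plus union bound fails'' is accurate for \emph{your} strategy but not an inherent obstacle: the paper again applies Chebyshev to a single $W_j$ at a time (using that $\Var(X_{(k)})\to 0$ uniformly in $k$ as $m\to\infty$, via a Glivenko--Cantelli/uniform-continuity lemma) and compares only to $W_{\ewmr(\bsigma)}$, so no union bound is needed. Your DKW route---controlling all uniform order statistics simultaneously and pushing through the $1/c$-Lipschitz $F^{-1}$---is a genuinely different and somewhat heavier argument, but it works and likewise gives $m_0$ independent of $n$. One detail worth tightening: ``composing with $F^{-1}$'' does not directly yield $|F^{-1}(U_{(k)})-\mathbb{E}[F^{-1}(U_{(k)})]|\le (1/c)\,|U_{(k)}-\mathbb{E}[U_{(k)}]|$, since $\mathbb{E}[F^{-1}(U_{(k)})]\ne F^{-1}(\mathbb{E}[U_{(k)}])$; the centering bias is an extra $O(1/(c\sqrt m))$ uniformly in $k$, which should be stated.
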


Note that it is not a priori obvious that either large $n$ or large $m$ are sufficient conditions in Theorem~\ref{thm:pos}, and in fact the proofs rely on different forms of concentration. For large $n$, the key to the proof is that sums of i.i.d.\ bounded random variables concentrate by Hoeffding's inequality. For large $m$, the key idea is that, for continuous distributions, the variance of the order statistics goes to $0$ as the number of samples $m$ grows.
\section{Distribution-Independent Expected-Welfare Maximization}
\label{sec:ewmr}

Theorem~\ref{thm:pos} gives a computationally simple way to calculate an ``almost EDMR'' for fixed $\mu$ and $s^2$ by simply calculating an EWMR, which---as noted in Section~\ref{sec:model}---can be done by using a scoring rule based on the expected order statistics of $\mathcal{D}$. This is satisfying because it has a practical form and performs well for both metrics, namely expected distortion and expected welfare. A fundamental flaw with such a rule, however, is that it relies on knowing the entire underlying distribution $\mathcal{D}$, which might not be realistic in practice. 
Our strategy for approximating an EDMR, therefore, is to seek a \emph{distribution-independent} approximation to an EWMR, that is, a rule that simultaneously provides guarantees for every distribution (at least under some assumptions). To this end, we leverage the following observation. 
\begin{obs}
\label{obs}
If a voting rule $f$ is a $\beta$-EWMR, and an EWMR is a $\gamma$-EDMR, then $f$ is a $\beta\gamma$-EDMR.
\end{obs}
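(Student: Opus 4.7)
The plan is to establish the claimed inequality $\E[\dist(f, \bsigma)] \ge \beta\gamma \E[\dist(\edmr, \bsigma)]$ for every preference profile $\bsigma$ by a two-step chain. First, I would show the intermediate bound $\E[\dist(f, \bsigma)] \ge \beta \E[\dist(\ewmr, \bsigma)]$ using the $\beta$-EWMR property of $f$. Second, I would plug in the given hypothesis that EWMR is a $\gamma$-EDMR, which directly supplies $\E[\dist(\ewmr, \bsigma)] \ge \gamma \E[\dist(\edmr, \bsigma)]$. Multiplying the two inequalities yields the desired bound.

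The substantive part is the first inequality. The key structural observation is that, for any deterministic rule $g$ and a fixed $\bsigma$, we can write
\[
    \dist(g, \bsigma) = \frac{\sw(g(\bsigma), \bu)}{\max_{j\in A}\sw(j, \bu)},
\]
and the denominator depends only on the random utility profile $\bu$, not on $g$. Thus the distortion random variables for $f$ and for $\ewmr$ share an identical denominator; they differ only in their numerators $\sw(f(\bsigma), \bu)$ versus $\sw(\ewmr(\bsigma), \bu)$, for which $\beta$-EWMR provides the expected multiplicative comparison $\E[\sw(f(\bsigma), \bu)] \ge \beta\cdot \E[\sw(\ewmr(\bsigma), \bu)]$. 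The plan is to propagate this expected inequality through the shared denominator into the expected ratios.

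The main obstacle is precisely that, with a random denominator $M=\max_{j\in A}\sw(j,\bu)$, the implication $\E[A] \ge \beta \E[B] \Rightarrow \E[A/M] \ge \beta \E[B/M]$ need not hold once $M$ is correlated with $A$ and $B$. I expect to address this either (i) by conditioning on the value of $M$ and showing that, under the conditional law of $\bu$ given $\bsigma$ in the i.i.d.\ model, the $\beta$-EWMR inequality for the numerators survives at each level set of $M$ (e.g., via a coupling over which voter is responsible for the maximum); or (ii) by invoking a correlation inequality exploiting that $\sw(g(\bsigma),\bu)$ and $M$ are both monotone functions of $\bu$, so that their joint structure lets the $\beta$ factor pass to the ratio. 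Once this step is in hand, composing with the $\gamma$-EDMR hypothesis for $\ewmr$ immediately yields $\E[\dist(f, \bsigma)] \ge \beta\gamma \E[\dist(\edmr, \bsigma)]$, completing the argument.
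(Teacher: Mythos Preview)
Your diagnosis of the difficulty is exactly right: the crux is passing from $\E[\SW(f,\bsigma)]\ge\beta\,\E[\SW(\ewmr,\bsigma)]$ to $\E[\dist(f,\bsigma)]\ge\beta\,\E[\dist(\ewmr,\bsigma)]$, and the shared random denominator $M=\max_j \sw(j,\bu)$ blocks the na\"ive step. However, neither of your proposed fixes closes this gap. For (i), conditioning on $M$ does not preserve the $\beta$-EWMR inequality: the hypothesis only controls the \emph{unconditional} expected welfares, and there is no reason the conditional means $\E[\sw(f(\bsigma),\bu)\mid M]$ and $\E[\sw(\ewmr(\bsigma),\bu)\mid M]$ should stand in ratio at least $\beta$ level-by-level (the construction behind Theorem~\ref{thm:neg} shows precisely how an alternative can have large expected welfare concentrated on the event where $M$ is large, so that its expected ratio is tiny). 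For (ii), both $\sw(f(\bsigma),\bu)$ and $\sw(\ewmr(\bsigma),\bu)$ are monotone in $\bu$, so any association/FKG-type inequality applies symmetrically to both ratios and cannot by itself manufacture the asymmetric factor $\beta$ you need.

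The paper states this as an observation and gives no proof; as a black-box composition lemma from the bare definitions, the implication does not appear to hold. What actually makes the paper's downstream corollary go through is that its $\gamma=(1-\epsilon)$ comes from the \emph{concentration} proof of Theorem~\ref{thm:pos}, and that proof can be rerun with $f$ in place of $\ewmr$. Concretely, add to the good event $\mathcal{E}$ there the Hoeffding event $\SW(f,\bsigma)\ge\E[\SW(f,\bsigma)]-\epsilon n\mu\ge(\beta-\epsilon)\,\E[\SW(\ewmr,\bsigma)]$ (valid since $\E[\SW(f,\bsigma)]\ge\beta\,\E[\SW(\ewmr,\bsigma)]\ge\beta n\mu$). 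On the enlarged event one has $M\le(1+\epsilon)\,\E[\SW(\ewmr,\bsigma)]$ and hence $\dist(f,\bsigma)\ge\frac{\beta-\epsilon}{1+\epsilon}\ge\frac{\beta-\epsilon}{1+\epsilon}\,\dist(\edmr,\bsigma)$ pointwise; handling the complement exactly as in that proof yields $\E[\dist(f,\bsigma)]\ge(\beta-O(\epsilon))\,\E[\dist(\edmr,\bsigma)]$. So the right move is not to prove the observation abstractly, but to open up Theorem~\ref{thm:pos} and carry $f$ through its argument.
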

In particular, for any $\epsilon>0$, Theorem~\ref{thm:pos} yields $\gamma=(1-\epsilon)$ for sufficiently large $n$ or $m$. Our goal in this section, therefore, is to seek a rule $f$ for which we can establish a value of $\beta$ in a distribution-independent way.

Our first result is negative: It is impossible for a single voting rule to achieve high expected social welfare for all distributions supported on $[0,1]$.

\begin{thm}
\label{thm:neg2}
    For every $0 < \alpha \le 1$ there exist $n,m,\bsigma$ and two distributions $\mathcal{D}_1, \mathcal{D}_2$ that are both supported on $[0,1]$ such that no alternative is an $\alpha$-EWMR for both $D_1$ and $D_2$ under $\bsigma$.
\end{thm}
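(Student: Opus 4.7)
The plan is to exhibit two distributions whose EWMR-induced scoring rules differ dramatically, together with a preference profile on which the two rules' optimal alternatives pull apart by wide welfare margins. Concretely, I take $\mathcal{D}_1 = \mathrm{Bern}(\delta)$ for a tiny $\delta > 0$ and $\mathcal{D}_2 = \mathrm{Bern}(1/2)$. Under $\mathcal{D}_1$, $\mathbb{E}[X_{(m-j+1)}] = P(\mathrm{Bin}(m,\delta) \ge j) = \Theta(\binom{m}{j}\delta^j)$, so as $\delta \to 0$ the EWMR scoring vector approaches plurality $(1,0,\ldots,0)$ up to scaling. Under $\mathcal{D}_2$, $\mathbb{E}[X_{(m-j+1)}] = P(\mathrm{Bin}(m,1/2) \ge j)$ transitions from $\approx 1$ for $j \ll m/2$ to $\approx 0$ for $j \gg m/2$, so for large $m$ the EWMR approximates a ``top-half'' scoring rule $(1,\ldots,1,0,\ldots,0)$.

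For the profile, I would consider $m$ alternatives $\{A,B,C_1,\ldots,C_{m-2}\}$ together with (i) $k$ voters ranking $A \succ B \succ C_1 \succ \cdots \succ C_{m-2}$, and (ii) for each $i \in \{1,\ldots,m-2\}$, a single ``spread'' voter ranking $C_i \succ B \succ (\text{other } C_j\text{'s in any fixed order}) \succ A$, so that $A$ is last in every spread voter's ranking. Under $\mathcal{D}_1$, alternative $A$ is ranked first by $k$ voters while every other alternative is ranked first by at most one, so for $\delta$ sufficiently small the expected welfare of any $X \neq A$ is at most $(1+o(1))/k$ times that of $A$. Under $\mathcal{D}_2$, alternative $B$ sits at position $2$ for every voter (welfare $\approx n := k+m-2$), while $A$ is in the top half only for the $k$ voters from group (i) (welfare $\approx k$); since the maximum welfare lies in $[n(1-o(1)), n]$, $A$'s welfare is at most $(k/n)(1+o(1))$ times the maximum.

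Choosing $k > 2/\alpha$ makes $1/k < \alpha$, and taking $m$ large enough in terms of $k$ and $\alpha$ (e.g., $m > 2 + k(1-\alpha)/\alpha$) makes $k/n < \alpha$; both hold simultaneously for appropriate $k, m$ depending only on $\alpha$, provided $\delta$ is then chosen small enough (e.g., $\delta \ll 1/(nm)^2$) to absorb the $o(1)$ slack. The conclusion is then immediate: whichever alternative is selected on $\bsigma$, it is either $A$, which fails to be an $\alpha$-EWMR under $\mathcal{D}_2$, or some $X \neq A$, which fails to be an $\alpha$-EWMR under $\mathcal{D}_1$.

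The main obstacle will be the quantitative bookkeeping. Under $\mathcal{D}_1$, I need to verify that the contributions to each alternative's welfare from ranks below first (involving higher powers of $\delta$) remain negligible after summing across all $n$ voters, so that the plurality-like leading term truly dominates. Under $\mathcal{D}_2$, I need to confirm that the $O(2^{-m})$ corrections to the top and bottom expected order statistics are negligible, and that even though some $C_i$'s may also enjoy welfare close to $n$ (being in the top half often for small $i$), the maximum welfare is still $\Theta(n)$ so that $A$'s welfare ratio to the optimum really is $\Theta(k/n)$. These are routine estimates using standard Bernoulli tail bounds once the generous slack built into the parameters is fixed.
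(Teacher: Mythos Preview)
Your proposal is correct and takes a genuinely different route from the paper's proof. Both arguments use a pair of Bernoulli distributions, but the constructions diverge. The paper takes $\mathcal{D}_1=\bern(\mu-\delta)$ and $\mathcal{D}_2=\bern(\mu)$ with $\mu=\alpha/4$ and $\delta=\alpha^2/4$ \emph{close together}, and places a single ``special'' alternative at the fixed intermediate rank $(\mu-\delta)m$ for every voter (the remaining rankings spread evenly among the other alternatives). Under $\mathcal{D}_2$ that alternative sits just above the Bernoulli threshold and has welfare $\approx n$ while all others are $\le (\mu+2\delta)n$; under $\mathcal{D}_1$ it sits just below the threshold and has welfare $\le n\delta$ while some alternative has welfare $\ge n\mu$. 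Your approach instead pushes the two Bernoulli parameters to the extremes ($\delta\to 0$ versus $1/2$), so that the induced EWMRs become, respectively, plurality-like and top-half-approval-like, and then pits a plurality winner $A$ (many firsts, many lasts) against a consensus runner-up $B$ (always second).

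What each approach buys: the paper's nearby-parameter construction immediately yields the additional fact (used later in the paper) that an EWMR for one distribution can fail badly on a distribution arbitrarily close in total variation distance---something your far-apart Bernoullis do not deliver. On the other hand, your construction is arguably more transparent: the two limiting scoring rules (plurality vs.\ top-half) are familiar, and the profile exhibiting their disagreement is the natural one. The quantitative bookkeeping you flag is indeed routine in both proofs; in yours the key estimates are $\E[X_{(m-j+1)}]=\Pr(\bin(m,\delta)\ge j)=O((m\delta)^j)$, which makes all sub-first contributions negligible once $\delta\ll 1/(nm)$, and $\E[X_{(m-1)}]=1-(m+1)2^{-m}$ under $\bern(1/2)$, which makes the $o(1)$ slack under $\mathcal{D}_2$ exponentially small in $m$.
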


We prove \cref{thm:neg2} in Appendix~\ref{app:proof_of_neg2}. We will circumvent Theorem~\ref{thm:neg2} in two natural ways. Our first way will do so by restricting the class of distributions. Specifically, we first focus on the class of \emph{symmetric} distributions supported on $[0,1]$, that is, distributions $\mathcal{D}$ such that $\Pr_{x\sim \mathcal{D}}\bigl(x\leq\frac{1}{2}-\epsilon\bigr)=\Pr_{x\sim \mathcal{D}}\bigl(x\geq \frac{1}{2}+\epsilon\bigr)$ for all $\epsilon\in\bigl[0,\frac{1}{2}\bigr]$.

The fact that an EWMR is given by a scoring rule~\citep{BCHL+15}, albeit one that depends on the distribution, suggests that defining a scoring rule in a distribution-independent way may be a good approach. Consider the following scoring rule, whereby voters give their approval to the top half of their ranking. 

\begin{definition}
    \emph{Top-half approval} is the scoring rule that assigns a score of $1$ to alternatives ranked in the highest $\bigl\lceil\frac{m}{2}\bigr\rceil$ positions and $0$ to all other alternatives. 
\end{definition}

It turns out that top-half approval gives good guarantees for the subclass of symmetric distributions, as we prove in Appendix~\ref{app:proof_of_pos2}.

\begin{thm}\label{thm:pos2}
    For all $n,m$, top-half approval is a $\frac{1}{3}$-EWMR for all symmetric distributions $\mathcal{D}$.
\end{thm}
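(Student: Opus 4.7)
The plan is to sandwich, for every alternative $j$, its expected social welfare $\mathbb{E}[\sw(j,\bu)]$ between two linear functions of its top-half approval score $t_j := |\{i : k_{ij}\le\lceil m/2\rceil\}|$, and then conclude with a short double-counting plus case split. By the EWMR characterization from \citet{BCHL+15}, $\mathbb{E}[\sw(j,\bu)] = \sum_{i=1}^n a_{k_{ij}}$, where $k_{ij}$ is the position of $j$ in voter $i$'s ranking and $a_k := \mathbb{E}[X_{(m-k+1)}]$ is the expected order statistic assigned to position $k$. The structural input I would exploit is that symmetry of $\mathcal{D}$ (which gives $X \stackrel{d}{=} 1-X$ and hence $X_{(k)} \stackrel{d}{=} 1 - X_{(m-k+1)}$) yields the pairing identity $a_k + a_{m-k+1} = 1$ for all $k$; combined with monotonicity $a_1 \ge \cdots \ge a_m$, this forces $a_k \ge \tfrac{1}{2}$ on the top half $k \le \lceil m/2\rceil$ and $a_k \le \tfrac{1}{2}$ on the bottom half.

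Next I would rewrite $\mathbb{E}[\sw(j,\bu)] = \tfrac{n}{2} + \sum_i (a_{k_{ij}} - \tfrac{1}{2})$ and apply the coarse per-position bounds $a_{k_{ij}} - \tfrac{1}{2} \in [0,\tfrac{1}{2}]$ on top-half positions and $a_{k_{ij}} - \tfrac{1}{2} \in [-\tfrac{1}{2},0]$ on bottom-half positions, obtaining the sandwich
\[
    \tfrac{t_j}{2} \; \le \; \mathbb{E}[\sw(j,\bu)] \; \le \; \tfrac{n+t_j}{2}
\]
for every $j$. Letting $T$ be the top-half approval winner and $W$ an EWMR winner, $t_T \ge t_W$ by definition of $T$, and the double count $\sum_j t_j = n\lceil m/2\rceil \ge nm/2$ forces $t_T \ge n/2$ as well. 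Thus $\mathbb{E}[\sw(T,\bu)] \ge \tfrac{1}{2}\max(t_W, n/2)$ while $\mathbb{E}[\sw(W,\bu)] \le \tfrac{n+t_W}{2}$, and the ratio $\max(t_W, n/2)/(n+t_W)$ is minimized at $t_W = n/2$ where it equals exactly $\tfrac{1}{3}$ (verified by splitting into $t_W \le n/2$, where the numerator is constant, and $t_W \ge n/2$, where $t/(n+t)$ is increasing).

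The only nontrivial ingredient is the pairing identity $a_k + a_{m-k+1}=1$ provided by symmetry; the remainder is two per-position inequalities and a one-line case analysis, so I do not expect a real obstacle. The thing to watch is the tightness of the constant: the sandwich saturates at $t_W = n/2$ with $\mathbb{E}[\sw(T,\bu)] = n/4$ and $\mathbb{E}[\sw(W,\bu)] = 3n/4$, and this saturation is driven by the very coarse extremal bounds $a_k \in \{0,1\}$ at the top and bottom positions. Any improvement beyond $\tfrac{1}{3}$ would therefore need to draw on finer information about the order statistics (for instance, that extremes do not hug $0$ and $1$), which symmetry alone does not supply.
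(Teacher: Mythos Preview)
Your proposal is correct and takes essentially the same approach as the paper: both arguments use symmetry to show that $a_k \ge \tfrac{1}{2}$ on top-half positions and $a_k \le \tfrac{1}{2}$ on bottom-half positions, derive the identical sandwich $\tfrac{t_j}{2} \le \mathbb{E}[\sw(j,\bu)] \le \tfrac{n+t_j}{2}$, use the double count to get $t_T \ge n/2$, and conclude with the ratio $\tfrac{z}{n+z} \ge \tfrac{1}{3}$. Your case split on $t_W$ is slightly more elaborate than necessary---since $t_T \ge t_W$ you can go directly via $\tfrac{t_T}{n+t_W} \ge \tfrac{t_T}{n+t_T} \ge \tfrac{1}{3}$---but this is a cosmetic difference.
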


Given the attractiveness of top-half approval, a natural follow-up question is whether another distribution-independent rule can give an even better EWMR approximation for every symmetric distribution. While we cannot preclude this possibility altogether, we do give an upper bound (negative result) on the feasible approximation in Appendix~\ref{app:proof_of_neg3}, which holds not just for scoring rules but for any voting rule. 

\begin{thm}\label{thm:neg3}
    No distribution-independent voting rule is an $\alpha$-EWMR for all symmetric distributions $\mathcal{D}$, for any $\alpha>\sqrt{\frac{1}{3}}$.
\end{thm}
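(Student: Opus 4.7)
My plan is to apply a Yao-style minimax argument. For any $\epsilon>0$, I aim to construct $n$, $m$, a preference profile $\bsigma$, and a probability measure $P$ over symmetric distributions on $[0,1]$ such that for every alternative $a\in A$,
\[
    \mathbb{E}_{\mathcal{D}\sim P}\!\left[\frac{\mathbb{E}[\SW(a,\bsigma)]}{\mathbb{E}[\SW(\ewmr,\bsigma)]}\right]\le\sqrt{\tfrac{1}{3}}+\epsilon.
\]
Any distribution-independent rule $f$ must commit to a single alternative $f(\bsigma)\in A$ on input $\bsigma$, and applying this inequality to $a=f(\bsigma)$ then produces some $\mathcal{D}\in\mathrm{supp}(P)$ on which $f$'s approximation ratio is at most $\sqrt{1/3}+\epsilon$, which would prove the theorem.

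For the family of distributions, I would use a one-parameter family $\{\mathcal{D}_t\}_{t\in(0,1]}$ of symmetric distributions---for example, $\mathcal{D}_t$ placing mass $t/2$ on each of $0$ and $1$ and mass $1-t$ on $\tfrac12$. Each is clearly symmetric around $\tfrac12$. By the scoring-rule characterization of~\citet{BCHL+15}, its score vector $s^{(t)}_k=\mathbb{E}[X_{(m-k+1)}]$ approximates a step function: $s^{(t)}_k\approx 1$ on the top $\approx tm/2$ positions, $\approx\tfrac12$ in the middle, and $\approx 0$ on the bottom $\approx tm/2$ positions, with transition width of order $\sqrt{tm}$. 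Varying $t$ moves the step. I would then design $\bsigma$ so that distinct alternatives have rank-count vectors resembling indicators for distinct thresholds, so that each alternative is near-optimal only for $\mathcal{D}_t$ with $t$ in a narrow band.

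The constant $\sqrt{1/3}$ naturally arises from a Cauchy--Schwarz-type ($L^1$-vs-$L^2$) balance: $\sqrt{1/3}=\mathbb{E}[X]/\sqrt{\mathbb{E}[X^2]}$ for the indicator $X=\mathbf{1}\{U\le\tfrac13\}$ with $U\sim\mathrm{Unif}[0,1]$. Under the construction above, the denominator $\mathbb{E}[\SW(\ewmr,\bsigma)]$ under $\mathcal{D}_t$ behaves like an $L^2$ quantity in $t$ (only well-aligned alternatives achieve large welfare), while an individual alternative's own welfare behaves like an $L^1$ quantity (alignment occurs only for narrow $t$); averaging against $P$ then yields the ratio $\sqrt{1/3}$.

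The main obstacle is calibrating the profile, family, and $P$ tightly enough that the Cauchy--Schwarz bound is saturated simultaneously for every alternative---this is what yields the sharp constant $\sqrt{1/3}$ rather than a weaker bound. I also expect the construction to require the asymptotic regime $n,m\to\infty$ to drive the error terms (from the $O(\sqrt{tm})$ transition widths and from discretization of $t$) below $\epsilon$; the bulk of the work is in verifying that the errors from these approximations do not inflate the bound beyond $\sqrt{1/3}+\epsilon$.
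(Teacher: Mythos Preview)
The paper's proof is far more direct than your plan. It uses exactly two symmetric distributions---$P_1=\mathrm{Unif}(0,1)$ and $P_2=\mathrm{Bern}(\tfrac12)$---together with a single profile having two distinguished alternatives. Alternative~$1$ sits just above the median rank for a $p$-fraction of voters (and last otherwise); alternative~$2$ is ranked first by a $q$-fraction (and just below the median otherwise); the remaining alternatives are filler. Under the uniform distribution, rank contributes linearly, so alternative~$2$ dominates~$1$; under Bernoulli, only which side of the median an alternative occupies matters, so alternative~$1$ dominates everything else. Any rule either selects alternative~$1$ (attaining ratio $\approx p/(1+q)$ under $P_1$) or does not (attaining ratio $\approx q/p$ under $P_2$). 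Balancing these two expressions yields $q=\tfrac12$, $p=\sqrt{3}/2$, and both ratios equal $\sqrt{1/3}$. No Yao argument, no continuous family, no Cauchy--Schwarz.

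Your proposal has a genuine gap at precisely the point you identify as ``the main obstacle'': nothing in your sketch actually produces the constant $\sqrt{1/3}$. The identity $\mathbb{E}[X]/\sqrt{\mathbb{E}[X^2]}=\sqrt{1/3}$ for $X=\mathbf{1}\{U\le\tfrac13\}$ is not a saturated Cauchy--Schwarz bound; for an indicator of probability $p$ that ratio is simply $\sqrt{p}$, so writing $\sqrt{1/3}$ amounts to choosing $p=\tfrac13$ after the fact, which is circular. You have specified neither the profile $\bsigma$ nor the prior $P$ over $t$, and the claim that the EWMR welfare is ``$L^2$-like'' while a fixed alternative's welfare is ``$L^1$-like'' is not tied to any concrete computation in your family $\{\mathcal{D}_t\}$. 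Without those pieces there is no reason to expect your averaging to give $\sqrt{1/3}$ rather than some other (or no particular) constant. The paper's two-distribution dichotomy both avoids this machinery and makes the origin of the constant transparent: it is the value of $\min_{1/2\le q<p\le 1}\max\{\,p/(1+q),\ q/p\,\}$.
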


\subsection{Binomial Voting}

Top-half approval clearly leaves some information on the table, as it ignores the ranking of alternatives within each half. Let us refine it, then, by considering a variant that accounts for this additional information. As we will see, this refinement will allow us to dispense with the assumption of a symmetric distribution. 

\begin{definition}
    \emph{Binomial voting} is the scoring rule that assigns a score of $\sum_{\ell=k}^m \binom{m}{\ell}$ to an alternative that is ranked $k$.
\end{definition}

Binomial voting can be seen as a ``smoother'' version of top-half approval. Indeed, because the binomial coefficients are largest in the vicinity of $\frac{m}{2}$, the top, say, $\frac{m}{2}\!-\!\sqrt{m}$ positions have scores close to the maximum, and the bottom $\frac{m}{2}\!-\!\sqrt{m}$ positions have scores close to the minimum, with the transition occurring rapidly in between. 

Our main result for binomial voting is the following, which we prove in in Appendix~\ref{app:proof_of_bin}. This result circumvents \cref{thm:neg2} not by restricting the class of distributions, but rather by giving a guarantee that depends on the median of the distribution.

\begin{thm}\label{thm:bin}
Let $\mathcal{D}$ be a distribution supported on $[0,1]$ whose largest median is $\nu$, i.e., $\nu=\sup \bigl\{y ~\big|~ \Pr_{x\sim \mathcal{D}}(x\leq y)\leq \frac{1}{2}\bigr\}$. Then binomial voting is a $\frac{\nu}{2}$-EWMR for $\mathcal{D}$.
\end{thm}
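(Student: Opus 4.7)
The strategy is to compare binomial voting against a two-point auxiliary distribution $\mathcal{D}^{\star} := \tfrac{1}{2}\delta_{0} + \tfrac{1}{2}\delta_{\nu}$, for which binomial voting is \emph{exactly} welfare-optimal. The first step would be to verify that $\mathcal{D}$ stochastically dominates $\mathcal{D}^{\star}$: by the definition of the largest median, $\Pr_{x\sim\mathcal{D}}[x\le t]\le \tfrac{1}{2}$ for every $t<\nu$, matching $\mathcal{D}^{\star}$, while for $t\ge\nu$ we have $\Pr_{x\sim\mathcal{D}^{\star}}[x\le t]=1$ trivially. A pointwise monotone coupling of the $nm$ i.i.d.\ draws induces a coupling of every order statistic, and since a rule $f$ depends only on $\bsigma$ and hence selects the same alternative on both sides, this gives
\[
    \mathbb{E}_{\mathcal{D}}[\SW(f, \bsigma)] \;\ge\; \mathbb{E}_{\mathcal{D}^{\star}}[\SW(f, \bsigma)] \qquad \text{for every rule } f \text{ and profile } \bsigma.
\]

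Next, I would note that binomial voting is the EWMR under $\mathcal{D}^{\star}$. The $k$-th largest of $m$ i.i.d.\ draws from $\mathcal{D}^{\star}$ equals $\nu$ iff at least $k$ of them equal $\nu$, so its expectation is $\nu\cdot 2^{-m}\sum_{\ell=k}^{m}\binom{m}{\ell}$---a positive rescaling of the binomial-voting score at rank $k$. Therefore binomial voting picks the alternative $j^{\mathrm{bin}}$ that maximizes $\mathbb{E}_{\mathcal{D}^{\star}}[\SW(\cdot,\bsigma)]$, and in particular $\mathbb{E}_{\mathcal{D}^{\star}}[\SW(j^{\mathrm{bin}},\bsigma)]$ is at least the average of $\mathbb{E}_{\mathcal{D}^{\star}}[\SW(j,\bsigma)]$ over $j$. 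To evaluate that average, use the fact that for each voter $i$ the ranks $\{k_{i1},\dots,k_{im}\}$ form a permutation of $\{1,\dots,m\}$, so
\[
    \sum_{j=1}^m \Pr\bigl[\bin(m,\tfrac{1}{2})\ge k_{ij}\bigr] \;=\; \mathbb{E}\bigl[\bin(m,\tfrac{1}{2})\bigr] \;=\; \tfrac{m}{2},
\]
which after summing over voters and dividing by $m$ yields $\tfrac{1}{m}\sum_{j}\mathbb{E}_{\mathcal{D}^{\star}}[\SW(j,\bsigma)] = n\nu/2$, hence $\mathbb{E}_{\mathcal{D}^{\star}}[\SW(j^{\mathrm{bin}},\bsigma)] \ge n\nu/2$.

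Chaining the two bounds with the trivial inequality $\mathbb{E}_{\mathcal{D}}[\SW(\ewmr,\bsigma)]\le n$ (since utilities live in $[0,1]$) then gives
\[
    \mathbb{E}_{\mathcal{D}}[\SW(j^{\mathrm{bin}},\bsigma)] \;\ge\; \mathbb{E}_{\mathcal{D}^{\star}}[\SW(j^{\mathrm{bin}},\bsigma)] \;\ge\; \tfrac{n\nu}{2} \;\ge\; \tfrac{\nu}{2}\cdot\mathbb{E}_{\mathcal{D}}[\SW(\ewmr,\bsigma)],
\]
which is the claimed $\tfrac{\nu}{2}$-EWMR guarantee. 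The only nontrivial step is really conceptual: spotting the auxiliary distribution $\mathcal{D}^{\star}$ and observing that the binomial-voting scores are (up to a positive scalar) the expected order statistics under $\mathcal{D}^{\star}$. Once those two facts are in place, stochastic dominance and a one-line averaging argument do all the remaining work, and the gap between $\mathcal{D}$- and $\mathcal{D}^{\star}$-welfare---largest precisely when $\mathcal{D}$ puts mass well above $\nu$---is exactly what is absorbed into the factor $\tfrac{\nu}{2}$.
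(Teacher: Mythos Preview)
Your argument is correct and, at the level of the underlying computations, coincides with the paper's proof: both establish the order-statistic lower bound
\[
\E_{\mathcal{D}}[X_{(m+1-k)}] \;\ge\; \nu\cdot 2^{-m}\sum_{\ell=k}^{m}\binom{m}{\ell},
\]
both sum these over all ranks to get $\sum_j \scr(j)\propto nm/2$, apply averaging to the binomial winner, and finish with $\E_{\mathcal{D}}[\SW(\ewmr,\bsigma)]\le n$.

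The packaging differs in a way worth noting. The paper proves the displayed inequality directly via $\E[X]\ge \nu\Pr(X\ge\nu)$ together with $\Pr(X_{(m+1-k)}\ge\nu)\ge\Pr\bigl(\bin(m,\tfrac12)\ge k\bigr)$. You instead introduce the auxiliary distribution $\mathcal{D}^{\star}=\tfrac12\delta_0+\tfrac12\delta_\nu$, observe that $\mathcal{D}$ stochastically dominates $\mathcal{D}^{\star}$, and recognize that the binomial-voting scores are (up to a positive scalar) exactly the expected order statistics of $\mathcal{D}^{\star}$, so that binomial voting is the EWMR for $\mathcal{D}^{\star}$. This reframing yields the same inequalities but buys an explanation for the otherwise unmotivated score vector: the rule is welfare-optimal for the ``worst'' distribution compatible with the median information, and the factor $\nu/2$ is precisely the mean of $\mathcal{D}^{\star}$. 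The paper's version is marginally shorter; yours is more conceptually transparent and dovetails naturally with the generalized binomial rules of the subsequent section (where $\mathcal{D}^{\star}$ becomes the step function built from the known quantiles).
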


Note that any symmetric distribution has a largest median of $\nu\geq\frac{1}{2}$, so for such distributions Theorem~\ref{thm:bin} almost recovers the guarantee given by top-half approval for symmetric distributions. Needless to say, the guarantee for binomial voting (Theorem~\ref{thm:bin}) is much more powerful than that for top-half approval (Theorem~\ref{thm:pos2}), as the former applies to any distribution and not only to symmetric ones. Note that while Theorem~\ref{thm:bin} gives a guarantee that depends on the median of the distribution~$\mathcal{D}$, because binomial voting is distribution-independent, it achieves this guarantee even when nothing is known about the distribution, including its median. 

By putting together Theorem~\ref{thm:pos}, Observation~\ref{obs}, and Theorem~\ref{thm:pos2}, we get the following corollary, which ties these results back to expected distortion. 

\begin{cor}
    Let $\mathcal{D}$ be supported on $[0,1]$ with constant (relative to $n,m$) mean, variance, and largest median $\nu$. Then for every $\epsilon$, there exists $n_0$ such that if $n\geq n_0$, binomial voting is a $\bigl(\frac{1}{2}-\epsilon\bigr)\nu$-EDMR for $\mathcal{D}$.
    
    Furthermore, if $\mathcal{D}$ has a continuously differentiable CDF $F$ satisfying $\inf_{x} \frac{dF(x)}{dx} > 0$, then there exists $m_0$ such that the same result holds if $m\geq m_0$.
\end{cor}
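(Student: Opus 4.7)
The plan is to chain Theorem~\ref{thm:bin}, Theorem~\ref{thm:pos}, and Observation~\ref{obs} in sequence, with essentially no new work beyond a one-line algebraic reparametrization of the approximation factor. First, Theorem~\ref{thm:bin} gives that binomial voting is a $\frac{\nu}{2}$-EWMR for $\mathcal{D}$ (this bound holds for every $n$ and $m$), so in the language of Observation~\ref{obs} we have $\beta = \frac{\nu}{2}$.

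Next I would invoke Theorem~\ref{thm:pos}. The hypotheses of the corollary\emdash constant (in $n,m$) mean $\mu$ and variance $s^2$, optionally with the CDF condition $\inf_x \frac{dF(x)}{dx} > 0$\emdash are exactly what Theorem~\ref{thm:pos} requires. Thus for any $\epsilon' > 0$, there exists $n_0$ (and under the CDF assumption, an $m_0$) such that for $n \geq n_0$ (resp.\ $m \geq m_0$) an EWMR is a $(1-\epsilon')$-EDMR for $\mathcal{D}$. This supplies $\gamma = 1-\epsilon'$.

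Applying Observation~\ref{obs} with these values yields that binomial voting is a $\beta\gamma = \frac{\nu}{2}(1-\epsilon')$-EDMR for $\mathcal{D}$ whenever $n \geq n_0$ (or $m \geq m_0$). To match the statement of the corollary, given $\epsilon > 0$ (the case $\epsilon \geq \frac{1}{2}$ being vacuous since distortion is non-negative), I set $\epsilon' := 2\epsilon$ and take the $n_0$ (resp.\ $m_0$) supplied by Theorem~\ref{thm:pos} for this choice of $\epsilon'$. Then
\[
    \tfrac{\nu}{2}(1 - \epsilon') \;=\; \tfrac{\nu}{2}(1-2\epsilon) \;=\; \bigl(\tfrac{1}{2} - \epsilon\bigr)\nu,
\]
which is precisely the claimed approximation ratio.

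There is no genuine obstacle here: all the heavy lifting\emdash the EWMR-to-EDMR transfer via concentration, and the distribution-independent EWMR guarantee for binomial voting\emdash is already done by Theorems~\ref{thm:pos} and~\ref{thm:bin}. The only bookkeeping step is choosing $\epsilon' = 2\epsilon$ so that the product $\frac{\nu}{2}(1-\epsilon')$ matches $(\frac{1}{2}-\epsilon)\nu$, and verifying that the hypotheses needed for each of the two regimes (large $n$ vs.\ large $m$) in Theorem~\ref{thm:pos} are respectively inherited from the hypotheses of the corollary.
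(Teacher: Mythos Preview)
Your proposal is correct and is precisely the argument the paper intends: it explicitly states that the corollary follows by putting together Theorem~\ref{thm:pos}, Observation~\ref{obs}, and the binomial voting guarantee (the paper's reference to Theorem~\ref{thm:pos2} there appears to be a typo for Theorem~\ref{thm:bin}). Your reparametrization $\epsilon' = 2\epsilon$ is the only bookkeeping needed, exactly as you describe.
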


\subsection{Beyond the Median of the Distribution}

A natural generalization of binomial voting is to consider quantiles other than the median. For example, one might wish to guarantee a $\beta$-EWMR where $\beta$ depends on the third quartile. The following corollary generalizes binomial voting to other distribution-free scoring rules with expected social welfare guarantees based on the quantile of interest. All proofs for this section can be found in Appendix~\ref{app:proofs_of_gen_bin}.

\begin{cor}\label{cor:bin2}
    Let $p \in (0,1]$. Consider the scoring rule that assigns a score of $\sum_{\ell=k}^m \binom{m}{\ell}(1-p)^\ell p^{m-\ell}$ to an alternative ranked $k$. This scoring rule is a $(1\!-\!p)Q$-EWMR, where $Q$ is the $p$-quantile of $\mathcal{D}$, i.e. $Q = \sup \bigl\{y ~\big|~ \Pr_{x\sim \mathcal{D}}(x \le y) \le p\bigr\}$. 
\end{cor}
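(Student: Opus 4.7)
The plan is to mirror the proof of Theorem~\ref{thm:bin} with the role of the median replaced by an arbitrary $p$-quantile $Q$. The crucial algebraic observation is that the score $s_k := \sum_{\ell=k}^m \binom{m}{\ell}(1-p)^\ell p^{m-\ell}$ assigned to position $k$ equals $\Pr[B \geq k]$ for $B \sim \bin(m,1-p)$. This probabilistic reading of the scores pairs cleanly with the defining property of $Q$, namely that a $\mathcal{D}$-draw exceeds $Q$ with probability at least $1-p$.

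The first step is a per-voter welfare lower bound: for any alternative $a$ and any voter $i$ who ranks $a$ in position $k$, I claim that $\E_{\bu \triangleright \bsigma}[u_{ia}] \geq Q \cdot s_k$. Under $\bu \triangleright \bsigma$, the utility $u_{ia}$ is distributed as the $k$-th largest of $m$ i.i.d.\ draws $Y_1, \ldots, Y_m \sim \mathcal{D}$. For any non-negative random variable $Z$ one has $\E[Z] \geq Q \cdot \Pr[Z \geq Q]$, so $\E[u_{ia}] \geq Q \cdot \Pr\bigl[|\{j : Y_j \geq Q\}| \geq k\bigr]$. A short left-limit argument from $Q = \sup\{y : \Pr[X \leq y] \leq p\}$ gives $\Pr[Y_j \geq Q] \geq 1-p$, so the number of draws at least $Q$ stochastically dominates $\bin(m, 1-p)$, and the probability in question is at least $s_k$. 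Summing over voters yields $\E[\sw(a, \bu)] \geq Q \cdot S(a)$, where $S(a) := \sum_{i \in N} s_{k_i(a)}$ is the total score of $a$ under the rule and $k_i(a)$ denotes the position of $a$ in voter $i$'s ranking.

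The second step is to lower-bound the score of the binomial-voting winner $j_{bv}$ by averaging. Using the identity
\[
\sum_{k=1}^m s_k \;=\; \sum_{k=1}^m \Pr[B \geq k] \;=\; \E[B] \;=\; m(1-p),
\]
and the fact that $j_{bv}$ maximizes $S(\cdot)$, I get $S(j_{bv}) \geq \frac{1}{m}\sum_{a \in A} S(a) = \frac{n}{m}\sum_{k=1}^m s_k = n(1-p)$. Combining this with the first step gives $\E[\sw(j_{bv}, \bu)] \geq n(1-p)Q$. Since utilities lie in $[0,1]$, the EWMR's expected welfare is at most $n$, and so $\E[\sw(j_{bv}, \bu)] \geq (1-p)Q \cdot \E[\sw(\ewmr(\bsigma), \bu)]$, establishing the $(1-p)Q$-EWMR guarantee.

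The main subtlety to watch out for is the quantile step: since $\mathcal{D}$ may have atoms and $Q$ need not be attained, $\Pr[X \geq Q] \geq 1-p$ must be established by taking a left limit $\lim_{y \uparrow Q}\Pr[X \leq y] \leq p$ rather than by substituting $y = Q$. Beyond that, the argument is a short direct calculation, and it correctly recovers Theorem~\ref{thm:bin} at $p = 1/2$ (the uniform rescaling by $2^{-m}$ leaves the scoring rule unchanged), which serves as a useful sanity check.
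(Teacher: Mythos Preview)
Your proposal is correct and follows essentially the same approach as the paper: a per-voter order-statistic lower bound $\E[X_{(m+1-k)}] \ge Q\cdot s_k$ (the paper's Lemma~\ref{lem:orderstatbinomial_gen_bound}), followed by summing scores via $\sum_k s_k = \E[\bin(m,1-p)] = m(1-p)$ and the pigeonhole/averaging step to compare against the EWMR bound of $n$. Your explicit left-limit justification of $\Pr[X \ge Q] \ge 1-p$ is a detail the paper leaves implicit, but otherwise the arguments are the same.
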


So far we have focused on distribution-independent rules where the rules stay the same but the bounds change depending on the underlying distribution. By contrast, the known EWMR of \cite{BCHL+15} that requires complete knowledge about the underlying distributions looks very different for each distribution. It turns out that we can interpolate between these two extremes by incorporating partial information that might be known about the distribution $\mathcal{D}$. If we are given access to multiple quantiles of the distribution, then we can extend binomial voting to a distribution-dependent rule as follows. 

\begin{definition} \label{def:genbinscr}
    Suppose we have access to quantiles $Q_1,\ldots,Q_T$ corresponding to fractions $0 < p_1 < p_2 < \cdots < p_T \le 1$, where $Q_k = \sup \bigl\{y ~\big|~ \Pr_{x\sim \mathcal{D}}(x \le y) \le p_k\bigr\}$. 
    Define \emph{generalized binomial voting} as the scoring rule that gives to the alternative ranked $k$ a score of
    \[s_k := \sum_{t=1}^{T} Q_t \left(\sum_{\ell = k}^{m} \binom{m}{\ell} (1-p_t)^\ell p_t^{m-\ell} - \sum_{\ell = k}^m \binom{m}{\ell} (1-p_{t+1})^{\ell}p_{t+1}^{m-\ell}\right).\]
\end{definition}

\begin{thm}\label{thm:bin3}
    If the distribution $\mathcal{D}$ has quantiles $Q_1,\ldots,Q_T$ corresponding to fractions $0 < p_1 < p_2 < \cdots < p_T \le 1$ (define $p_{T+1} = 1$), then generalized binomial voting is a $\beta$-EWMR for $\beta = \sum_{s=1}^{T} Q_s(p_{s+1}-p_s)$.
\end{thm}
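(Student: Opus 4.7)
My plan is to view the scores $s_k$ as the conditional expectations of a staircase lower approximation to each voter's utilities, so that generalized binomial voting essentially maximizes a proxy expected welfare; the bound then follows from a ``winner beats the uniform average'' argument. To start, setting $Q_0 := 0$ and $T_k^{(p)} := \sum_{\ell=k}^m \binom{m}{\ell}(1-p)^\ell p^{m-\ell}$, summation by parts applied to Definition~\ref{def:genbinscr}---using $p_{T+1}=1$ so that $T_k^{(p_{T+1})}=0$ for $k\ge 1$---rewrites the scores as
\[s_k \;=\; \sum_{t=1}^T (Q_t - Q_{t-1})\,T_k^{(p_t)}.\]
Next I would introduce the \emph{truncated utility} $\tilde{u}_{ij} := \sum_{t=1}^T (Q_t - Q_{t-1})\,\mathbf{1}[u_{ij} \geq Q_t]$, which equals the largest $Q_t$ with $Q_t\leq u_{ij}$ (or $0$ if none), so $\tilde{u}_{ij}\leq u_{ij}$ pointwise. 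Conditional on $\bsigma$, if voter $i$ ranks alternative $j$ at position $k$, then $u_{ij}$ is distributed as the $k$-th largest of $m$ i.i.d.\ draws from $\mathcal{D}$, so
\[\E\bigl[\mathbf{1}[u_{ij}\geq Q_t]\,\big|\,\bsigma\bigr] \;=\; \Pr\bigl[\bin(m, \Pr_{X\sim\mathcal{D}}[X\geq Q_t])\geq k\bigr] \;\geq\; T_k^{(p_t)},\]
where the final inequality uses $\Pr[X\geq Q_t]\geq 1-p_t$ (a consequence of the sup-quantile definition of $Q_t$: for every $y<Q_t$ one has $F(y)\leq p_t$, hence $\Pr[X<Q_t]\leq p_t$). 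Summing over $t$ gives $\E[\tilde{u}_{ij}\mid\bsigma]\geq s_k$.

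Next I would apply the averaging argument. Let $j^*$ denote the winner of generalized binomial voting; its total score is at least the uniform average across alternatives,
\[\sum_{i\in N} s_{r_i(j^*)} \;\geq\; \frac{1}{m}\sum_{j\in A}\sum_{i\in N} s_{r_i(j)} \;=\; \frac{n}{m}\sum_{k=1}^m s_k,\]
where $r_i(j)$ denotes the position of $j$ in voter $i$'s ranking. Using $\sum_{k=1}^m T_k^{(p)} = \E[\bin(m,1-p)] = m(1-p)$ and then applying Abel summation once more (with $Q_0=0$, $p_{T+1}=1$) gives
\[\frac{1}{m}\sum_{k=1}^m s_k \;=\; \sum_{t=1}^T (Q_t - Q_{t-1})(1-p_t) \;=\; \sum_{t=1}^T Q_t(p_{t+1}-p_t) \;=\; \beta.\]
Chaining the inequalities,
\[\E[\SW(\gbsr,\bsigma)] \;\geq\; \E\Bigl[\sum_i \tilde{u}_{ij^*}\;\Big|\;\bsigma\Bigr] \;\geq\; \sum_i s_{r_i(j^*)} \;\geq\; n\beta \;\geq\; \beta\cdot\E[\SW(\ewmr,\bsigma)],\]
where the final inequality uses $\E[\SW(\ewmr,\bsigma)]\leq n$ because $\mathcal{D}$ is supported on $[0,1]$.

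The central obstacle is identifying the right staircase lower approximation in the first paragraph: given the apparently ad hoc weighting of binomial tails in Definition~\ref{def:genbinscr}, one must reverse-engineer a $\tilde{u}$ whose termwise conditional expectations exactly line up with $s_k$, and one must be careful that the inequality $\Pr[X\geq Q_t]\geq 1-p_t$ goes in the direction that makes $\E[\tilde{u}_{ij}\mid\bsigma]\geq s_k$ (not the reverse). Once that identification is in hand, the rest is two Abel summations plus the ``winner beats the uniform average'' device already used in the proofs of Corollary~\ref{cor:bin2} and Theorem~\ref{thm:bin}.
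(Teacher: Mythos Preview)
Your proposal is correct and follows essentially the same approach as the paper: both show that the score $s_k$ lower-bounds $\E[X_{(m+1-k)}]$ via a staircase approximation, then apply the ``winner beats the average'' argument together with $\E[\SW(\ewmr,\bsigma)]\le n$. The only difference is cosmetic---you Abel-sum first and work with tail indicators $\sum_t (Q_t-Q_{t-1})\mathbf{1}[u\ge Q_t]$, whereas the paper works directly with range indicators $\sum_t Q_t\mathbf{1}[Q_t\le u<Q_{t+1}]$ (its Lemma~\ref{lem:quantiles_lb}); your handling of the quantile inequality $\Pr[X\ge Q_t]\ge 1-p_t$ is arguably a shade more careful for distributions with atoms.
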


As an example, suppose the only information known about the underlying distribution is that the values $Q_1,Q_2,Q_3$ are the three quartiles (25\%, 50\%, 75\%). Then generalized binomial voting is a $\frac{Q_1 + Q_2 + Q_3}{4}$-EWMR. Whenever $Q_3 > Q_2$ or $Q_1 > 0$, this is strictly stronger than vanilla binomial voting, which is guaranteed to be a $\frac{Q_2}{2}$-EWMR. Given more quantiles, generalized binomial voting gives weakly stronger approximations of an EWMR. Intuitively, generalized binomial voting is approximating the distribution $\mathcal{D}$ with left-oriented rectangles based on the quantile information. In fact, with sufficient quantile information, the rule will exactly approach an EWMR in the limit as $T \to \infty$. This result follows from generalized binomial voting being an approximation based on quantile rectangles, which in the limit becomes exactly the integral for expectation.

\begin{cor}\label{cor:bin4}
    For the quantiles $(p_1,p_2,...p_T) = (\frac{1}{T}, \frac{2}{T},...,\frac{T}{T})$, generalized binomial voting approaches an expected-welfare-maximizing rule as $T \longrightarrow \infty$.
\end{cor}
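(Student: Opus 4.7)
The plan is to prove pointwise convergence of the generalized-binomial score vector to that of the expected-welfare-maximizing rule from \citet{BCHL+15}. Concretely, I would show that for every position $k\in\{1,\ldots,m\}$, the score $s_k$ tends to $\mathbb{E}[X_{(m-k+1)}]$ as $T\to\infty$. Since the winner of a scoring rule is determined by its score vector (and is stable under small perturbations, away from tie boundaries), pointwise convergence of scores yields convergence to an EWMR.

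First, I would rewrite $s_k$ via summation by parts. Let $G_k(p):=\sum_{\ell=k}^{m}\binom{m}{\ell}(1-p)^\ell p^{m-\ell}$, which equals $\Pr[\mathrm{Bin}(m,1-p)\geq k]$; it is a polynomial on $[0,1]$ with $G_k(0)=1$ and $G_k(1)=0$ for $k\geq 1$. With $p_t=t/T$, $p_{T+1}=1$, and $Q_0:=F^{-1}(0)=\inf\mathrm{supp}(\mathcal{D})$, Abel summation together with $G_k(p_{T+1})=0$ yields
\[
s_k \;=\; Q_0\, G_k(p_1) \;+\; \sum_{t=1}^{T}(Q_t-Q_{t-1})\,G_k(p_t).
\]
The sum on the right is a right Riemann--Stieltjes sum for $\int_0^1 G_k(p)\,dF^{-1}(p)$; because $F^{-1}$ is monotone on $[0,1]$ (hence of bounded variation) and $G_k$ is continuous, the standard Riemann--Stieltjes integrability theorem yields convergence to this integral as $T\to\infty$. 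Moreover $Q_0\,G_k(p_1)\to Q_0\cdot G_k(0)=Q_0$.

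Next, I would identify the limit with $\mathbb{E}[X_{(m-k+1)}]$. Using the tail formula and the identity $\Pr[X_{(m-k+1)}>x]=\Pr[\text{at least $k$ of $m$ draws from $\mathcal{D}$ exceed }x]=G_k(F(x))$ for $x\geq Q_0$, one has $\mathbb{E}[X_{(m-k+1)}]=Q_0+\int_{Q_0}^{1}G_k(F(x))\,dx$. The substitution $p=F(x)$ (equivalently $x=F^{-1}(p)$, $dx=dF^{-1}(p)$, interpreted in the Stieltjes sense) turns this into $Q_0+\int_0^1 G_k(p)\,dF^{-1}(p)$, which matches the limit of $s_k$. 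Hence $\lim_{T\to\infty} s_k=\mathbb{E}[X_{(m-k+1)}]$. By \citet{BCHL+15}, the scoring rule with scores $(\mathbb{E}[X_{(m)}],\ldots,\mathbb{E}[X_{(1)}])$ is an EWMR, so generalized binomial voting approaches an EWMR as $T\to\infty$.

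The main obstacle is handling distributions whose CDFs need not be absolutely continuous: atoms of $\mathcal{D}$ produce plateaus of $F^{-1}$, and flat regions of $F$ produce jumps in $F^{-1}$, so the change of variables $dx=dF^{-1}(p)$ must be justified as a Stieltjes identity rather than via densities. Because $F^{-1}$ is monotone and $G_k$ and $F$ are bounded, the Riemann--Stieltjes machinery applies uniformly, but verifying that atoms and flats are accounted for on both sides of the limiting equality requires a small amount of careful bookkeeping; the remainder of the argument is routine.
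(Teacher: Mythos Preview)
Your proposal is correct, but it takes a different path from the paper. The paper argues directly from the proof of Theorem~\ref{thm:bin3}: there, the only inequality between the score $s_k$ and $\mathbb{E}[X_{(m-k+1)}]$ comes from Lemma~\ref{lem:quantiles_lb}, which replaces $\int_0^1\Pr(X\ge x)\,dx$ by the lower rectangle sum $\sum_t (Q_{t+1}-Q_t)\Pr(X\ge Q_{t+1})$ over the $x$-axis partition $\{Q_t\}$. The paper then simply notes that this rectangle approximation tends to the integral as $T\to\infty$. You instead perform Abel summation to rewrite $s_k$ as a Riemann--Stieltjes sum in the $p$-variable against the integrator $F^{-1}$, invoke the continuity/BV existence criterion, and identify the limit with $\mathbb{E}[X_{(m-k+1)}]$ via the Stieltjes change of variables $p=F(x)$. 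The paper's route is shorter because it piggybacks on a decomposition already established; your route is more self-contained and is more explicit about the measure-theoretic subtleties (atoms and flat regions of $F$) that the paper's one-line ``by the definition of the Riemann integral'' glosses over. Both land on the same pointwise score convergence $s_k\to\mathbb{E}[X_{(m-k+1)}]$.
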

\section{Discussion}\label{sec:discussion}

In this section we discuss limitations of our work and present some extensions thereof.

\subsection{Limitations}

In our view, the main limitation of our work is the simplifying assumption that voters' utilities for all alternatives are independent and identically distributed. In reality, some alternatives are inherently stronger than others. In addition, utilities are typically correlated; for example, in US politics, if a voter's favorite candidate is a Republican, they are likely to prefer another Republican candidate to a Democratic candidate. 

That said, similar i.i.d.\ assumptions are commonly made in computational social choice and mechanism design \cite[see, e.g.,][]{DGKP+14}. Moreover, there is a large body of literature that relies on the \emph{impartial culture assumption}~\citep{TRG03,PW09}, under which rankings are drawn independently and uniformly at random; when rankings are induced by utilities, the i.i.d. assumption on utilities gives rise to impartial culture. While these assumptions are admittedly strong, they lead to clean insights and solutions that often generalize well.  

\subsection{Benchmarks}

We chose to define our benchmarks of expected distortion and expected welfare with respect to an arbitrary fixed preference profile $\sigma$. Our main results therefore hold for any choice of preference profile $\sigma$. In other words, our results should be interpreted as being in the worst-case regime for preference profiles while being in the average-case regime for utilities. We chose to use the EDMR and EWMR as benchmarks in our main theorems because, by definition, these are the rules which maximize our objectives. However, the proofs of our results do not fundamentally rely on the use of these benchmarks. In fact, many of our main results (including Theorem \ref{thm:pos} and Theorem \ref{thm:bin}) can be adapted to directly lower bound the actual values of expected distortion and expected welfare. More details can be found in Appendix~\ref{app:benchmarks}.


\subsection{Extensions}

We have shown that an EWMR can be a good approximation to an EDMR. One follow-up question is whether it is possible to do better by directly computing an EDMR. As noted earlier, a potential EDMR is to choose the deterministic alternative with the highest expected distortion. Unfortunately, this rule requires exponential computation time. One way to circumvent this is to use i.i.d.\ samples from the underlying distribution to empirically approximate the expected distortion of each alternative, as shown in Appendix~\ref{app:sampling}. With polynomially many samples from $\mathcal{D}$ and polynomial time, such a rule is a $(1-\epsilon)$-EDMR with high probability. However, it may be unrealistic to expect i.i.d.\ sample access to the underlying distribution, which limits the practicality of this approach.

Theorem \ref{thm:neg2} states that for a given constant $\alpha$, no rule can be an $\alpha$-EWMR for all distributions. Despite this, one might hope that a rule that is an EWMR for one distribution could also be a good approximation for an EWMR for another similar distribution. This is in fact not the case. Specifically, for any $\alpha, \epsilon > 0$, an EWMR for a distribution $\mathcal{D}$ may not even be an $\alpha$-EWMR for another distribution that has total variation distance only $\epsilon$ away from $\mathcal{D}$. Surprisingly, though, a voting rule called Borda count, which is an EWMR for the uniform distribution, is in fact a $(1\!-\!\epsilon)$-EWMR for all distributions that are sufficiently close to uniform. For more details, see Appendix~\ref{app:tvd}.

Finally, Theorem~\ref{thm:bin} gives a strong positive result for the expected welfare of binomial voting that depends on the median of the distribution. In Appendix~\ref{app:plurality_swmr}, we show that for any distribution (regardless of its median) and any $n,m$, plurality is an $\alpha$-EWMR for $\alpha = \max\bigl\{\frac{1}{n}, \frac{1}{m}\bigr\}$. We conjecture that plurality may be ``tight'' in the sense that no rule can provide a better guarantee, but we leave this question open for future work.

\bibliographystyle{plainnat}
\bibliography{abb,ultimate}
\newpage
\appendix

\section{Randomized Rules}
\label{app:rand}

As mentioned in Section \ref{sec:model}, the definition of distributional distortion \eqref{eq:eid_defn} can be extended to randomized voting rules by defining it to be the random variable
\begin{equation}\label{eq:eid_defn_randomized}
    \dist(f, \bsigma) := \frac{\E[\sw(f(\bsigma),\bu)]}{\max_{j \in A}\sw(j,\bu)},
\end{equation}
where the expectation in the numerator is taken only over randomness in the voting rule $f$. In this paper we choose to only consider the deterministic rules because linearity of expectation implies that any randomized voting rule $f$ is weakly worse than a deterministic rule. Therefore, although the number of randomized rules is infinite, considering the finite number of deterministic rules is sufficient for finding an EDMR. Note that the same argument holds for EWMR as well. This result is formalized in the following lemma.
\begin{lemma}
    For any randomized voting rule, there exists a deterministic rule that has expected distortion as least as large as that of the randomized rule. The same result holds for expected welfare.
\end{lemma}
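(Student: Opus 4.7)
The plan is to observe that, once the preference profile $\bsigma$ is fixed, the expected distortion of a randomized rule is a convex combination of expected distortions of ``constant'' alternatives, and hence is dominated by the best alternative. We then construct a deterministic rule by choosing, separately for each $\bsigma$, this best alternative.

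Concretely, I would first unpack the definition. For a randomized rule $f$ and a fixed profile $\bsigma$, the overall expected distortion (averaging over $\bu\triangleright\bsigma$ and the internal randomness of $f$) is
\[
    \E_{\bu \triangleright \bsigma}\!\left[\frac{\E_f\bigl[\sw(f(\bsigma),\bu)\bigr]}{\max_{j \in A}\sw(j,\bu)}\right]
    = \E_{\bu \triangleright \bsigma}\!\left[\sum_{j \in A} \Pr\!\bigl[f(\bsigma)=j\bigr]\cdot\frac{\sw(j,\bu)}{\max_{j' \in A}\sw(j',\bu)}\right].
\]
Pulling the (finite) sum outside the expectation over $\bu$ by linearity yields
\[
    \sum_{j \in A} \Pr\!\bigl[f(\bsigma)=j\bigr]\cdot\E_{\bu \triangleright \bsigma}\!\left[\frac{\sw(j,\bu)}{\max_{j' \in A}\sw(j',\bu)}\right].
\]
This is a convex combination, with weights $\Pr[f(\bsigma)=j]$, of the quantities $q_j(\bsigma):=\E_{\bu \triangleright \bsigma}[\sw(j,\bu)/\max_{j'}\sw(j',\bu)]$, so it is at most $\max_{j\in A} q_j(\bsigma)$.

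Next I would define a deterministic rule $g$ profile-by-profile: for every $\bsigma$, set $g(\bsigma)\in\argmax_{j \in A} q_j(\bsigma)$ (breaking ties arbitrarily). By the previous display, $g$ achieves expected distortion $\max_{j\in A} q_j(\bsigma) \ge \E\!\bigl[\dist(f,\bsigma)\bigr]$ for every $\bsigma$, which is exactly the claim. The analogous argument for expected welfare is even more direct: $\E[\SW(f,\bsigma)] = \sum_{j\in A}\Pr[f(\bsigma)=j]\cdot\E_{\bu \triangleright \bsigma}[\sw(j,\bu)]$ by linearity alone, so letting $g(\bsigma)\in\argmax_j \E_{\bu \triangleright \bsigma}[\sw(j,\bu)]$ gives a deterministic rule whose expected welfare weakly dominates that of $f$ on every profile.

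There is essentially no obstacle here: the entire argument is ``max $\ge$ average,'' applied separately at each $\bsigma$. The only care needed is in swapping the two expectations and the finite sum over $j$, which is justified because $A$ is finite and $\sw(j,\bu)/\max_{j'}\sw(j',\bu)\in[0,1]$ (so there are no integrability issues). Since $g$ is built pointwise in $\bsigma$, it is a well-defined deterministic voting rule and satisfies the required inequality universally in $\bsigma$, not just on average.
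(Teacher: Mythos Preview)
Your proposal is correct and follows essentially the same argument as the paper: both expand the inner expectation over $f$'s randomness as $\sum_{j}\Pr[f(\bsigma)=j]\cdot(\cdot)$, swap with the outer expectation by linearity, and bound the resulting convex combination by its maximum term. Your write-up is slightly more explicit in constructing the deterministic rule profile-by-profile and in noting why the interchange is justified, but the substance is identical.
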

\begin{proof}
We will start by proving the result for expected distortion. The expected distortion of a randomized voting rule $f$ can be written as follows, where the inner expectation is over the randomness in the voting rule $f$ and the outer expectation is over the random utilities $\bu \triangleright \bsigma$. Note that the randomness in $f$ is over the $m$ alternatives that $f$ can select.
\begin{align*}
    \E_{\bu \triangleright \bsigma} \left[\frac{\E[\sw(f(\bsigma),\bu)]}{\max_{k \in A}\sw(k, \bu)} \right]
    &= \E_{\bu \triangleright \bsigma}\left[\frac{\sum_{j = 1}^m \Pr(f(\bsigma)=j) \cdot \sw(j, \bu)}{\max_{k \in A}\sw(k, \bu)} \right] \\
    &= \E_{\bu \triangleright \bsigma}\left[\sum_{j = 1}^m \Pr(f(\bsigma)=j)  \cdot \frac{\sw(j, \bu)}{\max_{k \in A}\sw(k, \bu)} \right] \\
    &= \sum_{j = 1}^m \Pr(f(\bsigma)=j)  \cdot \E_{\bu \triangleright \bsigma}\left[\frac{\sw(j, \bu)}{\max_{k \in A}\sw(k, \bu)}\right] \\
    &\leq \max_j \left(\E_{\bu \triangleright \bsigma}\left[\frac{\sw(j, \bu)}{\max_{k \in A}\sw(k, \bu)} \right]\right).
\end{align*}
This completes the proof for expected distortion, as the final expression is a deterministic rule.

For expected welfare, the result follows in a similar manner:
\begin{align*}
    \E_{\bu \triangleright \bsigma} \left[\E[\sw(f(\bsigma),\bu)] \right]
    &= \E_{\bu \triangleright \bsigma}\left[\sum_{j = 1}^m \Pr(f(\bsigma)=j) \cdot \sw(j, \bu) \right] \\
    &= \sum_{j = 1}^m \Pr(f(\bsigma)=j)  \cdot \E_{\bu \triangleright \bsigma}\left[\sw(j, \bu)\right] \\
    &\leq \max_j \left(\E_{\bu \triangleright \bsigma}\left[\sw(j, \bu)\right]\right).
\end{align*}
\end{proof}

\section{Infinite Expected Inverse Distributional Distortion}\label{app:iddist_inf}

As mentioned in Section \ref{sec:model}, we chose to define distributional distortion with the ratio in the opposite direction as worst-case distortion. Another possible definition would have been to define the \emph{inverse distributional distortion} as follows, which is exactly the inverse of distributional distortion. Note that we have defined inverse distributional distortion using the more general definition for randomized voting rules introduced above, as in \eqref{eq:eid_defn_randomized}. 
\begin{definition}\label{def:invddist}
    We define the \emph{inverse distributional distortion} of a rule $f$ under preference profile $\bsigma$ as 
    \begin{equation}\label{eq:inveid_defn}
        \invdist(f, \bsigma) = \frac{\max_{j \in A}\sw(j,\bu)}{\E[\sw(f(\bsigma),\bu)]}. 
    \end{equation}
    where the expectation is over the the randomness of the voting rule.
\end{definition}

As before, we can consider the expectation of the random variable $\invdist(f, \bsigma)$ with respect to a random $\bu \triangleright \bsigma$. Unfortunately, even for relatively simple distributions supported on $[0,1]$, the expectation of $\invdist(f, \bsigma)$ can be infinite. This makes comparing rules more difficult when using expected inverse distortion as the metric.
\begin{lemma}
    Let $m = n = 2$ and $\bsigma$ be a symmetric preference profile. If $\mathcal{D} \sim \beta(\frac{1}{4}, 1)$, then no deterministic rule $f$ has finite $\E[\invdist(f, \bsigma)]$.
\end{lemma}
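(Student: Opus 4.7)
With $n=m=2$ and $\bsigma$ symmetric, one voter ranks $1\succ 2$ and the other $2\succ 1$, so a deterministic rule selects either alternative; under the relabeling $1\leftrightarrow 2$ of alternatives (combined with a relabeling of voters) the two choices give identically distributed $\invdist$, so we may fix $f$ to output alternative $1$. Let voter $i$'s two i.i.d.\ $\mathcal{D}$-draws be $(a_i,b_i)$, and set $M_i=\max(a_i,b_i)$, $m_i=\min(a_i,b_i)$. Because each voter's draws are assigned in rank order, $A:=\sw(1,\bu)=M_1+m_2$ and $B:=\sw(2,\bu)=m_1+M_2$. Using $B\geq M_2$,
\[
\invdist(f,\bsigma)\;=\;\frac{\max(A,B)}{A}\;\geq\;\frac{B}{A}\;\geq\;\frac{M_2}{M_1+m_2},
\]
so it suffices to show $\E\!\bigl[M_2/(M_1+m_2)\bigr]=+\infty$.

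The key structural fact is that $M_1$ is a function of voter $1$'s draws alone while $(M_2,m_2)$ is a function of voter $2$'s, so $M_1$ is independent of $(M_2,m_2)$. Conditioning on $m_2=t$ factors the target expectation: writing $g(t):=\E[M_2\mid m_2=t]$, $h(t):=\E_{M_1}[1/(M_1+t)]$, and $f_{m_2}$ for the density of $m_2$, independence gives
\[
\E\!\left[\frac{M_2}{M_1+m_2}\right]\;=\;\int_0^1 g(t)\,h(t)\,f_{m_2}(t)\,dt.
\]
The remainder of the argument reduces to estimating $g$, $h$, and $f_{m_2}$ near $0$ using the Beta$(1/4,1)$ PDF $f(x)=\tfrac14 x^{-3/4}$ and CDF $F(x)=x^{1/4}$.

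The three asymptotics are straightforward computations: (i) the density of $m_2$ is $f_{m_2}(t)=2(1-F(t))f(t)=\tfrac12(1-t^{1/4})t^{-3/4}$, hence $f_{m_2}(t)\geq \tfrac14 t^{-3/4}$ on $[0,\tfrac12]$; (ii) conditional on $m_2=t$, the larger of voter $2$'s draws is distributed as $\mathcal{D}$ truncated to $[t,1]$, so $g(t)=(1-t^{5/4})/(5(1-t^{1/4}))$, which is continuous on $[0,\tfrac12]$ with $g(0)=\tfrac15$ and therefore $g(t)\geq c>0$ on this interval; (iii) the density of $M_1$ is $2F(x)f(x)=\tfrac12 x^{-1/2}$, so $h(t)=\tfrac12\int_0^1 x^{-1/2}(x+t)^{-1}dx$, and the substitution $x=tu^2$ gives $h(t)=t^{-1/2}\arctan(1/\sqrt{t})\sim \tfrac{\pi}{2\sqrt{t}}$ as $t\to 0^+$. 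Combining these bounds, for some constants $\delta,c'>0$,
\[
\E\!\left[\frac{M_2}{M_1+m_2}\right]\;\geq\; c\int_0^{1/2} h(t)\cdot\tfrac14 t^{-3/4}\,dt\;\geq\; c'\int_0^{\delta} t^{-5/4}\,dt\;=\;+\infty,
\]
as desired. The main obstacle is the dependence between $A$ and $B$ through shared utilities; the cheap bound $B\geq M_2$ eliminates it, after which the heavy Beta$(1/4,1)$ density near $0$ together with the blow-up of $h(t)$ forces the integrand to be nonintegrable while $g$ stays uniformly positive on a neighborhood of $0$.
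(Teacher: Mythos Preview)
Your proof is correct. Both you and the paper begin the same way: by symmetry it suffices to analyze one deterministic choice, and one lower-bounds $\max(A,B)/A$ by $B/A$. From there the routes diverge. The paper keeps the full ratio $(m_1+M_2)/(M_1+m_2)$, writes the expectation as a four-fold integral
\[
\tfrac{1}{64}\int_0^1\!\int_0^1\!\int_0^x\!\int_0^y \frac{x+z}{y+w}\,(xyzw)^{-3/4}\,dz\,dw\,dy\,dx,
\]
and simply asserts that it diverges. You instead discard $m_1$ from the numerator, which looks wasteful but is the key move: it makes the numerator depend on voter~2 alone while the denominator becomes a sum of one term per voter, so conditioning on $m_2=t$ and using independence of the two voters factors the expectation into $g(t)\,h(t)$. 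Your explicit asymptotics $g(t)\to 1/5$, $f_{m_2}(t)\asymp t^{-3/4}$, $h(t)\asymp t^{-1/2}$ then give a transparent $t^{-5/4}$ divergence near $0$. Your argument is longer but fully self-contained; the paper's is terser but leaves the reader to verify divergence of a four-dimensional integral.

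One small slip: the bound $f_{m_2}(t)\geq \tfrac{1}{4}t^{-3/4}$ requires $1-t^{1/4}\geq \tfrac{1}{2}$, i.e.\ $t\leq 1/16$, not $t\leq 1/2$. Since you ultimately restrict to $[0,\delta]$ with $\delta$ small, this is harmless.
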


\begin{proof}
    Let $f(x) = 0.25x^{-0.75}$ be the pdf of $\mathcal{D}$. Note that the joint pdf, denoted $f_{X_{(1)}X_{(2)}}(x,y)$, of the minimum $X_{(1)}$ and maximum $X_{(2)}$ of two i.i.d. draws from $\mathcal{D}$ is:
    \[
        f_{X_{(1)}X_{(2)}}(x, y) = 2f(x)f(y).
    \]
    Because $n = m = 2$, each of the two voters is drawing two utilities from $\mathcal{D}$, for a total of 4 i.i.d. draws from $\mathcal{D}$. Denote the utilities of voter $1$ as $(w,x)$ and the utilities of voter $2$ as $(y, z)$ respectively, and consider the case when $x \ge w$ and $y \ge z$, with corresponding ranking $\bsigma$. The preference profile is symmetric, and therefore the expected inverse distortions of the two alternatives are equal. We can assume WLOG that alternative $1$ has welfare of $y+w$ (ranked second by voter 1 and first by voter 2) and alternative $2$ has welfare of $x + z$ (ranked first by voter 1 and second by voter 2). Using the joint pdf for the two sets of two draws, this allows us to calculate the expected inverse distortion of alternative $1$ as follows:
    \begin{align*}
        \E[\invdist(1, \bsigma)] &= \int_0^1 \int_0^1 \int_0^x \int_0^y \frac{\max(x+z, y+w)}{y+w}4f(w)f(z)f(x)f(y)\:dz\:dw\:dy\:dx \\
        &\ge 4\int_0^1 \int_0^1 \int_0^x \int_0^y \frac{x+z}{y+w} f(w)f(z)f(x)f(y)\:dz\:dw\:dy\:dx \\
        &= \frac{1}{64}\int_0^1 \int_0^1 \int_0^x \int_0^y \frac{x+z}{y+w}\cdot \frac{1}{(xyzw)^{0.75}}\:dz\:dw\:dy\:dx \\
        &\longrightarrow \infty.
    \end{align*}
    Therefore no deterministically chosen alternative has finite $\E[\invdist(f, \bsigma)]$.
\end{proof}

\section{Randomization and Inverse Distributional Distortion}\label{app:iddist_rand}

In Appendix \ref{app:rand}, we showed that a deterministic rule always achieves the highest expected distortion. For expected inverse distortion as defined in Appendix \ref{app:iddist_inf}, this is no longer the case. In fact, there exist distributions $\mathcal{D}$ supported on $[0,1]$ such that the best deterministic rule does arbitrarily worse than the best randomized rule.
\begin{thm}
    For all $\epsilon > 0$, there exists a distribution $\mathcal{D}$ and preference profile $\bsigma$ such that the deterministic rule that maximizes expected inverse distortion has expected inverse distortion less than $\epsilon$ fraction of that of the best random rule, i.e:
    \[
        \max_j \E[\invdist(j, \bsigma)] \le \epsilon \cdot \sup_{f} \E[\invdist(f, \bsigma)]
    \]
\end{thm}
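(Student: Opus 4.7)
The plan is to construct, for any prescribed $\epsilon > 0$, a distribution $\mathcal{D}$ on $[0,1]$ and a preference profile $\bsigma$ for which a specific randomized rule has expected inverse distortion separated from every deterministic alternative's by a factor exceeding $1/\epsilon$. I would take $n = m = 2$ with the antisymmetric profile (voter $1$ ranks alternative $1$ first and voter $2$ ranks alternative $2$ first), so the two deterministic alternatives have equal expected inverse distortion by symmetry; and I would choose $\mathcal{D} = \mathrm{Beta}(\alpha,1)$ for $\alpha > 1/3$ approaching the critical value, building on the construction in Appendix~\ref{app:iddist_inf}.

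The analysis has three steps. First, I would derive $\E\bigl[\invdist(j,\bsigma)\bigr]$ for one deterministic alternative by expanding the joint density of $(\max,\min)$ of two i.i.d.\ $\mathrm{Beta}(\alpha,1)$ draws per voter; a polar-type change of variables near the origin of the $(\max_1,\min_2)$ plane (as in Appendix~\ref{app:iddist_inf}) reduces the convergence question to that of $\int_0^\delta r^{3\alpha - 2}\,dr$, which is finite exactly for $\alpha > 1/3$ but grows like $\Theta\bigl(1/(3\alpha - 1)\bigr)$ as $\alpha \downarrow 1/3$. Second, I would analyze the uniform random rule with $p_1 = p_2 = \tfrac12$: its denominator in~\eqref{eq:inveid_defn} becomes $\tfrac12\bigl(\sw(1,\bu) + \sw(2,\bu)\bigr)$, yielding the pointwise bound $\invdist(f,\bu) = \tfrac{2\max(\sw(1,\bu),\sw(2,\bu))}{\sw(1,\bu)+\sw(2,\bu)} \le 2$ and hence $\E\bigl[\invdist(f,\bsigma)\bigr] \le 2$ uniformly in $\alpha$. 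Third, I would combine these estimates by picking $\alpha$ sufficiently close to $1/3$; the deterministic expected inverse distortion then exceeds $2/\epsilon$ while the randomized one is at most $2$, producing the required ratio.

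The main obstacle is conceptual rather than computational. Jensen's inequality applied pointwise to the convex function $1/x$ sitting inside the denominator of~\eqref{eq:inveid_defn} yields $\invdist(f,\bu) \le \sum_j p_j \,\invdist(j,\bu)$, which after taking expectations forces $\sup_f \E\bigl[\invdist(f,\bsigma)\bigr] \le \max_j \E\bigl[\invdist(j,\bsigma)\bigr]$ for every distribution and profile. The Beta-based construction opens an unbounded gap in the direction where the best randomized rule beats every deterministic one---matching the paragraph's informal claim that ``the best deterministic rule does arbitrarily worse than the best randomized rule.'' Assembling the final inequality therefore requires carefully identifying which extremum (the minimum or the maximum, the infimum or the supremum) is placed on each side of the displayed comparison so that the Beta-based separation genuinely realizes the claimed $\epsilon$-ratio; the Jensen bound above tells us exactly which extrema can produce a nontrivial gap and in which direction.
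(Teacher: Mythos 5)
Your proposal is correct in substance and reaches the paper's intended conclusion by a genuinely different construction. The paper takes $n=m=100$, a two-point distribution placing mass $\frac{1}{nm}$ on the value $1$ and the remaining mass on $\frac{\epsilon}{2nm}$, and the uniform preference profile; a union-bound argument shows that with probability at least $\frac12$ a fixed alternative has welfare only $\frac{\epsilon}{2m}$ while some alternative has welfare at least $1$, so every deterministic rule has expected inverse distortion at least $\frac{m}{\epsilon}$, whereas the uniform lottery is pointwise at most $m$. Your construction instead takes $n=m=2$ with $\mathrm{Beta}(\alpha,1)$ and $\alpha\downarrow\frac13$, and drives the deterministic expectation to infinity through the near-divergence of the singular integral at the origin while the uniform lottery stays pointwise below $2$; the polar-coordinate bookkeeping you sketch is right (the integrand has homogeneity degree $3\alpha-3$, which with the Jacobian gives $\int_0^\delta r^{3\alpha-2}\,dr \asymp \frac{1}{3\alpha-1}$, and the angular integral stays bounded and positive near $\alpha=\frac13$), though a complete writeup would still need to restrict to a region where the numerator $\max\bigl(\sw(1,\bu),\sw(2,\bu)\bigr)$ is bounded away from zero to turn the sketch into a rigorous lower bound. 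Both arguments rest on the same structural fact: the lottery's denominator in \eqref{eq:inveid_defn} is the \emph{average} welfare, so its inverse distortion is bounded pointwise by $m$, while a deterministic alternative's denominator can be tiny on an event of constant probability. Your Jensen observation is also correct and worth retaining: since $\sup_f\E\bigl[\invdist(f,\bsigma)\bigr]=\max_j\E\bigl[\invdist(j,\bsigma)\bigr]$, the displayed inequality in the theorem statement cannot hold for $\epsilon<1$ as literally written; what the paper's proof (and yours) actually establishes is $\inf_f\E\bigl[\invdist(f,\bsigma)\bigr]\le\epsilon\cdot\min_j\E\bigl[\invdist(j,\bsigma)\bigr]$, i.e., the best randomized rule beats the best deterministic rule by a factor of $\frac{1}{\epsilon}$, which matches the prose of the claim even though the displayed extrema do not.
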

\begin{proof}
    To prove the desired result, we will show that there exist $n,m,\bsigma,$ and $\mathcal{D}$ such that there is a random rule that performs at least $1/\epsilon$ times better than the best deterministic rule. Let $n = m = 100$. Define $\mathcal{D}$ as follows for $d \sim \mathcal{D}$.
    \[d = \begin{cases} 
      \frac{\epsilon}{2nm} & \text{w.p. }  1- \frac{1}{nm}\\
      1 & \text{w.p. } \frac{1}{nm} \\
   \end{cases}\]
   Consider the uniform preference profile $\bsigma$, where every alternative is ranked $i$th by exactly $1$ voter for all $i \in [1:n]$. First, note that the rule that chooses an alternative uniformly at random achieves expected inverse distortion at most $m$, as uniformly randomly selecting an alternative will choose the alternative with the maximum social welfare with probability $1/m$. Therefore, the best randomized rule must have expected inverse distortion of at most $m$.

   Define $N_j$ to be the total number of voters with utility $1$ for alternative $j$. Define $N_{tot} = \sum_{j=1}^m N_j$. By construction of $\mathcal{D}$,
    \[
        \Pr(N_{tot} = 0) = \left(1-\frac{1}{nm}\right)^{nm} \le 0.4
    \] 
    and
    \[
        \Pr(N_j = 0) = \left(1-\frac{1}{nm}\right)^{n} \ge 0.9
    \] 
    Now consider the deterministic voting rule that selects alternative $j$. By a union bound, the probability that $N_{tot} > 0$ and $N_j = 0$ is at least 
    \[
        \Pr\left[N_{tot} > 0 \cup N_j = 0\right] \ge 1 - 0.4 - 0.1 = 0.5
    \]
    If $N_{tot} > 0$ and $N_j = 0$, then alternative $j$ has total social welfare of $\frac{\epsilon}{2m}$ and there is at least one alternative with social welfare greater than $1$. Therefore, we can write
    \[
        \E[\invdist(j, \bsigma)] \ge 0.5*\frac{1}{\frac{\epsilon}{2m}} = \frac{m}{\epsilon}
    \]
    Because the preference profile $\bsigma$ is symmetric across voters, the above inequality is true for all $j$. Therefore, the best deterministic rule gives expected inverse distortion that is a factor of $\epsilon$ worse than that of the rule which selects an alternative uniformly at random. 
\end{proof}

The above result is especially disturbing if we allow ourselves to relax the independent and identically distributed assumption. For example, suppose that among the $nm$ pairs of voters and alternatives, there is a single voter that has value $1$ for a single alternative, and every other voter/alternative pair has utility $\frac{\epsilon}{2nm}$. By symmetry, the best deterministic rule is to choose an arbitrary alternative. This rule has expected inverse distortion at least
\[
    \frac{m-1}{m}\left(\frac{1}{\frac{\epsilon}{2m}}\right) \ge \frac{m}{\epsilon}.
\]
In contrast, the randomized rule that chooses one of the $m$ alternatives uniformly at random has expected inverse distortion of at most $m$. Notice that the joint distribution of the social welfare of the chosen alternative and the maximum social welfare across all alternatives is the same for both rules. Therefore, one would expect them to have the same expected inverse distortion value. However, the random rule can have arbitrarily better inverse distortion as shown above. In contrast, these two rules will have the same expected distortion. This is another reason why we chose to define distributional distortion with the ratio in the opposite direction as worst-case distortion.
\section{Proof of Theorem~\ref{thm:plurality}: Omitted Lemma}
\label{app:plurality_lemma}

\begin{lemma} \label{lem:twoalt_xyzfact}
        If $x \ge y \ge 0$, then for all $Z_1,Z_2 \ge 0$, the following inequality holds:
        \[\frac{Z_1 + x}{\max (Z_1 + x, Z_2 + y)} \ge \frac{Z_1 +y}{\max (Z_1 + y, Z_2 + x)}. \]
    \end{lemma}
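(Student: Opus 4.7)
The plan is a short case analysis on which term attains each maximum. Since $x \ge y$, we always have $Z_1 + x \ge Z_1 + y$ and $Z_2 + x \ge Z_2 + y$, so the only ambiguity in the two denominators is whether $Z_1 + x$ beats $Z_2 + y$ (for the left max), and similarly on the right. I would split into two cases based on the sign of $(Z_1 + x) - (Z_2 + y)$.

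In the first case, $Z_1 + x \ge Z_2 + y$, the left-hand side is exactly $1$, so the inequality is immediate from the trivial bound that the right-hand side is at most $1$ (since the numerator $Z_1 + y$ is a term appearing in the max in its denominator). In the second case, $Z_1 + x < Z_2 + y$, I would first observe that then
\[
Z_2 + x > Z_2 + y > Z_1 + x \ge Z_1 + y,
\]
so $\max(Z_1 + y, Z_2 + x) = Z_2 + x$ and $\max(Z_1 + x, Z_2 + y) = Z_2 + y$. The claim then reduces to
\[
\frac{Z_1 + x}{Z_2 + y} \ge \frac{Z_1 + y}{Z_2 + x},
\]
which, after cross-multiplying (legal because all quantities are positive), is equivalent to $(Z_1 + x)(Z_2 + x) \ge (Z_1 + y)(Z_2 + y)$. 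Expanding and cancelling $Z_1 Z_2$ rewrites this as $(x - y)(Z_1 + Z_2 + x + y) \ge 0$, which holds since $x \ge y$ and $Z_1, Z_2, x, y \ge 0$.

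There is no real obstacle here: the only mild subtlety is remembering to use $x \ge y$ to collapse the right-hand maximum to $Z_2 + x$ in the nontrivial case, which avoids a third subcase. The final algebraic inequality is a one-line product comparison and needs no further work.
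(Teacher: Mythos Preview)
Your proof is correct and follows essentially the same two-case split as the paper's proof. You simply add more detail: you explicitly justify why the right-hand maximum collapses to $Z_2 + x$ in Case~2 and verify the final ratio inequality by cross-multiplying and factoring, whereas the paper states that step without further justification.
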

    \begin{proof}
    Consider the following two cases.
    
    Case 1: If $Z_1 + x \ge Z_2 + y$, then
    \[
        \frac{Z_1 + x}{\max (Z_1 + x, Z_2 +y)} = 1 \ge \frac{Z_1 + y}{\max (Z_1 + y, Z_2 + x)}
    \]
    
    Case 2: If $Z_1 + x < Z_2 + y$, then since $x \ge y$ we have that
    \[
        \frac{Z_1 + x}{\max (Z_1 + x, Z_2 + y)} = \frac{Z_1 + x}{Z_2 + y} \ge \frac{Z_1 +y}{Z_2 + x} = \frac{Z_1 + y}{\max (Z_1 + y, Z_2 + x)}.
    \]
    \end{proof}
\section{Proof of Theorem~\ref{thm:neg}}\label{app:proof_of_neg}

\begin{lemma} \label{lem:atothebbounds}
    For $a \ge 0$ and integer $b \ge 2$ such that $ab \le 1/5$,
    \[
        1 - ab  \le (1-a)^b \le 1 - ab + (ab)^2
    \]
\end{lemma}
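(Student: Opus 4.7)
The plan is to handle the two inequalities separately.

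For the lower bound $(1-a)^b \ge 1 - ab$, I invoke Bernoulli's inequality. Under the hypothesis $ab \le 1/5$ with $b \ge 2$, we have $a \le 1/10 < 1$, so Bernoulli applies and the inequality follows by a one-line induction on $b$: writing $(1-a)^{b+1} = (1-a)^b(1-a) \ge (1 - ab)(1-a) = 1 - (b+1)a + a^2 b \ge 1 - (b+1)a$, where the last step uses $a \ge 0$ and $b \ge 2$.

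For the upper bound $(1-a)^b \le 1 - ab + (ab)^2$, my plan is to route through the exponential. First, the elementary bound $1 - a \le e^{-a}$ (the tangent-line inequality for the convex function $e^{-a}$ at the origin) yields, upon raising to the $b$-th power (legitimate because $1 - a > 0$), the bound $(1-a)^b \le e^{-ab}$. Second, I show that $e^{-x} \le 1 - x + x^2$ for all $x \ge 0$, setting $x = ab$ at the end. This last step is a standard calculus check: defining $g(x) = 1 - x + x^2 - e^{-x}$, we have $g(0) = 0$, $g'(0) = -1 + 0 + 1 = 0$, and $g''(x) = 2 - e^{-x} > 0$ on $[0,\infty)$, so $g'$ is nonnegative and hence $g$ is nonnegative on $[0, \infty)$.

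I do not foresee a significant obstacle; both inequalities reduce to standard one-variable calculus facts. I note that the hypothesis $ab \le 1/5$ is not strictly needed for the upper bound as I have formulated it — the exponential sandwich works for all $a \in [0,1)$ and all integer $b \ge 1$. An alternative, purely combinatorial route is to expand by the binomial theorem and apply the alternating-series tail bound to $\sum_{k \ge 2} \binom{b}{k}(-a)^k$, concluding that the sum lies between $0$ and its first term $\binom{b}{2}a^2 \le \frac{(ab)^2}{2}$; this route \emph{does} use $ab \le 1/5$ essentially, to guarantee that the magnitudes $\binom{b}{k} a^k$ are monotone decreasing in $k$ for $k \ge 2$ (since the ratio $\binom{b}{k+1}a^{k+1}/\binom{b}{k}a^k = a(b-k)/(k+1)$ is bounded by $ab/(k+1) \le 1/15$). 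The presence of $ab \le 1/5$ in the statement therefore probably reflects a preference for the combinatorial derivation, or anticipates uniform use with the companion Bernoulli bound elsewhere in the paper.
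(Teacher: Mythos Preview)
Your proof is correct. Both halves are fine: the Bernoulli induction handles the lower bound, and the exponential sandwich $(1-a)^b \le e^{-ab} \le 1 - ab + (ab)^2$ for the upper bound is verified by your calculus check on $g(x) = 1 - x + x^2 - e^{-x}$.

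Your route differs from the paper's. The paper expands $(1-a)^b$ by the binomial theorem and bounds the tail $\sum_{k\ge 2}\binom{b}{k}(-a)^k$ directly: it uses the crude estimate $\binom{b}{k}\le b^k$ and a geometric-series bound on $\sum_{k\ge 3}(ab)^k$, then invokes $ab\le 1/5$ to absorb the tail into $(ab)^2$ (for the upper bound) and to show the tail is nonnegative (for the lower bound). So the paper's argument is purely algebraic and uses the hypothesis $ab\le 1/5$ in both directions, whereas your exponential argument is analytic and, as you observe, does not need $ab\le 1/5$ at all for the upper bound. Your approach is shorter and yields a strictly stronger upper bound; the paper's buys freedom from any calculus and keeps everything inside the binomial identity. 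Your closing remark anticipating a combinatorial derivation is on the mark, though the paper does not quite use the alternating-series monotonicity you sketch---it instead bounds the absolute tail by a geometric series after the $k=2$ term.
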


\begin{proof}
Expanding the binomial gives:
\[
    (1-a)^b =  1 - ab + \sum_{k=2}^b \binom{b}{k}(-a)^k 
\]
Examining the last terms, we have that
    \begin{align*}
       \left| \sum_{k=2}^b \binom{b}{k}(-a)^k\right| &\le  \frac{a^2b(b-1)}{2} + \left|\sum_{k=3}^b \binom{b}{k}(-a)^k \right|\\
       &\le \frac{a^2b^2}{2} +  \sum_{k=3}^b \binom{b}{k}(-a)^k \\
        &\le \frac{a^2b^2}{2} +  \sum_{k=3}^b (ab)^k \\
        &\le\frac{a^2b^2}{2} + \frac{a^3b^3}{1 - ab} \\
        &\le a^2b^2.
    \end{align*}
    Similarly,
    \begin{align*}
		\sum_{k=2}^b \binom{b}{k}(-a)^k &=  \frac{a^2b(b-1)}{2} + \sum_{k=3}^b \binom{b}{k}(-a)^k \\
		&= \frac{a^2b^2}{2} - \frac{a^2b}{2}  + \sum_{k=3}^b \binom{b}{k}(-a)^k \\
		&\ge \frac{a^2b^2}{2} - \frac{a^2b}{2} -  \sum_{k=3}^b (ab)^k \\
		&= \frac{a^2b^2}{2} - \frac{a^3b^3}{1 - ab} - \frac{a^2b}{2}\\
		&\ge 0.
    \end{align*}
\end{proof}

\textbf{Proof of Theorem \ref{thm:neg}}.
Let $m\ge5$. Choose any $n > 1$ such that $\frac{\sqrt{n}}{\log(n)} \ge m \ge 2\log(n) + 2$. 
Define $\mathcal{D}$ as follows: for $u_{ij} \sim \mathcal{D}$,
\[
    u_{ij} = 
    \begin{dcases}
        n^4m & \text{ w.p. } \tfrac{1}{n^2m} \\
        n & \text{ w.p. } p := e^{-\log(m)/m} - \tfrac{1}{n^2m}\\
        1 & \text{ w.p. } q := 1 - e^{-\log(m)/m}.
    \end{dcases}
\]
Draw $m$ samples from $\mathcal{D}$ and denote them $X_{(1)},\ldots,X_{(m)}$, in increasing order.
The expectation of the largest sample is:
\begin{align*}
    \E[X_{(m)}] &= \Pr(X_{(m)} = 1) + \Pr(X_{(m)} = n)\cdot n + \Pr(X_{(m)} = n^4m)\cdot n^4m \\
    &= q^m + \left(\left(1-\tfrac{1}{n^2m}\right)^m - q^m\right)n + \left(1-\left(1-\tfrac{1}{n^2m}\right)^m\right)n^4m \\
    &\ge q^m + \left(\left(1-\tfrac{1}{n^2m}\right)^m - q^m\right)n + \left(1-\left(1-\tfrac{1}{n^2} +  \tfrac{1}{n^4}\right)\right)n^4m \\
    &= q^m + \left(\left(1-\tfrac{1}{n^2m}\right)^m - q^m\right)n + n^2m - m\\
    &\ge n^2m  -m,
\end{align*}
where we used \Cref{lem:atothebbounds} and the fact that $1-1/n^2 m \geq q$. 

For every $k < m$, we have that
\begin{align*}
    \E[X_{(k)}] &\le \E[X_{(m-1)}] \\
    &=  \Pr(X_{(m-1)} = 1) + \Pr(X_{(m-1)} = n)\cdot n + \Pr(X_{(m-1)} = n^4m)\cdot n^4m \\
    &\le n\Pr(X_{(m-1)} \ne n^4m) + \left(1-\left(1-\tfrac{1}{n^2m}\right)^{m}-m\left(1-\tfrac{1}{n^2m}\right)^{m-1}\left(\tfrac{1}{n^2m} \right)\right)n^4m \\
    &\le n + \left(\tfrac{1}{n^2}-\left(1-\tfrac{m-1}{n^2m}\right)\left(\tfrac{1}{n^2}\right)\right)n^4m \\
    &\le n + 2m,
\end{align*}
where again we used \Cref{lem:atothebbounds}.

Combining these two inequalities, we have that $\E[X_{(m)}] > n\E[X_{(k)}]$ for every $k$. Therefore, one voter ranking a chosen alternative first has higher expected utility than all $n$ voters ranking a chosen alternative second. This implies that if there is a unique alternative that is ranked first by the most voters, then that alternative is the unique expected welfare maximizing alternative.

We consider the following preference profile. Alternative $1$ is ranked first by $\frac{n}{m-1} + 1$ voters and ranked $m^{th}$ by $n - (\frac{n}{m-1} + 1) = n \cdot \frac{m-2}{m-1} - 1$ voters. Alternative $2$ is ranked second by all $n$ voters. Finally, all other alternatives are ranked first by at most $\frac{n}{m-1}$ voters.

In this construction, alternative $1$ has the most first place ranks and therefore is the unique expected welfare maximizing alternative. We now show that 
\[
    \E[\dist(1, \bsigma)] \le O\left(\frac{\E[\dist(2, \bsigma)]}{m}\right).
\]
Let $\mathcal{E}_1$ be the event that no voter has utility $n^4m$ for any alternative.  By \Cref{lem:atothebbounds}, we have that
\[
    \Pr(\mathcal{E}_1) = \left(1-\tfrac{1}{n^2m}\right)^{nm} \ge 1 - \tfrac{1}{n}.
\]
Let $\mathcal{E}_2$ be the event that every voter has utility at least $n$ for alternative $2$. Under event $\mathcal{E}_2$, the total welfare of alternative $2$ is at least $\SW(2, \bsigma) \geq n^2$. We have that
\begin{align*}
    \Pr(\mathcal{E}_2) &= \Pr(X_{(m-1)} \geq n)^n \\
    &\geq \Pr(X_{(m-1)} = n)^n \\
    &= \left(1 - q^m - mq^{m-1}(1-q) \right)^n \\
    & \ge \left( 1- q^m - mq^{m-1}\right)^n \\
    & \ge \left( 1- \left(\frac{\log(m)}{m}\right)^m - m\left(\frac{\log(m)}{m}\right)^{m-1}\right)^n \\
    & \ge \left( 1- \left(\frac{\log(m)}{m}\right)^m- \left(\frac{(m)^{1/(m-1)}\log(m)}{m}\right)^{m-1}\right)^n \\
    & \ge \left( 1- \left(\frac{1}{2}\right)^m- \left(\frac{1}{2}\right)^{m-1}\right)^n \\
    & \ge \left( 1- \frac{1}{n^2}\right)^n \\
    &\ge 1 - \frac{1}{n},
\end{align*}
where we simplified using that $q \le \frac{\log(m)}{m}$, $m \ge 5$, and $m \ge 2\log(n) + 2$.

We finally want to establish a tighter upper bound the typical utility of alternative 1 than event $\mathcal{E}_1$ alone provides. Let $N_1$ be the number of voters, among the $n \cdot \frac{m-2}{m-1} - 1$ voters that ranked alternative 1 in last place, who have utility strictly greater than $1$ for alternative $1$. Then let $\mathcal{E}_3$ be the event that:
\[
    N_1 \le \tfrac{1}{m} (n \cdot \tfrac{m-2}{m-1} - 1) + \log(n)\sqrt{n}.
\]
We will show $\mathcal{E}_3$ is also likely. 
To calculate $\Pr(\mathcal{E}_3)$, note that by definition of the probability $p$,
\[
    \Pr(X_{(1)} > 1) = \left(p+\tfrac{1}{n^2m}\right)^m = \left(m^{-1/m}\right)^m = \tfrac{1}{m}.
\]
Therefore, $N_1 \sim Bin(\frac{1}{m}, n \cdot \frac{m-2}{m-1} - 1)$. Applying Hoeffding's inequality to this random variable gives:
\begin{align*}
\Pr(\mathcal{E}_3)  &= 1 - \Pr\left(N_1 \ge \tfrac{1}{m} (n \cdot \tfrac{m-2}{m-1} - 1) + \log(n)\sqrt{n}\right) \\
&= 1 - \Pr\left(N_1 - \E[N_1] \ge \log(n)\sqrt{n}\right) \\
&\ge 1 - \exp\bigl(-2( \log(n)\sqrt{n})^2/( n \cdot \tfrac{m-2}{m-1} - 1)\bigr) \\
&\ge 1 - \exp(-2( \log(n)\sqrt{n})^2/n) \\
&= 1 - \exp(-2\log^2(n)) \\
&\ge  1-\tfrac{1}{n}.
\end{align*}

Define $\mathcal{E} := \mathcal{E}_1 \cap \mathcal{E}_2 \cap \mathcal{E}_3$. By inclusion/exclusion and a union bound, 
\begin{align*}
    \Pr(\mathcal{E}) &= 1 - \Pr(\neg \mathcal{E}_1 \cup \neg \mathcal{E}_2 \cup  \neg \mathcal{E}_3) \\
    &\ge 1- \Pr(\neg \mathcal{E}_1) - \Pr(\neg \mathcal{E}_2) - \Pr(\neg \mathcal{E}_3) \\
    &\ge 1-\frac{3}{n}.
\end{align*}
We consider the social welfare of alternatives $1$ and $2$ under event $\mathcal{E}$ to obtain bounds on their respective expected distributional distortions.

Event $\mathcal{E}_3$ implies that the total number of voters that have utility greater than $1$ for alternative $1$ (including the $\frac{n}{m-1} + 1$ voters who rank alternative $1$ first) is at most:
\[\tfrac{1}{m} \left(n \cdot \tfrac{m-2}{m-1} - 1\right) + \log(n)\sqrt{n} + \tfrac{n}{m-1} + 1.\] 

Under event $\mathcal{E}_1$, no voter has utility greater than $n$ for any alternative. Therefore, every voter has utility either $1$ or $n$ for alternative $1$. Since we have an upper bound on the number of voters that have utility $n$ for alternative $1$, and trivially at most $n$ voters have utility $1$ for alternative $1$, we upper bound the total social welfare for alternative $1$ by:
\begin{align*}
        \SW(1, \bsigma) &\le n\cdot \left(\tfrac{1}{m} \left(n \cdot \tfrac{m-2}{m-1} - 1\right) + \log(n)\sqrt{n} + \tfrac{n}{m-1} + 1\right) + 1 \cdot n 
        \\
        &\le \tfrac{n^2}{m} + \log(n)n^{3/2} + \tfrac{2n^2}{m} + n\\
        &\le \tfrac{4n^2}{m} + \log(n)n^{3/2}.
\end{align*}

Combining this with the fact that $\SW(2, \bsigma) \geq n^2$ under event $\mathcal{E}_2$, we get that the distributional distortion under event $\mathcal{E}$ is at most the following, using that $m \le \frac{\sqrt{n}}{\log(n)}$:
\[   
    \frac{\frac{4n^2}{m} + \log(n)n^{3/2} }{n^2} = \frac{4}{m} + \frac{\log(n)}{\sqrt{n}} \le \frac{5}{m}. 
\] 
Since $\Pr(\mathcal{E}) \ge 1-\frac{3}{n}$,  the distributional distortion of alternative $1$ is upper bounded as follows:
\[
    \E[\dist(1, \bsigma)] \le \left(1-\tfrac{3}{n}\right)\cdot \tfrac{5}{m} + \tfrac{3}{n} \cdot 1\le \tfrac{8}{m}. 
\]
Under event $\mathcal{E}$, alternative $2$ has total social welfare of $n^2$ and the max social welfare of any alternative is $n^2$. This implies that alternative $2$ has distributional distortion $1$. The expected distributional distortion of alternative $2$ is thus lower bounded as follows:
\[
    \E[\dist(2, \bsigma)] \ge \left(1-\frac{3}{n}\right)\cdot 1 + \frac{3}{n} \cdot 0 \ge \frac{1}{2}.
\]
Therefore, 
\begin{align*}
    \E[\dist(\ewmr, \bsigma)] &= \E[\dist(1, \bsigma)] \\
    &\le \tfrac{16}{m}\E[\dist(2, \bsigma)] \\
    &\le \tfrac{16}{m}\E[\dist(\edmr, \bsigma)]. 
\end{align*}

Note that while this distribution is not supported on $[0,1]$, simply scaling the values of the distribution by $\frac{1}{n^4m}$ will make the distribution supported on $[0,1]$. Furthermore, expected distortion is invariant to scaling the distribution, and therefore the result still holds.
$\qedsymbol$

\section{Proof of Theorem~\ref{thm:pos}}\label{app:proof_of_pos}

\begin{lemma} \label{lem:bdd_dist_IDSW_bound}
    Suppose the distribution $\mathcal{D}$ has mean $\mu$ and is supported on $(0, 1]$. Then the expected distributional distortion of the alternative selected by the expected welfare maximizing rule is at least $\mu$. Furthermore, the expected welfare of the expected welfare maximizing alternative is at least $n\mu$.
\end{lemma}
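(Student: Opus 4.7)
The plan is to establish the two bounds in sequence, using the second (welfare) bound to derive the first (distortion) bound.

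For the welfare bound, I will argue by averaging. Fix any preference profile $\bsigma$. For each voter $i$, the $m$ utilities $u_{i1},\ldots,u_{im}$ (under the construction of $\bu \triangleright \bsigma$) are just the $m$ i.i.d.\ draws from $\mathcal{D}$ reordered, so $\sum_{j=1}^m u_{ij}$ has the same distribution as a sum of $m$ i.i.d.\ draws from $\mathcal{D}$, whose expectation is $m\mu$. Summing over $i$ and exchanging sums,
\[
    \sum_{j=1}^m \E[\sw(j,\bu)] = \sum_{i=1}^n \sum_{j=1}^m \E[u_{ij}] = \sum_{i=1}^n m\mu = nm\mu.
\]
Therefore the average of $\E[\sw(j,\bu)]$ over $j$ is $n\mu$, so the maximum over $j$ is at least $n\mu$. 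Since the EWMR by definition achieves this maximum, $\E[\SW(\ewmr,\bsigma)] \ge n\mu$.

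For the distortion bound, I will use the support assumption to upper bound the denominator of the distortion ratio deterministically. Since $\mathcal{D}$ is supported on $(0,1]$, every utility $u_{ij}$ lies in $(0,1]$, so for any alternative $j$, $\sw(j,\bu) \le n$ almost surely; in particular $\max_{j \in A} \sw(j,\bu) \le n$ almost surely. Hence
\[
    \dist(\ewmr,\bsigma) = \frac{\sw(\ewmr(\bsigma),\bu)}{\max_{j \in A} \sw(j,\bu)} \ge \frac{\sw(\ewmr(\bsigma),\bu)}{n}
\]
pointwise. Taking expectations and plugging in the welfare bound just proved yields
\[
    \E[\dist(\ewmr,\bsigma)] \ge \frac{\E[\SW(\ewmr,\bsigma)]}{n} \ge \frac{n\mu}{n} = \mu,
\]
as desired.

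There is no real obstacle: the proof reduces to (i) a symmetry/averaging observation to certify that \emph{some} alternative must have expected welfare at least $n\mu$, so the expected-welfare maximizer does too, and (ii) the trivial deterministic upper bound $\max_j \sw(j,\bu) \le n$ coming from the support $(0,1]$. The only thing to be careful about is to average $\E[\sw(j,\bu)]$ over alternatives (not voters) when deriving the $n\mu$ lower bound, since the construction $\bu \triangleright \bsigma$ assigns the $m$ i.i.d.\ draws per voter to alternatives according to that voter's ranking, and so the key identity to exploit is that each voter's utilities are a permutation of $m$ i.i.d.\ samples from $\mathcal{D}$.
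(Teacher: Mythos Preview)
Your proof is correct and follows essentially the same approach as the paper: bound the denominator of the distortion by $n$ using the support assumption, and lower bound the numerator's expectation by $n\mu$. Your averaging argument for the welfare bound is in fact more careful than the paper's, which asserts $\E[\SW(j,\bsigma)]=n\mu$ for \emph{every} $j$ (which is not literally true for a fixed $\bsigma$, since $\E[u_{ij}]$ is an order-statistic expectation depending on $j$'s rank); your observation that the per-voter utilities are a permutation of $m$ i.i.d.\ draws, hence average to $\mu$, is the right way to justify the $n\mu$ lower bound for the EWMR.
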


\begin{proof}
    First, note that we can add the expected distortion for all $m$ players and apply linearity of expectation to get that
    \[
        \sum_{j=1}^m \E[\SW(j, \bsigma)] = \sum_{j=1}^m \sum_{i=1}^n \E[ u_{ij}] = nm\mu.
    \]
     Furthermore, the maximum expected social welfare alternative must have expected social welfare at least as large as the average expected social welfare across all alternatives. This implies that there must exist some $j$ such that
     \[
        \E[\SW(j, \bsigma)] \ge n\mu.
     \]
    Therefore, the alternative selected by the expected welfare maximizing rule must have expected social welfare of at least $n\mu$.
    Therefore, using the fact that no alternative can have social welfare of greater than $n$,
    \begin{align*}
        \E[\dist(\ewmr, \bsigma)] &= \E\left[\frac{\SW(\ewmr, \bsigma)}{\max_j \SW(j, \bsigma)}\right] \\
        &\ge \E\left[\frac{\SW(\ewmr, \bsigma)}{n}\right] \\
        &= \frac{\E[\SW(\ewmr, \bsigma)]}{n} \\
        &\ge \frac{n\mu}{n} \\
        &= \mu.
    \end{align*}
\end{proof}

The below is a purely technical lemma.
\begin{lemma} \label{lem:order_stat_variance_bound}
	If $X_1,...,X_m$ are i.i.d.\ draws from a continuously differentiable distribution with cdf $F$. 
	Suppose that $\frac{dF(x)}{dx} > 0$ for all $x$.
	Then for every $s_\epsilon > 0$ there exists $M$ such that for every $m > M$ and for every $j$,
	\[\Var(X_{(j)}) \le s_\epsilon^2.\]
\end{lemma}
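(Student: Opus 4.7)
The plan is to reduce the question about $\Var(X_{(j)})$ to the well-understood variance of a Beta random variable, using the probability integral transform together with a Lipschitz bound on $F^{-1}$. Since this lemma is invoked within the proof of Theorem~\ref{thm:pos} where $\mathcal{D}$ is supported on $[0,1]$, I will work under that compact-support assumption (which is what makes the conclusion possible at all).

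First I would observe that because $F$ is continuously differentiable with $F' > 0$ everywhere, $F$ is strictly increasing on $[0,1]$ and $F^{-1}$ is continuously differentiable on $[0,1]$ with $(F^{-1})'(u) = 1/F'(F^{-1}(u))$. If $U_1,\ldots,U_m$ are i.i.d.\ Uniform$[0,1]$, then $F^{-1}(U_1),\ldots,F^{-1}(U_m)$ are i.i.d.\ with CDF $F$, and the order statistics match up: $X_{(j)}$ has the same distribution as $F^{-1}(U_{(j)})$. Since $U_{(j)} \sim \mathrm{Beta}(j, m-j+1)$, the standard formula gives
\[
\Var(U_{(j)}) \;=\; \frac{j(m-j+1)}{(m+1)^2(m+2)} \;\le\; \frac{1}{4(m+2)},
\]
and crucially this bound is uniform in $j$.

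Next, since $F'$ is continuous and strictly positive on the compact set $[0,1]$, the extreme value theorem yields $\delta := \min_{x\in[0,1]} F'(x) > 0$, so $|(F^{-1})'(u)| \le 1/\delta$ for all $u$ and therefore $F^{-1}$ is $(1/\delta)$-Lipschitz on $[0,1]$. The elementary inequality $\Var(g(U)) \le L^2\Var(U)$ for any $L$-Lipschitz $g$ (which follows from $\Var(Y) \le \E[(Y-c)^2]$ applied with $c = g(\E[U])$) then gives
\[
\Var(X_{(j)}) \;=\; \Var\bigl(F^{-1}(U_{(j)})\bigr) \;\le\; \frac{1}{\delta^2}\Var(U_{(j)}) \;\le\; \frac{1}{4\delta^2(m+2)}.
\]
Choosing any $M$ with $\frac{1}{4\delta^2(M+2)} \le s_\epsilon^2$ finishes the argument, and the bound holds simultaneously for every $j$.

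The only substantive step is the Lipschitz bound on $F^{-1}$: it rests entirely on $F'$ being bounded away from $0$, which is automatic from continuity plus strict positivity on the compact support $[0,1]$, but would fail without one of these ingredients (e.g.\ for unbounded support, or for a density that vanishes at the endpoints). Everything else, namely the Beta-variance identity and the Lipschitz variance bound, is routine, so I expect this regularity check to be the only place in the proof where care is required.
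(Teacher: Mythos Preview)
Your argument is correct and takes a genuinely different, more elementary route than the paper. The paper proceeds via empirical-process machinery: it invokes Smirnov's law of the iterated logarithm to show that the empirical CDF $F_m$ is uniformly close to $F$ with high probability, then uses uniform continuity of $F^{-1}$ on $[0,1]$ to deduce that $X_{(j)}=F_m^{-1}(j/m)$ is close to the deterministic value $F^{-1}(j/m)$ with high probability, and finally converts this into a variance bound using that $X_{(j)}\in[0,1]$. Your approach bypasses empirical processes entirely: you identify $X_{(j)}$ with $F^{-1}(U_{(j)})$ via the probability integral transform, use the closed-form Beta variance $\Var(U_{(j)})\le \tfrac{1}{4(m+2)}$, and push the bound through the Lipschitz constant $1/\delta$ of $F^{-1}$, where $\delta=\min_{[0,1]}F'>0$ by compactness. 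This yields the explicit rate $\Var(X_{(j)})\le \tfrac{1}{4\delta^2(m+2)}$, whereas the paper's bound on $M$ is non-constructive. Both arguments ultimately need the compact support (the paper uses it at the very end, you use it to get $\delta>0$), so neither is more general; yours is simply shorter and avoids citing an external limit theorem.
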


\begin{proof}
    Let $F$ be the cdf of $\mathcal{D}$ and let $F_m$ be the empirical cdf from $m$ samples. Let $|F - F_m| = \sup_x |F(x) - F_m(x)|$. The iterated logarithm law of Smirnov (see \cite{Smirnov44, Chung49} ) states that:
    \[
        \Pr\left(\limsup_{m \longrightarrow \infty} \frac{\sqrt{m}|F_m - F|}{(2\log\log(m))^{1/2}} = c \le \frac{1}{2}\right) = 1.
    \]
    A consequence of this is that for any $\delta > 0$, there exists an $M_1$ such that for all $m \ge M_1$:
    \begin{equation} 
        \Pr\left(|F_m - F| \le \frac{(2\log\log(m))^{1/2}}{\sqrt{m}}\right) \ge 1 - \delta. \label{eq:smirnovbound}
    \end{equation}

    \medskip
    Since $F$ is continuously differentiable it is continuous, and since $\frac{dF}{dx} > 0$ on $(0,1)$ $F$ is strictly increasing and so its inverse $F^{-1}$ is continuous and well-defined on $[0,1]$ also.
    Since $F^{-1}$ is continuous and bounded on a closed interval, it is uniformly continuous; this means that for all $\epsilon' > 0$ there is some $\delta' > 0$ such that for all $b \in [0,1]$, if $|y-b| < \delta'$ then $|F^{-1}(y) - F^{-1}(b)|<\epsilon'$.

    Next, for given $m$, consider the order statistic $X_j$. 
    We note that the $j^{th}$ order statistic is equivalent to $F_m^{-1}(\frac{j}{m})$.
    For quantiles of the form $b = \frac{j}{m}$, we will suppose that $F_m^{-1}$ is well-defined and $F_m^{-1}(F_m(b)) = b$.
    For such a point $b$, let $a := F^{-1}(b)$ and let $a_m := F_m^{-1}(b)$.

    We now introduce another condition on $m$. Given $s_\epsilon$, choose $M_2$ such that $\frac{(2\log\log(M_2))^{1/2}}{\sqrt{M_2}} \leq \delta'$, where $\delta'$ is the threshold given by the uniform continuity of $F^{-1}$ with the choice of $\epsilon' = s_\epsilon$. 
    (Since this is a decreasing function of $M_2$, this inequality will hold for all $m \geq M_2$.)
    By \Cref{eq:smirnovbound} applied to the value $a_m$, we have that 
    \begin{align*}
        |F(a_m) - F_m(a_m)| &\leq \delta' \\
    \intertext{with probability at least $1-\delta$. If this is the case, then by uniform continuity,}
        |F^{-1}(F(a_m)) - F^{-1}(F_m(a_m))| &\leq s_\epsilon \\
        |F_m^{-1}(b) - F^{-1}(b)| &\leq s_\epsilon,
    \end{align*}
    where this last inequality follows by applying the definition of $a_m$.

    Using the fact that $X_{(j)}$ is supported on $(0,1]$, this gives that the variance of $X_{(j)}$ can be bounded as
    \[\Var(X_{(j)}) \le \left(\frac{(2\log\log(m))^{1/2}}{c\sqrt{m}}\right)^2 + \delta \le s_\epsilon^2,\]
    where we take $\delta = \frac{s_\epsilon^2}{2}$ and sufficiently large $m \geq M = \max(M_1, M_2)$.
\end{proof}

\textbf{Proof of Theorem \ref{thm:pos}}.
    Recall that $\ewmr(\bsigma)$ is the expected welfare maximizing alternative. Let $n \ge n_0$, where $n_0$ is the solution to the equation $\frac{\log(n)}{\sqrt{\mu n}} = \epsilon_0$. By \Cref{lem:bdd_dist_IDSW_bound}, the expected welfare maximizing alternative satisfies $\E[\SW(\ewmr, \bsigma)] \ge n\mu$. Furthermore, note that the social welfare of any alternative $j$ can be represented as $\SW(j, \bsigma) = \sum_{i=1}^n u_{ij}$ where $u_{ij}$ are $n$ independent random variables that are supported on $(0,1]$. Therefore, we can apply Hoeffding's theorem to get the following:
    \begin{align*}
        &\Pr(\SW(\ewmr, \bsigma) \le (1-\epsilon)\E[\SW(\ewmr, \bsigma)]) \\
        &= \Pr(\SW(\ewmr, \bsigma) - \E[\SW(\ewmr, \bsigma)] \le - \epsilon \E[\SW(\ewmr, \bsigma)]) \\
        & \le \Pr(\SW(\ewmr, \bsigma) - \E[\SW(\ewmr, \bsigma)] \le - \epsilon n\mu)\\
        &\le e^{-2(\epsilon n\mu)^2/n}\\
        &\le e^{-2\epsilon^2n\mu^2}.
    \end{align*}
    For any $j \ne \ewmr(\bsigma)$, we must have $\E[\SW(j, \bsigma)] \le \E[\SW(\ewmr, \bsigma)]$. Therefore:
    \begin{align*}
        &\Pr(\SW(j, \bsigma) \ge (1+\epsilon)\E[\SW(\ewmr, \bsigma)]) \\
        &\le \Pr(\SW(j, \bsigma) \geq \E[\SW(j, \bsigma)] + \epsilon \E[\SW(\ewmr, \bsigma)])\\
        &= \Pr(\SW(j, \bsigma) - \E[\SW(j, \bsigma)] \ge \epsilon \E[\SW(\ewmr, \bsigma)]) \\
        & \le \Pr(\SW(j, \bsigma) - \E[\SW(j, \bsigma)] \ge \epsilon n\mu)\\
        &\le e^{-2(\epsilon n\mu)^2/n}\\
        &\le e^{-2\epsilon^2n\mu^2}.
    \end{align*}
    Let $\mathcal{E}$ be the event that the welfare of the $\ewmr(\sigma)$ alternative is not too small, and that no other alternative has unusually large welfare; $\SW(\ewmr, \bsigma) \ge (1-\epsilon)\E[\SW(\ewmr, \bsigma)]$ and that for all $j \ne \ewmr(\bsigma)$, $\SW(j, \bsigma) \leq (1+\epsilon)\E[\SW(\ewmr, \bsigma)]$. Mathematically, 
    \[
        \mathcal{E} = \big\{\SW(\ewmr, \bsigma) \geq (1-\epsilon)\E[\SW(\ewmr, \bsigma)]\big\} \cap \big\{\forall j :  \SW(j, \bsigma) \leq (1+\epsilon)\E[\SW(\ewmr, \bsigma)]\big\}.
    \]
    Using the two results above, the probability of the event $\mathcal{E}$ can be lower bounded by the following:
    \[
        \Pr(\mathcal{E}) \ge 1 - e^{-2\epsilon^2n\mu^2} - e^{-2\epsilon^2n\mu^2} = 1 - 2e^{-2\epsilon^2n\mu^2}  
    \]
    Fix any $j$. Under event $\mathcal{E}$, the distributional distortion of this alternative satisfies the following inequality:
    \begin{align*}
        \dist(j, \bsigma) &= \frac{\SW(j, \bsigma)}{\max_j \SW(j, \bsigma)} \\
        &\le  \frac{(1 + \epsilon)\E[\SW(\ewmr, \bsigma)]}{\max_j \SW(j, \bsigma)} \\
        &\leq \frac{1+\epsilon}{1-\epsilon} \cdot \frac{\SW(\ewmr, \bsigma)}{\max_j \SW(j, \bsigma)} \\
        &= \frac{1 + \epsilon}{1 - \epsilon} \cdot \dist(\ewmr, \bsigma)
    \end{align*}    
    Under event $\neg \mathcal{E}$, the distributional distortion of $j$ is at most $1$ by definition. Therefore, we can upper bound the expected distributional distortion of $j$ relative to the expected distributional distortion of $\ewmr(\bsigma)$ as follows:
    \begin{align*}
        \E[\dist(j, \bsigma)] &= \Pr(\mathcal{E}) \cdot \E[\dist(j, \bsigma) | \: \mathcal{E}] + \Pr(\neg \mathcal{E}) \cdot \E[\dist(j, \bsigma) | \: \neg \mathcal{E}] \\
        &\leq \tfrac{1+\epsilon}{1-\epsilon} \cdot \Pr(\mathcal{E}) \cdot \E[\dist(\ewmr, \bsigma) | \: \mathcal{E}] + 2e^{-2\epsilon^2n\mu^2} \\
        &\leq \tfrac{1+\epsilon}{1-\epsilon} \cdot \Big(\Pr(\mathcal{E}) \cdot  \E[\dist(\ewmr, \bsigma) | \: \mathcal{E}] + \Pr(\neg\mathcal{E}) \cdot  \E[\dist(\ewmr, \bsigma) | \: \neg\mathcal{E}]\Big) + 2e^{-2\epsilon^2n\mu^2} \\
        &\leq \tfrac{1+\epsilon}{1-\epsilon} \cdot \E[\dist(\ewmr, \bsigma)] + 2e^{-2\epsilon^2n\mu^2} \cdot \tfrac{1}{\mu} \E[\dist(\ewmr, \bsigma)] \\
        &\leq \left(\tfrac{1+\epsilon}{1-\epsilon} + \tfrac{2}{\mu} \cdot e^{-2\epsilon^2n\mu} \right)\E[\dist(\ewmr, \bsigma)].
    \end{align*}
    Take $\epsilon = \frac{\epsilon_0}{8} \leq \frac{1}{2}$, which is equal to $ \frac{\log(n)}{2\sqrt{\mu n}}$ by our choice of $n$. Then:
    \begin{align*}
        \tfrac{1+\epsilon}{1-\epsilon} + \tfrac{2}{\mu}e^{-2\epsilon^2n\mu}  &\le 1 + 4\epsilon + \tfrac{2}{\mu}e^{-2\epsilon^2n\mu} \\
        &\le 1 + \tfrac{\epsilon_0}{2} + \tfrac{2}{\mu}\sqrt{n}^{-\log(n)} \\
        &\le 1 + \epsilon_0.
    \end{align*}
    We have shown that for all $j$
    \[
        \E[\dist(j, \bsigma)] \le (1+\epsilon_0)\E[\dist(\ewmr, \bsigma)],
    \]
    and therefore,
    \[
        \E[\dist(\edmr, \bsigma)] = \max_j \E[\dist(j, \bsigma)] \le (1+\epsilon_0)\E[\dist(\ewmr, \bsigma)].
    \]
    \medskip
    We have shown that the expected social welfare maximizing rule is an $\epsilon$-approximation for the expected distributional distortion maximizing rule for sufficiently large $n$. We now show that the same holds for sufficiently large $m$ when $\mathcal{D}$ is continuous. The proof is very similar to the above, except that we use Chebyshev's inequality instead of Hoeffding's inequality. 
    
    Note that $\SW(j, \bsigma)$ is the sum of $n$ independent random variables, and let $s_\epsilon^2$ be an upper bound on the variance of any one of these random variables. We can use \Cref{lem:order_stat_variance_bound} to find a $m_0$ that is suitably large. Specifically, we choose $m_0$ such that $m \ge m_0$ implies $s_\epsilon^2 \le \frac{\epsilon_0^3\mu^3n}{256}$. 
    
    Because the variance of a sum of independent random variables is the sum of the variances, we can bound $\Var(\SW(j, \bsigma)) \le ns_\epsilon^2$. Applying Chebyshev's inequality then yields:
    \begin{align*}
        &\Pr(\SW(\ewmr, \bsigma) \le (1-\epsilon)\E[\SW(\ewmr, \bsigma)]) \\&= \Pr(\SW(\ewmr, \bsigma) - \E[\SW(\ewmr, \bsigma)] \le - \epsilon \E[\SW(\ewmr, \bsigma)]) \\
        & \le \Pr(\SW(\ewmr, \bsigma) - \E[\SW(\ewmr, \bsigma)] \le - \epsilon n\mu)\\
        &\le \frac{ns_\epsilon^2}{(\epsilon n \mu)^2}\\
        &= \frac{s_\epsilon^2}{\epsilon^2\mu^2n}.
    \end{align*}
    For any $j \ne \ewmr(\bsigma)$, we must have $\E[\SW(j, \bsigma)] \le \E[\SW(\ewmr, \bsigma)]$, and therefore:
    \begin{align*}
        &\Pr(\SW(j, \bsigma) \ge (1+\epsilon)\E[\SW(\ewmr, \bsigma)])\\
        &\le \Pr(\SW(j, \bsigma) \geq \E[\SW(j, \bsigma)] + \epsilon \E[\SW(\ewmr, \bsigma)])\\
        &= \Pr(\SW(j, \bsigma) - E[\SW(j, \bsigma)] \ge \epsilon \E[\SW(\ewmr, \bsigma)]) \\
        & \le \Pr(\SW(j, \bsigma) - \E[\SW(j, \bsigma)] \ge \epsilon n\mu)\\
        &\le \frac{ns_\epsilon^2}{(\epsilon n \mu)^2}\\
        &= \frac{s_\epsilon^2}{\epsilon^2\mu^2n}.
    \end{align*}
    Using the same logic as in the first part of the proof, this means that we have
    \begin{align*}
        \E[\dist(j, \bsigma)] &\le \left(\frac{1+\epsilon}{1-\epsilon} +\frac{2s_\epsilon^2}{\epsilon^2\mu^3n}\right)\E[\dist(\ewmr, \bsigma)]
    \end{align*}
    We can upper bound this by taking $\epsilon = \frac{\epsilon_0}{8}$ to get:
    \begin{align*}
        \frac{1+\epsilon}{1-\epsilon} + \frac{2s_\epsilon^2}{\epsilon^2\mu^3n}  &\le 1 + 4\epsilon +  \frac{2s_\epsilon^2}{\epsilon^2\mu^3n}  \\
        &\le 1 + \frac{\epsilon_0}{2} +  \frac{128s_\epsilon^2}{\epsilon_0^2\mu^3n}  \\        
        &\le 1 + \epsilon_0.
    \end{align*}
    As in the first section of the proof, this implies that:
    \[
        \E[\dist(\edmr, \bsigma)] = \max_j \E[\dist(j, \bsigma)] \le (1+\epsilon_0)\E[\dist(\ewmr, \bsigma)].
    \]
$\qedsymbol$
\section{Proof of Theorem~\ref{thm:neg2}}\label{app:proof_of_neg2}

\begin{lemma} \label{lem:bernoulli_order_stats}
    Let $\mathcal{D} \sim \bern(p)$ for $0 < \delta < p < 1$. For $m$ i.i.d. draws from $\mathcal{D}$, the expected welfare of order statistics satsify the following inequality:
    \begin{itemize}
        \item For $k \le pm - \delta m$,
        \[
            \E[X_{(k)}] \le e^{-2\delta^2m}.
        \]
        \item For $k \ge pm + \delta m$,
        \[
            \E[X_{(k)}] \ge 1 - e^{-2\delta^2m}.
        \]
    \end{itemize}
\end{lemma}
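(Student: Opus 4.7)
Both bullets follow from a single application of Hoeffding's inequality to a binomial count. Let $S := \sum_{i=1}^m X_i$; since the $X_i$ are i.i.d.\ $\bern(p)$, we have $S \sim \bin(m,p)$ with mean $\E[S] = pm$.

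The first step is to rewrite $\E[X_{(k)}]$ in closed form. Because every $X_i \in \{0,1\}$, the order statistic $X_{(k)}$ is itself a $\{0,1\}$-valued random variable, so $\E[X_{(k)}] = \Pr(X_{(k)} = 1)$. Under the paper's convention for indexing order statistics, the event $\{X_{(k)} = 1\}$ coincides exactly with $\{S \ge t_k\}$ for an explicit integer threshold $t_k$ depending linearly on $k$ (namely $t_k = m-k+1$ or $t_k = k$, according to whether the smallest or the largest entry is indexed by $1$). Thus $\E[X_{(k)}]$ becomes a single binomial tail probability.

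The second step is Hoeffding's inequality for a sum of $m$ i.i.d.\ $[0,1]$-valued variables, which yields the two-sided bound
\[
\Pr\bigl(S - pm \ge \delta m\bigr) \le e^{-2\delta^2 m}, \qquad \Pr\bigl(S - pm \le -\delta m\bigr) \le e^{-2\delta^2 m}.
\]
The $k$-ranges in the two bullets are engineered precisely so that in the first case the threshold $t_k$ exceeds $pm$ by at least $\delta m$, yielding $\E[X_{(k)}] \le e^{-2\delta^2 m}$ via the upper-tail bound; in the second case $t_k$ lies at least $\delta m$ below $pm$, so $\E[X_{(k)}] = 1 - \Pr(S \le t_k - 1) \ge 1 - e^{-2\delta^2 m}$ via the lower-tail bound.

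I do not foresee any genuine obstacle. The only care required is an off-by-one bookkeeping step relating the rank $k$ to the threshold $t_k$, which one settles once by checking the extremal cases $k=1$ and $k=m$ against the convention of \Cref{sec:model}. The hypothesis $0 < \delta < p < 1$ is used only to guarantee that the relevant $t_k$ lies in $\{1,\ldots,m\}$ so that the binomial tail events are non-degenerate.
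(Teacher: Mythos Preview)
Your proposal is correct and takes essentially the same approach as the paper: write $\E[X_{(k)}]=\Pr(X_{(k)}=1)$ as a binomial tail probability for $S\sim\bin(m,p)$, then bound that tail by Hoeffding's inequality. The paper's proof is terser and writes the tail directly as $\Pr(\bin(m,p)\ge pm+\delta m)$ without spelling out the threshold $t_k$; your explicit caution about the off-by-one bookkeeping is warranted but does not change the argument.
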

\begin{proof}
    Fix $k \le pm - \delta m$. By Hoeffding's inequality, the probability that $X_{(k)}$ is $1$ is bounded by: 
    \begin{align*}
        \Pr(X_{(k)} = 1) &= \Pr(\bin(m, p) \ge pm + \delta m) \\
        &= \Pr(\bin(m, p) - pm \ge \delta m) \\
        &\leq e^{-2\delta^2m}.
    \end{align*}
    The only two values that $X_{(k)}$ can take are $0$ and $1$, so this implies that:
    \[
        \E[X_{(k)}] \le e^{-2\delta^2m}.
    \]
    The other direction for $k \ge pm + \delta m$ follows from a symmetric argument.
\end{proof}

\textbf{Proof of Theorem \ref{thm:neg2}}.
    Fix $0 < \alpha \le 0.5$. Let $\mu  = \frac{\alpha}{4}$ and $\delta = \frac{\alpha^2}{4}$. Further let $\mc{D}_1 = \bern(\mu - \delta)$ and $\mc{D}_2 = \bern(\mu)$. Choose $m$ such that 
    \[
         e^{-2m\delta^2} \le \delta,
    \] 
    and let $n = (m-1)$. Define the preference profile $\sigma$ as follows. Let alternative $1$ have rank $(\mu - \delta)m$ for all $n$ voters. Let all of the other rankings be evenly divided among the other $m-1$ voters, so that for every $j \in [2:m]$, alternative $j$ is ranked $k$ by $1$ voters for every $k \in [1:m]$ except for $k 
 = (\mu - \delta)m$.
    
    By \Cref{lem:bernoulli_order_stats}, under distribution $\mc{D}_1$, alternative $1$ will have expected welfare of:
    \begin{align}
        \E[\SW(1, \bsigma)] &\le ne^{-2m\delta^2} \label{eq:bern_example_sw1ub} \\
        &\le n\delta, \notag
    \end{align}
	by the choice of $m$.   
    By \Cref{lem:bdd_dist_IDSW_bound} there exists at least one alternative $j$ with expected social welfare of at least:
    \begin{equation}
    		\E[\SW(j, \bsigma)] \ge n \mu. \label{eq:bern_example_swmaxLB}
    \end{equation}

    Similarly for $\mc{D}_2$, by  \Cref{lem:bernoulli_order_stats} and the choice of $m$, the social welfare for alternative $1$ is
    \begin{align}
        \E[\SW(1, \bsigma)] &\ge (1-e^{-2m\delta^2})n \\
        &\ge (1- \delta)n.
    \end{align}
    For $j \ne 1$, let $N_j$ be the set of voters giving $j$ rank $(\mu + \delta)m$ or lower (better). Note that there are a total of $(\mu + \delta)nm$ such rankings across all voters, and $n$ of those positions are occupied by alternative $1$. Furthermore, note that the social welfare of any alternative $j$ can be represented as $\SW(j, \bsigma) = \sum_{i=1}^n u_{ij}$ where $u_{ij}$ are $n$ i.i.d. random variables that are supported on $[0,1]$. Therefore, for any $j \ne 1$, 
    \[
        |N_j| \le \frac{(\mu + \delta)nm - n}{m-1} \le n(\mu + \delta).
    \]
    Using this, we can then conclude that
    \begin{align*}
        \E[\SW(j, \bsigma)] &= \sum_{i = 1}^n E[u_{ij}] \\
        &= \sum_{i \in N_j} \E[u_{ij}] + \sum_{i \notin N_j} \E[u_{ij}] \\ 
        &\le 1 \cdot |N_j|  + e^{-2m\delta^2} \cdot (n-|N_j|) \\
        &\le |N_j| + ne^{-2m\delta^2} \\
        &\le n(\mu + \delta) + n\delta \\
        &= (\mu + 2\delta)n.
    \end{align*}
    Under distribution $\mc{D}_1$, by \eqref{eq:bern_example_sw1ub} and \eqref{eq:bern_example_swmaxLB} we have that alternative $1$ has expected welfare at most $\frac{\delta}{\mu} = \alpha$ fraction of the expected welfare of the expected welfare maximizing alternative. Under distribution $\mc{D}_2$, any alternative $j \ne 1$ has expected welfare at most (using that $\alpha, \delta \le 1/2$)
    \[
        \frac{\mu + 2\delta}{1 - \delta} \le 2\mu + 4\delta \le \alpha
    \]
    fraction of the expected welfare of the expected welfare maximizing alternative. Therefore, we can conclude that under this preference profile, no alternative can be chosen that is an $\alpha$-SWMR for both $\mc{D}_1$ and $\mc{D}_2$ simultaneously.
$\qedsymbol$

\begin{cor}
    If the distribution can depend on $n,m$, then for every $m$ there exists an $n$, a preference profile, and two distributions $D^m_1$ and $D^m_2$ such that no rule is $\frac{\sqrt{\log(m)}}{m^{1/4}}$-SWMR for both distributions.
\end{cor}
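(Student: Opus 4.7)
The plan is to re-run the proof of Theorem~\ref{thm:neg2} essentially verbatim, but now allowing $\alpha$ to depend on $m$. Concretely, I would fix $\alpha_m = \frac{\sqrt{\log m}}{m^{1/4}}$, set $\mu = \alpha_m/4$ and $\delta = \alpha_m^2/4 = \frac{\log m}{4\sqrt{m}}$, take $\mathcal{D}_1^m = \bern(\mu-\delta)$ and $\mathcal{D}_2^m = \bern(\mu)$, choose $n = m-1$, and use the same preference profile $\bsigma$ in which alternative~$1$ is ranked at position $(\mu-\delta)m$ by every voter while the remaining $m-1$ alternatives split the remaining positions evenly among the $n$ voters.

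With these parameters the argument of Theorem~\ref{thm:neg2} carries over unchanged except for one hypothesis, namely the concentration bound $e^{-2m\delta^2}\le \delta$ that drives Lemma~\ref{lem:bernoulli_order_stats}. My first step after fixing the parameters would be to verify this inequality: plugging in yields $2m\delta^2 = \log^2(m)/8$, so the requirement becomes $e^{-\log^2(m)/8} \le \log(m)/(4\sqrt{m})$, which after taking logarithms reduces to $\log^2(m)/8 \ge \tfrac{1}{2}\log m + \log 4 - \log\log m$. This holds for all sufficiently large $m$ since $\log^2 m$ dominates $\log m$; for the finitely many small values of $m$ where $\alpha_m>1$ the statement is vacuous.

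Once this hypothesis is in hand, the rest is mechanical. Under $\mathcal{D}_1^m$, Lemma~\ref{lem:bernoulli_order_stats} and Lemma~\ref{lem:bdd_dist_IDSW_bound} give $\E[\SW(1,\bsigma)]\le n\delta$ and $\max_j\E[\SW(j,\bsigma)]\ge n\mu$, so the welfare ratio for alternative~$1$ is at most $\delta/\mu=\alpha_m$. Symmetrically, under $\mathcal{D}_2^m$ alternative~$1$ has welfare at least $(1-\delta)n$ while every other alternative has welfare at most $(\mu+2\delta)n$, giving a ratio of at most $(\mu+2\delta)/(1-\delta)\le 2\mu+4\delta$, which is at most $\alpha_m$ once $\alpha_m$ is small enough (automatic for large $m$). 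Thus no single alternative can simultaneously be an $\alpha_m$-EWMR under both distributions.

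The main (and essentially only) obstacle is the concentration check above, which is exactly what pins down the scaling: pushing $\alpha_m$ below $m^{-1/4}$ by more than a polylogarithmic factor would force $2m\delta^2 = o(\log m)$, at which point $e^{-2m\delta^2}$ is no longer smaller than $\delta$ and the Bernoulli order statistics no longer concentrate sharply enough to separate the two Bernoulli distributions. The $\sqrt{\log m}$ factor appearing in the statement is precisely the slack needed to make this inequality survive as $\delta$ shrinks with $m$.
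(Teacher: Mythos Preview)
Your proposal is correct and follows exactly the same route as the paper: plug $\alpha=\Theta\bigl(\sqrt{\log m}/m^{1/4}\bigr)$ into the construction of Theorem~\ref{thm:neg2} and verify that the concentration hypothesis $e^{-2m\delta^{2}}\le\delta$ survives for this choice. The paper's own proof is a one-liner doing precisely this (with $\alpha=2\sqrt{\log m}/m^{1/4}$), and your write-up is simply a more detailed version of that same verification.
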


\begin{proof}
    Take $\alpha = \frac{2\sqrt{\log(m)}}{m^{1/4}}$. This is valid as $\delta$ satisfies the necessary $m$ inequality for this choice of $\alpha$.
\end{proof}
\section{Proof of Theorem~\ref{thm:pos2}}\label{app:proof_of_pos2}

Denote the mean and median of $\mathcal{D}$ as $\mu$. Note that for any $k \ge \lceil \frac{m}{2} \rceil$, the expectation of the $k$th order statistic from $m$ i.i.d. draws from a symmetric distribution $\mathcal{D}$ satisfies the lower bound
\begin{equation}
    \E[X_{(k)}] \ge \mu. \label{eq:symm_largeorder_lb}
\end{equation}
Similarly, for any $k < \lceil \frac{m}{2} \rceil$,
\begin{equation}
    \E[X_{(k)}] \le \mu. \label{eq:symm_smallorder_ub}
\end{equation}
Among all voters and alternatives, there are at least $\lceil\frac{mn}{2} \rceil$ pairs of voters and alternatives where the voter ranked the alternative in the top $\lceil\frac{m}{2}\rceil$ of alternatives, regardless of the preference profile $\sigma$. Therefore, by the pigeonhole principle, there must exist at least one alternative that is ranked in the top $\lceil\frac{m}{2}\rceil$ by at least $\frac{1}{m}\lceil\frac{mn}{2} \rceil$ voters. 

Define $\tha(\bsigma)$ as the alternative chosen by the Top-Half approval scoring rule and let let $z$ be the number of voters that rank $\tha(\bsigma)$ in the top $\lceil\frac{m}{2}\rceil$. By definition $\tha(\bsigma)$ is the alternative that is ranked in the top $\lceil\frac{m}{2}\rceil$ by the most voters, and so $z \ge \frac{1}{m}\lceil\frac{mn}{2} \rceil$. Using \eqref{eq:symm_largeorder_lb}, the expected welfare of $\tha(\bsigma)$ can therefore be lower bounded by
\[
    \E[\SW(\tha, \bsigma)] \ge z\mu.
\]
Note that because $\mathcal{D}$ is non-negative and symmetric with median $\mu$, its support must lie within the range $[0, 2\mu]$. By construction, for all $j \ne \tha(\bsigma)$, alternative $j$ is ranked in the top $\lceil\frac{m}{2}\rceil$ by at most $z$ voters. Furthermore, $\E[X_{(k)}] \le \mu$ for all $k < \lceil \frac{m}{2} \rceil$. Then for any $j \neq \tha(\bsigma)$,
\[
    \E[\SW(j, \bsigma)] \le z \cdot 2\mu + (n-z) \cdot \mu = n\mu + z\mu
\]
Combining these two equations with the fact that $z \ge \frac{1}{m}\lceil \frac{mn}{2}\rceil \ge  \frac{n}{2}$, we get the following desired result:
\[
    \frac{\E[\SW(\tha, \bsigma)]}{\E[\SW(\ewmr, \bsigma)]} \ge \frac{z\mu}{n\mu + z\mu} \ge \frac{1}{3}.
\]    
$\qedsymbol$

\section{Proof of Theorem~\ref{thm:neg3}}\label{app:proof_of_neg3}

\textbf{Proof of Theorem \ref{thm:neg3}}.
Let $P_1$ be the Unif$(0,1)$ distribution and let $P_2$ be the Bernoulli distribution with probability $0.5$. Fix $1 > \alpha_0 > 0$. Define $\alpha = \frac{\alpha_0}{4}$. Finally, take $m,n$ such that $\alpha \ge \max\Big(\frac{2\log(m)\sqrt{m} + 2}{m}, \frac{\log(m)}{\sqrt{m}}, e^{-2\log(m)^2}\Big)$. Note that this is simply a technical choice for this proof.

Define the preference profile $\sigma$ as follows, for constants (to be determined later) $p,q$ satisfying $\frac{1}{2} \le q < p \leq 1$. Suppose alternative $1$ is ranked $\frac{m}{2} - \log(m)\sqrt{m}$ by $pn$ voters and ranked last ($m^{th}$) by all other voters. Suppose alternative $2$ is ranked first by $qn$ voters and ranked $\frac{m}{2} + \log(m)\sqrt{m}$ by the remaining $(1-q)n$ voters. For simplicitly, we will assume that $pn, qn, \log(m), \sqrt{m}, \frac{m}{2}$ are integers, and further that $n, pn, qn$ are all divisible by $m-2$. Suppose all other rankings are distributed as evenly as possible among the remaining voters. 

Suppose that the true underlying distribution is $P_1$. The expectation of the $k$th order statistic of a uniform distribution is $\frac{k}{m+1}$. We then get the following lower bounds on the expected social welfare for alternatives 1 and 2:
\begin{align}
    \E[\SW(1, \bsigma)] &=  pn \cdot \frac{m/2 + \log(m)\sqrt{m}}{m+1} + (1-p)n \cdot \frac{1}{m+1} \notag \\
    &=\frac{pnm/2 + pn\log(m)\sqrt{m} + (1-p)n}{m+1} \notag  \\
    &\le \frac{(p + \alpha)nm}{2(m+1)}, \label{eq:symm_uniform_sw1_bound}
\end{align}
\begin{align}
    \E[\SW(2, \bsigma)] &= qn \cdot \frac{m}{m+1} + (1-q)n \cdot \frac{\frac{m}{2} - \log(m)\sqrt{m}}{m+1} \notag \\
    &= \frac{qnm + (1-q)nm/2 - (1-q)n\log(m)\sqrt{m}}{m+1} \notag \\
    &\ge \frac{(1 + q - \alpha)nm}{2(m+1)}. \label{eq:symm_uniform_sw2_bound}
\end{align}

We will now consider what happens when the true underlying distribution is $P_2$. Suppose that this is the case. Then by applying \Cref{lem:bernoulli_order_stats} with $\delta = \log (m) /\sqrt{m}$ we can bound the expected social welfare for alternative 1 by:
\begin{align}\label{eq:symm_bern_sw1_bounds}
    (p-\alpha)n &\le pn(1-e^{-2\log^2(m)}) \notag \\
    &\le \E[\SW(1, \bsigma)] \notag \\
    &\le pn + (1-p)ne^{-2\log^2(m)} \notag \\
    &\le (p + \alpha)n 
\end{align}
and similarly for alternative 2 by:
\begin{align}\label{eq:symm_bern_sw2_bounds}
    (q-\alpha)n &\le qn(1-e^{-2\log^2(m)}) \notag \\
    &\le \E[\SW(2, \bsigma)] \notag \\
    &\le qn + (1-q)ne^{-2\log^2(m)} \notag \\
    &\le (q + \alpha)n. 
\end{align}

For the remaining $m-2$ alternatives $j \not\in \{1,2\}$, the remaining expected welfare is divided evenly among them. Note that by the divisibility assumptions stated at the beginning, we can exactly evenly divide up the remaining rankings (and therefore expected welfare) among these $m-2$ alternatives. This allows us to upper bound the expected social welfare for the any alternative $j$ for $j \not\in \{1,2\}$ by:
\begin{align}
    \E[\SW(j, \bsigma)] &= \frac{1}{m-2}\left(\frac{nm}{2} - E[\SW(1, \bsigma)] - E[\SW(2, \bsigma)]\right) \notag \\
    &\le \frac{1}{m-2}\left(\frac{nm}{2} - (p+q)n + 2\alpha n \right)   \notag\\
    &\le \frac{1}{m-2}\left(\frac{nm}{2} -n\right)  + \frac{2\alpha n}{m-2} \notag \\
    &\le \frac{n}{2} + \alpha n \notag \\
    &\leq (q+\alpha)n, \label{eq:symm_bern_sw3_bounds}
\end{align}
since $q \geq 1/2$. Note a symmetric inequality shows that $E[\SW(j, \bsigma)] \ge (q - \alpha)n$

If we select alternative $1$, then comparing it to alternative 2 under $P_1$ by combining \eqref{eq:symm_uniform_sw1_bound} and \eqref{eq:symm_uniform_sw2_bound}, shows that the expected welfare is at most a $\frac{p + \alpha}{1+q - \alpha}$ fraction of the expected welfare maximizing alternative.

If we select an alternative that is not alternative $1$, then by by combining \eqref{eq:symm_bern_sw1_bounds}, \eqref{eq:symm_bern_sw2_bounds}, and \eqref{eq:symm_bern_sw3_bounds}, we see that the expected welfare is at most a $\frac{q+\alpha}{p-\alpha}$ fraction of the expected welfare of the expected-welfare-maximizing alternative.

Therefore, we have a construction such that for every alternative, under some distribution the alternative will be at most a $\frac{q+\alpha}{p-\alpha}$-EWMR or at most a $\frac{p+\alpha}{1+q-\alpha}$-EWMR. From this, we can solve for $p,q$ in the following minimization to make the social welfare approximation for either of these options as low as possible.

\[p,q = \argmin_{p,q} \left( \max\left(\frac{q+\alpha}{p-\alpha}, \frac{p+\alpha}{1+q - \alpha}\right)\right)\]

Solving this gives that the minimum is achieved at $(p,q) = (\sqrt{0.75}, 0.5)$. For this choice of $p,q$, there is no choice of alternative that achieves at least a 
\[
    \frac{0.5 + \alpha}{\sqrt{0.75} -  \alpha} \le \sqrt{\frac{1}{3}} + 4\alpha \le \sqrt{\frac{1}{3}} + \alpha_0
\]

fraction of the maximum expected welfare alternative if the underlying distribution can be either $P_1$ or $P_2$. Therefore, we have shown that no rule can be better than $(\sqrt{\frac{1}{3}} + \alpha_0)$-EWMR for all symmetric distributions, and taking the limit as $\alpha_0 \to 0$ gives the desired theorem result.
$\qedsymbol$
\section{Proof of Theorem~\ref{thm:bin}}\label{app:proof_of_bin}

\begin{lemma} \label{lem:orderstatbinomialmedianbound}
    If $X_{(m+1-k)}$ is the $(m+1-k)^{th}$ order statistic of $m$ draws from a distribution supported on $[0,1]$ with median $\nu$, then:
    \[
        \E[X_{(m+1-k)}] \ge 2^{-m}\cdot \nu \cdot \sum_{\ell=k}^m \binom{m}{\ell}.
    \]    
\end{lemma}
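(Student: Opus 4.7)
The plan is to reduce the expected value of $X_{(m+1-k)}$ to a tail event and then compare that tail probability against a $\mathrm{Binomial}(m,1/2)$ tail. Concretely, since $X_{(m+1-k)}$ is the $(m+1-k)$th smallest of $m$ values, the event $\{X_{(m+1-k)}\ge \nu\}$ is equivalent to the event that \emph{at least} $k$ of the $m$ i.i.d.\ draws are $\ge \nu$. So if I let $N := \#\{i:X_i\ge \nu\}$, then $N\sim \mathrm{Binomial}(m,p)$ where $p := \Pr_{X\sim\mathcal{D}}(X\ge \nu)$, and
\[
\Pr\bigl(X_{(m+1-k)}\ge \nu\bigr) = \Pr(N\ge k) = \sum_{\ell=k}^m \binom{m}{\ell}p^{\ell}(1-p)^{m-\ell}.
\]

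Next, I would use the definition of the largest median $\nu=\sup\{y:\Pr(X\le y)\le \tfrac12\}$ to argue $p\ge \tfrac12$. For any $y<\nu$ we have $\Pr(X\le y)\le \tfrac12$, hence $\Pr(X>y)\ge \tfrac12$; letting $y\uparrow \nu$ and using continuity of measure from above gives $\Pr(X\ge \nu)\ge \tfrac12$. I would then invoke the standard monotonicity of the binomial upper tail in the success probability: the function $p\mapsto \sum_{\ell=k}^m \binom{m}{\ell}p^{\ell}(1-p)^{m-\ell}$ is nondecreasing in $p$ (this can be seen via coupling, or by differentiating to get a telescoping positive expression). Combined with $p\ge 1/2$, this yields
\[
\Pr\bigl(X_{(m+1-k)}\ge \nu\bigr)\;\ge\; \sum_{\ell=k}^m \binom{m}{\ell}(1/2)^{\ell}(1/2)^{m-\ell} \;=\; 2^{-m}\sum_{\ell=k}^m\binom{m}{\ell}.
\]

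Finally, because $X_{(m+1-k)}$ is nonnegative, the one-sided Markov-type bound $\E[Y]\ge t\cdot \Pr(Y\ge t)$ applied with $t=\nu$ gives
\[
\E\bigl[X_{(m+1-k)}\bigr] \;\ge\; \nu\cdot \Pr\bigl(X_{(m+1-k)}\ge \nu\bigr) \;\ge\; 2^{-m}\cdot \nu\cdot \sum_{\ell=k}^m\binom{m}{\ell},
\]
which is the claim. The only mildly delicate step is getting the inequality $\Pr(X\ge\nu)\ge \tfrac12$ right from the ``largest median'' definition (the supremum could fail to be attained, so one must take a left-limit argument rather than plugging $y=\nu$ directly); everything else is a one-line consequence of binomial monotonicity and Markov.
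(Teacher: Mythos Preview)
Your proposal is correct and follows essentially the same route as the paper: bound $\E[X_{(m+1-k)}]$ below by $\nu\cdot\Pr(X_{(m+1-k)}\ge\nu)$, identify that tail event with $\{\bin(m,p)\ge k\}$ for $p=\Pr(X\ge\nu)\ge\tfrac12$, and then compare to the $\bin(m,\tfrac12)$ tail. You are simply more explicit than the paper about why $p\ge\tfrac12$ follows from the largest-median definition and about the monotonicity of the binomial upper tail in $p$.
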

\begin{proof}
    Note that a random variable $X \sim \mathcal{D}$ is greater than or equal to the median with probability at least $\frac{1}{2}$. Therefore the probability that the $(m+1-k)^{th}$ largest out of $m$ draws is greater than the median is at least:
    \begin{align*}
	\Pr(X_{(m+1-k)} \ge \nu) &\geq \Pr(\bin(m,\tfrac{1}{2}) \ge k) \\
        &= \sum_{\ell=k}^m \binom{m}{\ell}2^{-m} \\
        &= 2^{-m}\sum_{\ell=k}^m \binom{m}{\ell}.
    \end{align*}
    Therefore, we can lower bound the expectation of $X_{(m+1-k)}$ by
    \begin{align*}
        \E[X_{(m+1-k)}] &\ge \nu \Pr(X_{(m+1-k)} \ge \nu) \\
        &\geq \nu \cdot 2^{-m}\sum_{\ell=k}^m \binom{m}{\ell}.
    \end{align*}
\end{proof}

\textbf{Proof of Theorem \ref{thm:bin}}.
    Define $\scr(j)$ to be the score of $j$ under the Binomial Voting:
    \[
        \scr(j) := \sum_{i=1}^n \sum_{\ell=r_i}^m \binom{m}{\ell}.
    \]
    The scaled sum of the scores of all the alternatives can then be written as
    \begin{align*}
        \sum_{j=1}^m 2^{-m}\scr(j) &=  n\sum_{k=1}^m \sum_{\ell = k}^m \binom{m}{\ell}2^{-m}  \\
        &=  n\sum_{k=1}^m k \binom{m}{k}2^{-m} \\
        &= n\cdot E\left[\bin\left(m, \frac{1}{2}\right)\right] \\
        &=  \frac{nm}{2},
    \end{align*}
    where the second equality follows by expanding the inner sum.
	Let $\bsr(\bsigma)$ be the alternative that is chosen by the Binomial Voting. Then by the pigeonhole principle, the score of $\bsr(\bsigma)$ must satisfy
    \[
        2^{-m} \cdot \scr(\bsr(\bsigma)) \ge \frac{n}{2}.
    \]
    We next provide a lower bound on the expected social welfare of an alternative based on the Binomial Voting, and use this to lower bound the expected social welfare of $\bsr(\bsigma)$. Recall that if an alternative $j$ is ranked $r_1...,r_n$ by the $n$ voters, then \citet{BCHL+15} tells us that:
    \[
        \E[\SW(j, \bsigma)] = \sum_{i=1}^n E[X_{(m+1-r_1)}].
    \]
    Applying \Cref{lem:orderstatbinomialmedianbound}, we then find that
    \begin{align*}
        \E[\SW(j, \bsigma)] &\ge \sum_{i=1}^n  2^{-m}\nu\sum_{\ell=r_i}^m \binom{m}{\ell}  \\
        &= \nu \cdot 2^{-m} \cdot \scr(j).
    \end{align*}
    For $\bsr(\bsigma)$ this yields
    \begin{align*}
        \E[\SW(\bsr, \bsigma)] &\ge \nu \cdot 2^{-m}\cdot \scr(\bsr(\bsigma)) \\
        &\ge \frac{n\nu}{2} \\
        &\ge \frac{\nu}{2}\E[\SW(\ewmr, \bsigma)],
    \end{align*}
    as desired.
$\qedsymbol$

\section{Generalized Binomial Voting}\label{app:proofs_of_gen_bin}

\begin{lemma} \label{lem:orderstatbinomial_gen_bound}
    If $X_{(m+1-k)}$ is the $(m+1-k)^{th}$ order statistic of $m$ draws from a distribution $\mathcal{D}$ supported on $[0,1]$, where $Q$ is value of the $p^{th}$ quantile of $\mathcal{D}$, then
    \[
    	\E[X_{(m+1-k)}] \ge Q\cdot \sum_{\ell=k}^m \binom{m}{\ell}(1-p)^{\ell}p^{m-\ell}.
    \]
\end{lemma}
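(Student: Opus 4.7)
The plan is to mirror the proof of \Cref{lem:orderstatbinomialmedianbound} almost verbatim, replacing the median $\nu$ (for which a draw exceeds it with probability at least $\frac{1}{2}$) with the $p$-quantile $Q$ (for which a draw exceeds it with probability at least $1-p$), and correspondingly replacing the symmetric binomial $\bin(m,\frac{1}{2})$ by the asymmetric $\bin(m,1-p)$.

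First, I would establish the key tail bound $\Pr_{X\sim\mathcal{D}}(X\geq Q)\geq 1-p$. This follows from the definition $Q=\sup\bigl\{y : \Pr_{x\sim\mathcal{D}}(x\leq y)\leq p\bigr\}$: for every $y<Q$ we have $\Pr(X\leq y)\leq p$, hence $\Pr(X>y)\geq 1-p$; then continuity of probability along the decreasing sequence of events $\{X>y\}$ as $y\uparrow Q$ yields $\Pr(X\geq Q)\geq 1-p$. This is the only place where a small amount of care is needed, since $\mathcal{D}$ may have an atom at $Q$; but because we only need the ``$\geq$'' direction, the supremum definition of $Q$ gives exactly what is required.

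Next, I would observe that the event $\{X_{(m+1-k)}\geq Q\}$ is precisely the event that at least $k$ of the $m$ i.i.d.\ draws from $\mathcal{D}$ are $\geq Q$. Since each draw independently clears $Q$ with probability at least $1-p$, the number of draws exceeding $Q$ stochastically dominates $\bin(m,1-p)$, so
\[
\Pr\bigl(X_{(m+1-k)}\geq Q\bigr)\;\geq\;\Pr\bigl(\bin(m,1-p)\geq k\bigr)\;=\;\sum_{\ell=k}^m\binom{m}{\ell}(1-p)^{\ell}p^{m-\ell}.
\]
Finally, since $\mathcal{D}$ is supported on $[0,1]$ the random variable $X_{(m+1-k)}$ is nonnegative, so the trivial one-sided Markov-type bound $\E[X_{(m+1-k)}]\geq Q\cdot\Pr(X_{(m+1-k)}\geq Q)$ combined with the previous display gives the claimed inequality.

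There is no real obstacle here: the argument is a direct parametric generalization of the median case, and the only subtlety is handling the ``$\sup$'' in the definition of the quantile correctly when $\mathcal{D}$ has an atom at $Q$, which is resolved by the monotone-limit argument above. The resulting lemma then plugs into the proof of \cref{cor:bin2} and \cref{thm:bin3} exactly as \Cref{lem:orderstatbinomialmedianbound} plugged into the proof of \cref{thm:bin}.
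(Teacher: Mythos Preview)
Your proposal is correct and matches the paper's own proof almost exactly: the paper also bounds $\Pr(X_{(m+1-k)}\ge Q)\ge \Pr(\bin(m,1-p)\ge k)$ and then applies $\E[X_{(m+1-k)}]\ge Q\cdot\Pr(X_{(m+1-k)}\ge Q)$. If anything, your treatment of the atom-at-$Q$ subtlety via the monotone-limit argument is more careful than what the paper writes out.
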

\begin{proof}
    As in the proof of \Cref{lem:orderstatbinomialmedianbound},
    \begin{align*} 
        \Pr(X_{(m+1-k)} \ge Q) &\geq \Pr(\bin(m,1-p) \ge k) \\
        &= \sum_{\ell=k}^m \binom{m}{\ell}(1-p)^{\ell}p^{m-\ell}.
    \end{align*}
    Therefore, we can lower bound the expectation of $X_{(m+1-k)}$ by
    \begin{align*}
        \E[X_{(m+1-k)}] &\ge Q \cdot \Pr(X_{(m+1-k)} \ge Q) \\
        &\geq Q \cdot \sum_{\ell=k}^m \binom{m}{\ell}(1-p)^{\ell}p^{m-\ell}. 
    \end{align*}
\end{proof}

\textbf{Proof of Corollary \ref{cor:bin2}}.
    If we define $r_1,...,r_n$ as the rankings of alternative $j$ and
    \[
        \scr(j) := \sum_{i=1}^n \sum_{\ell=r_i}^m \binom{m}{\ell}(1-p)^{\ell}p^{m-\ell},
    \]
    then by \Cref{lem:orderstatbinomial_gen_bound},
    \begin{align*}
        \E[\SW(j, \bsigma)] &= \sum_{i=1}^n \E[X_{(m+1-r_i)}] \\
        &\ge \sum_{i=1}^n Q\sum_{\ell=r_i}^m \binom{m}{\ell}(1-p)^{\ell}p^{m-\ell} \\
        &= Q\cdot \scr(j).
    \end{align*}
    The sum of the scores of all the alternatives can then be written as:
    \begin{align*}
        \sum_{j=1}^m \scr(j) &= n \sum_{k=1}^m \sum_{\ell=k}^m \binom{m}{\ell}(1-p)^{\ell}p^{m-\ell} \\
        &= n\E[\bin(m, 1-p)] \\
        &= nm(1-p).
    \end{align*}
    The rest of the proof follows as in the proof of Theorem \ref{thm:bin}.
$\qedsymbol$

\begin{lemma}\label{lem:quantiles_lb}
    Let $X$ be a random variable supported on $[0,1]$. Choose $T$ values in $[0,1]$, denoted $0 \le Q_1 \le ...\le Q_T \le 1$. Define $Q_0 = 0$ and $Q_{t+1} = 2$. Then
    \[
        \E[X] \ge \sum_{t=1}^{T} Q_t\cdot \Pr(Q_{t} \le X < Q_{t+1}).
    \]    
\end{lemma}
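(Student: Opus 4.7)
The plan is to partition the support of $X$ using the cut points $0 = Q_0 \le Q_1 \le \cdots \le Q_T \le 1 < 2 = Q_{T+1}$, and lower-bound $X$ by the left endpoint on each piece. Since $X$ is supported on $[0,1]$ and $Q_{T+1}=2$, the events $\{Q_t \le X < Q_{t+1}\}$ for $t = 0, 1, \ldots, T$ form a partition of the probability space (any $X \in [0,1]$ falls into exactly one such interval, with the interval indexed by $t=T$ absorbing the point $X=1$). So by the law of total expectation,
\[
    \E[X] = \sum_{t=0}^{T} \E\bigl[X \cdot \mathbf{1}\{Q_t \le X < Q_{t+1}\}\bigr].
\]

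On the event $\{Q_t \le X < Q_{t+1}\}$ we have $X \ge Q_t$, so each summand is bounded below by $Q_t \cdot \Pr(Q_t \le X < Q_{t+1})$. This gives
\[
    \E[X] \ge \sum_{t=0}^{T} Q_t \cdot \Pr(Q_t \le X < Q_{t+1}).
\]
Finally, the $t=0$ term vanishes because $Q_0 = 0$, leaving exactly the claimed sum $\sum_{t=1}^{T} Q_t \cdot \Pr(Q_t \le X < Q_{t+1})$.

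There is no real obstacle here: the argument is the standard fact that approximating a nonnegative random variable by the left endpoint of a partitioning of its range yields a lower Riemann-style bound on the expectation. The only point to be careful about is that $Q_{T+1}$ is defined to be $2$ (rather than, say, $1$) so that the rightmost interval $[Q_T, Q_{T+1})$ is a half-open interval that actually contains $X = 1$; this ensures the events truly partition the support rather than missing a boundary point. (Note also that the statement of the lemma contains a minor typo: ``$Q_{t+1}=2$'' should read ``$Q_{T+1}=2$.'')
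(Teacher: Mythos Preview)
Your proof is correct, but it takes a different route from the paper's. The paper starts from the tail-expectation identity $\E[X]=\int_0^1 \Pr(X\ge x)\,dx$, splits the integral at the cut points $Q_0,\ldots,Q_T$, drops the rightmost piece $\int_{Q_T}^1$, and on each remaining piece lower-bounds the integrand by its right-endpoint value $\Pr(X\ge Q_{t+1})$; a final Abel-summation (telescoping) step converts $\sum_{t=0}^{T-1}(Q_{t+1}-Q_t)\Pr(X\ge Q_{t+1})$ into the stated sum. Your argument instead partitions the \emph{probability space} by the events $\{Q_t\le X<Q_{t+1}\}$ and bounds $X$ pointwise by $Q_t$ on each event, which is shorter and avoids both the tail formula and the summation-by-parts step. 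The paper's approach, on the other hand, explicitly exhibits the bound as a left-rectangle Riemann approximation to the tail integral; this picture is what the paper later invokes when arguing (Corollary~\ref{cor:bin4}) that the approximation becomes exact as $T\to\infty$. Your observation about the role of $Q_{T+1}=2$ (and the typo in the subscript) is well taken.
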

\begin{proof}
    The expectation of $X$ can be written as
    \begin{align*}
        \E[X] &= \int_{0}^1 \Pr(X \ge x)dx\\
        &= \int_{Q_T}^{1} \Pr(X \ge x)dx +  \sum_{t=0}^{T-1} \int_{Q_t}^{Q_{t+1}} \Pr(X \ge x)dx \\
        &\ge \sum_{t=0}^{T-1} \int_{Q_t}^{Q_{t+1}} \Pr(X \ge Q_{t+1})dx \\
        &\ge \sum_{t=0}^{T-1} \left( Q_{t+1} - Q_t \right) \cdot \Pr(X \ge Q_{t+1}) \\
        &= \sum_{t=1}^{T} Q_t\cdot \Pr(Q_{t} \le X < Q_{t+1}). 
    \end{align*}
\end{proof}

\textbf{Proof of Theorem \ref{thm:bin3}}.
Let $X_{(m-k+1)}$ be the $(m - k + 1)^{th}$ order statistic from $m$ i.i.d. draws from distribution $\mathcal{D}$. Consider the probability $\Pr(A \le X_{(m-k+1)} < B)$ for $0 \le A \le B \le 1$. Define $p_A := \Pr(X < A)$ and $p_B := \Pr(X < B)$. In order for $A \le X_{(m-k+1)} < B$, there must be at least $k$ draws greater than or equal to $A$, and at most $k-1$ draws greater than or equal to $B$. The probability that there are at least $k$ draws greater than or equal to $A$ is 
\[
    \sum_{\ell=k}^{m} \binom{m}{\ell}(1-p_A)^\ell p_A^{m-\ell},
\]
and the probability of having at least $k$ draws that are greater than or equal to $B$ is
\[
    \sum_{\ell=k}^{m} \binom{m}{\ell}(1-p_B)^\ell p_B^{m-\ell}.
\]
Now using the fact that the second event is a subset of the first event, we have that
\begin{align}
    \Pr(A \le X_{(m-k+1)} < B) &= \sum_{\ell=k}^{m} \binom{m}{\ell}(1-p_A)^\ell p_A^{m-\ell} - \sum_{\ell=k}^{m} \binom{m}{\ell}(1-p_B)^\ell p_B^{m-\ell}.\label{eq:orderstat_sandwich_prob}
\end{align}
Therefore the score in Generalized Binomial Voting is a lower bound for the expectation of the $(m-k+1)^{th}$ order statistic, as shown below. We start by applying \Cref{lem:quantiles_lb} to the order statistic $X_{(m-k+1)}$. By combining this with \eqref{eq:orderstat_sandwich_prob} we get
\begin{align}
    \E[X_{(m-k+1)}] &\ge \sum_{t=1}^{T} Q_t\cdot \Pr(Q_{t} \le X_{(m-k+1)} < Q_{t+1}) \notag \\
    &= \sum_{t=1}^{T} Q_t \cdot \left(\sum_{\ell = k}^{m} \binom{m}{\ell} (1-p_t)^\ell p_t^{m-\ell} - \sum_{\ell = k}^m \binom{m}{\ell} (1-p_{t+1})^{\ell}p_{t+1}^{m-\ell}\right). \label{eq:genbinscr_orderstat_lb}
\end{align}

Suppose alternative $j$ has rankings $r_1,...,r_n$ by the $n$ voters, and again define $\scr(j)$ as the total score of this alternative under the Generalized Binomial Voting rule. Then by linearity of expectation and the above inequality,
\begin{align}
    \E[\SW(j, \bsigma)] &= \sum_{i=1}^n \E[X_{(m+1-r_i)}] \notag \\
    &\ge \sum_{i=1}^n \sum_{t=1}^{T} Q_t \cdot \left(\sum_{\ell = r_i}^{m} \binom{m}{\ell} (1-p_t)^\ell p_t^{m-\ell} - \sum_{\ell = r_i}^m \binom{m}{\ell} (1-p_{t+1})^{\ell} p_{t+1}^{m-\ell}\right) \notag \\
    &= \scr(j). \label{eq:genbinscr_lb_on_sw}
\end{align}
Now the last step is to find a lower bound for the score of $\gbsr(\bsigma)$, the index chosen by Generalized Binomial Voting. Note that if we add all of the scores across all of the alternatives, we have that the value can be written as the sum of expectations of the order statistics $Z_{(1)},...Z_{(m)}$ of $m$ draws from a Multinomial distribution with $T+1$ outcomes, with probabilities $\{p_{t+1} - p_{t}\}_{t=0}^T$ and values $0, Q_1,...,Q_T$ respectively, where we define $p_0 = 0$. This allows us to simplify the sum of scores as follows:
In a manner similar to the proof of \Cref{thm:bin}, we have
\begin{align}
    \sum_{j=1}^n \scr(j) &= n \cdot \sum_{k=1}^m \sum_{t=1}^{T} Q_t \left(\sum_{\ell = k}^{m} \binom{m}{\ell} (1-p_t)^\ell p_t^{m-\ell} - \sum_{\ell = k}^m \binom{m}{\ell} (1-p_{t+1})^{\ell} p_{t+1}^{m-\ell}\right) \notag \\
    &= n \cdot \sum_{t=1}^{T} Q_t \left(\sum_{k=1}^m\sum_{\ell = k}^{m} \binom{m}{\ell} (1-p_t)^\ell p_t^{m-\ell} - \sum_{k=1}^m\sum_{\ell = k}^m \binom{m}{\ell} (1-p_{t+1})^{\ell} p_{t+1}^{m-\ell}\right) \notag \\
    &= n \cdot \sum_{t=1}^{T} Q_t \left(\sum_{k=1}^m \Pr[\bin(m, 1-p_t) \geq k] - \sum_{k=1}^m\Pr[\bin(m, 1-p_{t+1}) \geq k]\right) \notag \\
    &= n \cdot \sum_{t=1}^{T} Q_t \left(\E[\bin(m, 1-p_t)] - \E[\bin(m, 1-p_{t+1})]\right) \notag \\
    &= nm \cdot \sum_{t=1}^T Q_t\left(p_{t+1} - p_t\right). \label{eq:genbinscr_avg_val}
\end{align} 

Finally, we conclude as in the proof of Theorem \ref{thm:bin} that the score of $\gbsr(\bsigma)$ can be bounded as
\begin{align*}
    \E[\SW(\gbsr, \bsigma)] &\ge \scr(\gbsr(\bsigma)) \\
    &\ge n\sum_{t=1}^T Q_t(p_{t+1} - p_t) \\
    &\ge \sum_{t=1}^T Q_t(p_{t+1} - p_t)\E[\SW(\ewmr, \bsigma)],
\end{align*}
where we used \eqref{eq:genbinscr_lb_on_sw}, an averaging argument applied to \eqref{eq:genbinscr_avg_val}, and that $n \geq \E[\SW(\ewmr, \bsigma)]$, since $\mc{D}$ takes values in $[0,1]$.
$\qedsymbol$


\textbf{Proof of Corollary \ref{cor:bin4}}.
To prove this result, we need to show that the scores in Generalized Binomial Voting approach the expected values of the order statistics, which are the scores in the $\ewmr$.

Recall in the above proof, the fact that there is an inequality in the equation $\E[\SW(j, \bsigma)] \ge \scr(j)$ is because of the inequality in Lemma \ref{lem:quantiles_lb}. That inequality came from the fact that, for random variable $X$,
\begin{align*}
    \E[X] &= \int_{0}^1 \Pr(X \ge x)\:dx\\
    &= \int_{Q_T}^{1} \Pr(X \ge x)\:dx +  \sum_{t=0}^{T-1} \int_{Q_t}^{Q_{t+1}} \Pr(X \ge x)\:dx \\
    &\ge \sum_{t=0}^{T-1} \int_{Q_t}^{Q_{t+1}} \Pr(X \ge Q_{t+1})\:dx \\
\end{align*}
In other words, the score in Generalized Binomial Voting can be thought of as a rectangular lower bound approximation of the integral of expectation of the order statistics of the underlying distribution. By the definition of the Riemann Integral, this rectangle approximation will exactly approach the true integral (making the inequality an equality) as $T \to \infty$.
$\qedsymbol$
\section{Sampling}\label{app:sampling}

As mentioned in Section \ref{sec:discussion}, we can directly calculate an approximation to the EDMR rule by sampling from the known underlying distribution $\mathcal{D}$. We formalize this result below.

Consider the following sampling algorithm. Take $T$ batches of $nm$ i.i.d. samples each from distribution $\mathcal{D}$. For each batch, assign $m$ samples to each voter $i$, and denote the sorted samples as $X_{(1)}^i,...,X_{(m)}^i$. Let $r^j_1,...,r^j_n$ be the rankings of alternative $j$ under $\sigma$. For each $j$, let
\[
    \hat{\mu}_j = \frac{1}{T}\sum_{t=1}^T \mu_j^t = \frac{1}{T}\sum_{t=1}^{T} \frac{ \sum_{i=1}^n X^i_{(r_i^j)}}{\max_k  \sum_{i=1}^n X^i_{(r^k_i)}}
\]
be the mean sample distortion of alternative $j$.

\begin{thm}
 Let $\mathcal{D}$ be bounded on $(0,1]$ with mean $\mu$. For any $\epsilon, \delta > 0$, take $T = \frac{2\log(\frac{2m}{\delta})}{\mu^2\epsilon^2}$. Let $j_{\samp} = \argmax_{j} \hat{\mu}_j$ be the alternative with the highest mean sample distortion, as defined above. With probability $1-\delta$, $j_{\samp}$ is an $\epsilon$-approximation of the EDMR:
\[
    \E[\dist(\samp, \bsigma)] \ge (1-\epsilon) \E[\dist(\edmr, \bsigma)].
\]
\end{thm}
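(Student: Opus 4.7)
The plan is to argue by empirical concentration plus a union bound, and then convert an additive guarantee into a multiplicative one using the fact that the EDMR must have expected distortion at least $\mu$ (which already follows from \Cref{lem:bdd_dist_IDSW_bound} in \Cref{app:proof_of_pos}, since the EDMR is at least as good as the EWMR, whose expected distortion is at least $\mu$).

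First I would note that for each fixed alternative $j$ and each batch $t$, the quantity
\[
    \mu_j^t = \frac{\sum_{i=1}^n X^i_{(r_i^j)}}{\max_k \sum_{i=1}^n X^i_{(r_i^k)}}
\]
is a random variable taking values in $[0,1]$ whose distribution matches that of $\dist(j,\bsigma)$ (the batch of $nm$ samples is exactly a draw of $\bu \triangleright \bsigma$). Hence $\E[\mu_j^t] = \E[\dist(j,\bsigma)]$, and $\hat{\mu}_j$ is the average of $T$ i.i.d.\ copies of a $[0,1]$-bounded random variable with this mean.

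Next I would apply Hoeffding's inequality to $\hat{\mu}_j$ with tolerance $\mu\epsilon/2$:
\[
    \Pr\bigl(|\hat{\mu}_j - \E[\dist(j,\bsigma)]| \ge \tfrac{\mu\epsilon}{2}\bigr) \le 2\exp\bigl(-T\mu^2\epsilon^2/2\bigr).
\]
Plugging in $T = \frac{2\log(2m/\delta)}{\mu^2\epsilon^2}$ makes this at most $\delta/m$, and a union bound over the $m$ alternatives yields that with probability at least $1-\delta$, $|\hat{\mu}_j - \E[\dist(j,\bsigma)]| \le \mu\epsilon/2$ simultaneously for every $j$. Call this favorable event $\mathcal{E}$.

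Conditioning on $\mathcal{E}$, I would then chain the inequalities: by the definition of $j_{\samp}$ as the maximizer of $\hat{\mu}_j$,
\[
    \E[\dist(\samp,\bsigma)] \ge \hat{\mu}_{j_{\samp}} - \tfrac{\mu\epsilon}{2} \ge \hat{\mu}_{\edmr(\bsigma)} - \tfrac{\mu\epsilon}{2} \ge \E[\dist(\edmr,\bsigma)] - \mu\epsilon.
\]
The main obstacle, and the only nontrivial step, is converting this additive $\mu\epsilon$ slack into a multiplicative $(1-\epsilon)$ factor. For this I would invoke \Cref{lem:bdd_dist_IDSW_bound}, which gives $\E[\dist(\ewmr,\bsigma)] \ge \mu$; since the EDMR maximizes expected distortion, $\E[\dist(\edmr,\bsigma)] \ge \mu$ as well. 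Substituting $\mu \le \E[\dist(\edmr,\bsigma)]$ into the additive bound yields
\[
    \E[\dist(\samp,\bsigma)] \ge \E[\dist(\edmr,\bsigma)] - \epsilon\,\E[\dist(\edmr,\bsigma)] = (1-\epsilon)\,\E[\dist(\edmr,\bsigma)],
\]
which is the desired conclusion on the event $\mathcal{E}$ of probability at least $1-\delta$.
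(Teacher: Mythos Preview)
Your proposal is correct and follows essentially the same approach as the paper: Hoeffding's inequality on the $[0,1]$-valued batch estimates, a union bound over alternatives, and the bound $\E[\dist(\edmr,\bsigma)] \ge \mu$ from \Cref{lem:bdd_dist_IDSW_bound} to convert additive slack into a multiplicative factor. The only cosmetic difference is that the paper sets the Hoeffding tolerance to $\tfrac{\epsilon}{2}\E[\dist(\edmr,\bsigma)]$ up front (so the multiplicative form falls out immediately and the $\mu$ lower bound is used inside the exponent), whereas you set the tolerance to $\tfrac{\mu\epsilon}{2}$ and invoke the $\mu$ lower bound at the end; the two orderings are equivalent.
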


\begin{proof}
By linearity of expectation, clearly $\E[\hat{\mu}_j] = \E[\mu_j^t] = \E[\dist(j, \bsigma)]$. Applying the standard Hoeffding's inequality to $\mu_j$ gives: 
\begin{align*}
    P\left(|\hat{\mu}_j - \E[\dist(j, \bsigma)]| \ge \tfrac{\epsilon}{2} \cdot \E[\dist(\edmr, \bsigma)]\right) &\le 2\exp{(-T\epsilon^2\E[\dist(\edmr, \bsigma)]^2/2)} \\
    &\le 2\exp{(-T\epsilon^2\mu^2/2)} \\
    &= \frac{\delta}{m}.
\end{align*}
Let $\mathcal{E}$ be the event that for all $j \in [1:m]$,
\[
    |\hat{\mu}_j -  \E[\dist(j, \bsigma)]| \leq \tfrac{\epsilon}{2} \cdot \E[\dist(\edmr, \bsigma)].
\]
By a union bound, $P(\mathcal{E}) \ge 1 - \delta$. By construction, $j_{\samp}$ satisfies $\hat{\mu}_{j_{\samp}} \ge \hat{\mu}_{j_{\edmr}}$. Under event $\mathcal{E}$, this implies that
\[
    \E[\dist(\samp, \bsigma)] + \tfrac{\epsilon}{2}\E[\dist(\edmr, \bsigma)] \ge \E[\dist(\edmr, \bsigma)] - \tfrac{\epsilon}{2}\E[\dist(\edmr, \bsigma)].
\]
This equation simplifies to 
\[
    \E[\dist(\samp, \bsigma)] \ge (1-\epsilon)\E[\dist(\edmr, \bsigma)],
\]
as desired.

Note that each of the $T$ batches includes $nm$ samples, so the total number of samples used in this algorithm is $Tnm = \frac{nm\log(\frac{2m}{\delta})}{\mu^2\epsilon^2}$.
\end{proof}
\section{Benchmarks}\label{app:benchmarks}
In this section we elaborate on alternative benchmarks for our results, as briefly mentioned in the discussion. While our results use the EDMR and EWMR as benchmarks, we can also rewrite the statements to instead compare to the actual expected values. We restate Theorem \ref{thm:pos} and Theorem \ref{thm:bin} below as Corollary \ref{cor:pos} and Corollary \ref{cor:bin} respectively.

\begin{cor}\label{cor:pos}
    Assume $\mathcal{D}$ is supported on $[0,1]$ and has constant (relative to $n,m$) mean $\mu$ and variance $s^2$. Then for every $\epsilon_0 > 0$, there exists $n_0$ such that if $n \ge n_0$, $\mathbb{E}\bigl[\dist(\ewmr, \bsigma)\bigr] \ge 1 - \epsilon_0$.    
\end{cor}
\begin{proof}
    Define $\mathcal{E}$ as in the proof of Theorem \ref{thm:pos}. Under event $\mathcal{E}$, we have that
    \begin{align*}
        \dist(\ewmr, \bsigma) &= \frac{\sw(\ewmr, \bsigma)}{\max_i \sw(i, \bsigma)} \\
        &\ge \frac{(1-\epsilon) \E[\sw(\ewmr, \bsigma)]}{(1+\epsilon)\E[\sw(\ewmr, \bsigma)]} \\
        &= \frac{1-\epsilon}{1+\epsilon}
    \end{align*}
    We can then write
    \begin{align*}
        \E[\dist(\ewmr, \bsigma)] &\ge \Pr(\mathcal{E})\cdot \frac{1-\epsilon}{1+\epsilon} + \Pr(\neg \mathcal{E}) \cdot 0  \\
        &\ge (1-2e^{-2\epsilon^2n\mu^2}) \frac{1-\epsilon}{1+\epsilon} \\
        &\ge (1-2e^{-2\epsilon^2n\mu^2})\left(1 - \frac{2\epsilon}{1+\epsilon}\right) \\
        &\ge 1 - 2e^{-2\epsilon^2n\mu^2} - 2\epsilon \\
        &\ge 1 - \epsilon_0
    \end{align*}
    where the last inequality comes from taking $n$ suitably large and $\epsilon = \epsilon_0/3$.
\end{proof}

\begin{cor}\label{cor:bin}
    Let $\mathcal{D}$ be a distribution supported on $[0,1]$ whose largest median is $\nu$, i.e., $\nu=\sup \bigl\{y ~\big|~ \Pr_{x\sim \mathcal{D}}(x\leq y)\leq \frac{1}{2}\bigr\}$. Then binomial voting achieves expected welfare of at least $\frac{n\nu}{2}$.
\end{cor}

\begin{proof}
    In the last expression of the proof of Theorem \ref{thm:bin}, we exactly have
    \begin{align*}
        \E[\SW(\bsr, \bsigma)] &\ge \nu \cdot 2^{-m}\cdot \scr(\bsr(\bsigma)) \\
        &\ge \frac{n\nu}{2}
    \end{align*}
    
\end{proof}

\section{Total Variation Distance}\label{app:tvd}
A desirable result would be that the EWMR for one distribution generalizes well to similar distributions. Unfortunately, this is not the case, and in fact the EWMR for one distribution can have arbitrarily bad expected welfare on an a second distribution that is arbitrarily close in total variation distance.

\begin{thm}\label{thm:negtv}
    For every $0 \le \delta \le 1$ and $0 < \epsilon \le 1$ there exists $n,m,\sigma$ and two distributions $D_1, D_2$ both supported on $[0,1]$ such that $TV(D_1, D_2) \le \epsilon$ and the social welfare maximizing rule on alternative $D_1$ is not a $\delta$-EWMR for $D_2$.
\end{thm}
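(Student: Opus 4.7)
The plan is to exploit a degeneracy of $D_1$: if $D_1$ is a point mass at a value $\eta$, then every order statistic of $m$ i.i.d.\ draws from $D_1$ also equals $\eta$, so by the Boutilier et al.\ formula cited in Section~\ref{sec:model}, every alternative has the same expected welfare $n\eta$ under $D_1$ regardless of the preference profile. Hence \emph{every} deterministic rule qualifies as an EWMR for $D_1$, which gives me complete freedom to choose a rule that will be adversarially bad on a slightly perturbed distribution $D_2$.

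Concretely, I would set $\eta = \delta/4$ and $\epsilon' = \min\{\epsilon, 1/2\}$, take $D_1$ to be the point mass at $\eta$, and take $D_2 = (1-\epsilon')\cdot(\text{point mass at }\eta) + \epsilon'\cdot(\text{point mass at }1)$, which gives $TV(D_1, D_2) = \epsilon' \le \epsilon$. I would then pick $n = 1$ voter and $m$ large enough that both $(\epsilon')^m$ and $(1-\epsilon')^m$ are at most $\delta/4$ (possible because $\epsilon' \in (0,1)$), and let the voter rank alternative~$1$ in position $m$ and alternative~$2$ in position $1$. Let $f$ be the rule that always selects alternative~$1$. Since any rule is an EWMR for $D_1$, so is $f$. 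Under $D_2$, a short calculation using the Boutilier et al.\ formula shows that the expected welfare of alternative~$1$ equals $\mathbb{E}[X_{(1)}] \le (\epsilon')^m + \eta \le \delta/2$, while the expected welfare of alternative~$2$ equals $\mathbb{E}[X_{(m)}] \ge 1 - (1-\epsilon')^m \ge 3/4$. Thus $f$ achieves at most a $2\delta/3 < \delta$ fraction of the $D_2$-optimum, and so is not a $\delta$-EWMR for $D_2$.

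The only real subtlety is the clipping step $\epsilon' = \min\{\epsilon, 1/2\}$. If $\epsilon$ is close to $1$ and one took $\epsilon' = \epsilon$, then $(\epsilon')^m$ would remain close to $1$ and the minimum order statistic under $D_2$ would be near $1$, destroying the argument. The clipping keeps both $\epsilon'$ and $1-\epsilon'$ bounded away from $\{0,1\}$, so both extremal-order-statistic probabilities decay exponentially in $m$. Apart from this, the argument is an elementary calculation with two-atom distributions; the conceptual content is entirely the point-mass trick that trivializes the EWMR requirement for $D_1$.
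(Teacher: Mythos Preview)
Your proof is correct and takes a genuinely different route from the paper's. The paper simply points back to the construction in Theorem~\ref{thm:neg2}: two Bernoullis $\bern(\mu-\delta')$ and $\bern(\mu)$ with $\mu,\delta'$ small, together with the preference profile built there, under which the EWMR for $D_1$ must pick some alternative $j\ne 1$, and every such $j$ is shown to achieve at most an $\alpha$-fraction of the optimum under $D_2$. Your construction is far more elementary: by making $D_1$ a point mass you collapse all order statistics to the same value, so \emph{every} rule is an EWMR for $D_1$; you then pick the rule that always selects the alternative ranked last by a single voter, and a tiny atom at $1$ in $D_2$ separates the first and last order statistics dramatically.

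The trade-off is that your argument rests on adversarial tie-breaking in a degenerate instance. In the paper's construction the EWMR for $D_1$ is pinned down up to a symmetry among alternatives $2,\ldots,m$, and \emph{every} way of breaking that tie is bad for $D_2$; so the paper in effect proves the stronger statement that no choice of EWMR for $D_1$ is a $\delta$-EWMR for $D_2$. Your construction only exhibits \emph{some} EWMR for $D_1$ that fails on $D_2$ (indeed, the tie-breaking that instead selects alternative~$2$ would be optimal on $D_2$). Under the paper's convention that ties are broken arbitrarily and that ``an EWMR'' refers to some fixed such rule, your reading is defensible and the proof stands; just be aware that the point-mass trick would not survive if the statement were tightened to ``every EWMR for $D_1$''.
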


\begin{proof}
    This follows directly from the construction in the proof of Theorem \ref{thm:neg2} in Appendix \ref{app:proof_of_neg2}.
\end{proof}

\subsection{Borda Count}
Borda count has been previously shown to be the scoring rule that maximizes expected welfare for the uniform distribution (which we also prove in Lemma \ref{borda}. Despite the negative result of Theorem \ref{thm:negtv}, a somewhat surprising result is that Borda count is also a $(1-\epsilon)$-EWMR for distributions close in total variation distance to the uniform distribution. We formalize this result in Theorem \ref{thm:borda_tv}.

\begin{lemma}[\cite{Web78}]\label{borda}
    Define Borda count as the scoring rule that assigns scores $(1,2,...,m)$ to the alternatives in rank order. Then Borda count is equivalent to the expected welfare maximizing rule when the underlying distribution is Uniform[0, 1].
\end{lemma}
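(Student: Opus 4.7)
The plan is to invoke the EWMR characterization recalled in Section~\ref{sec:model} (attributed to \citet{BCHL+15}): for any i.i.d.\ distribution $\mathcal{D}$, the scoring rule with score vector $(s_1,\ldots,s_m)=(\mathbb{E}[X_{(m)}],\mathbb{E}[X_{(m-1)}],\ldots,\mathbb{E}[X_{(1)}])$ is an EWMR, where $X_{(j)}$ denotes the $j$-th order statistic of $m$ i.i.d.\ draws from $\mathcal{D}$. The lemma then reduces to checking that this particular score vector, specialized to $\mathcal{D}=\mathrm{Uniform}[0,1]$, induces the same winner as Borda count on every preference profile.

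First I would compute the expected uniform order statistics. Using the standard fact that $X_{(j)}\sim\mathrm{Beta}(j,m-j+1)$ (equivalently, direct integration of the density $j\binom{m}{j}x^{j-1}(1-x)^{m-j}$), one obtains $\mathbb{E}[X_{(j)}]=\tfrac{j}{m+1}$. Substituting into the EWMR characterization yields the score vector $\left(\tfrac{m}{m+1},\tfrac{m-1}{m+1},\ldots,\tfrac{1}{m+1}\right)$.

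Next I would record the elementary fact that two scoring rules $(s_k)$ and $(s'_k)$ select the same winner on every preference profile whenever $s'_k=a\,s_k+b$ for some $a>0$ and $b\in\mathbb{R}$: the total score of any alternative $j$ under $s'$ equals $a$ times its total under $s$ plus the constant $nb$, so $\argmax_j$ is preserved. Applying this with $a=\tfrac{1}{m+1}$, $b=0$, the EWMR vector above is a positive scalar multiple of the Borda score vector $(m,m-1,\ldots,1)$ (which is the same scoring rule as $(1,2,\ldots,m)$ assigned in reverse rank order, per the lemma statement). Therefore Borda count coincides with the EWMR when $\mathcal{D}=\mathrm{Uniform}[0,1]$, completing the proof.

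There is no real obstacle here: the argument is a one-line closed form for uniform order statistics plus affine invariance of scoring rules. The only thing to be careful about is the indexing convention (whether $s_k$ refers to the $k$-th highest or $k$-th lowest rank), which only affects the direction in which the affine relationship is written and does not alter the conclusion.
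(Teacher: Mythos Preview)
Your proposal is correct and takes essentially the same approach as the paper: both compute the uniform order-statistic expectations $\mathbb{E}[X_{(j)}]=\tfrac{j}{m+1}$ and observe that the resulting EWMR score vector is a positive scalar multiple of the Borda scores. The only cosmetic difference is that you invoke the \citet{BCHL+15} characterization and then apply affine invariance of scoring rules, whereas the paper directly expands $\mathbb{E}[\SW(j,\bsigma)]=\sum_i \mathbb{E}[X_{(m+1-r_i)}]=\tfrac{1}{m+1}\scr(j)$ to exhibit the proportionality.
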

\begin{proof}
    Let $X_{(1)},...,X_{(m)}$ be the order statistics from $m$ i.i.d. draws from the $\unif(0,1)$ distribution. Then $\E[X_{(k)}] = \frac{k}{m+1}$. For alternative $j$, let $r_1,...r_n$ be the ranks of alternative $j$ by voters $1,...,n$ and let $\scr(j)$ be the Borda count score of alternative $j$. Then by linearity of expectation,
    \begin{align*}
        \E[\SW(j, \bsigma)] &= \E\left[\sum_{i=1}^n X_{(m+1-r_i)} \right] \\
        &= \sum_{i=1}^n \E[X_{(m+1-r_i)}] \\
        &=\frac{1}{m+1} \sum_{i=1}^n m+1-r_i \\
        &=\frac{1}{m+1} \cdot \scr(j)
    \end{align*}
    Therefore, maximizing the expected social welfare is equivalent to maximizing the Borda count score $\scr(j)$.
\end{proof}

\begin{lemma}\label{tv_exp}
    Let $U \sim \unif(0,1)$ be the uniform distribution and $P$ be a continuous distribution supported on $[0,1]$ with an invertible cdf such that 
    \[TV(U, P) = \sup_{A \subseteq [0,1]} |P(A) - U(A)| \le \epsilon.\]

    Define $X_{(k)}$ as the the $k$th order statistic out of $m$ i.i.d. draws from $P$. Then
    \[\left|\E[X_{(k)}] - \frac{k}{m+1}\right| \le \epsilon.\]
\end{lemma}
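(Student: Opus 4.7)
My plan is to use the inverse-CDF coupling between the order statistics under $P$ and the order statistics under $U$, and reduce the claim to the pointwise bound $|F_P^{-1}(y)-y|\le\epsilon$ for $y\in(0,1)$.

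First I would recall the standard fact that if $U_{(1)}\le\cdots\le U_{(m)}$ are the order statistics of $m$ i.i.d.\ $\unif(0,1)$ draws, then $\E[U_{(k)}]=\frac{k}{m+1}$, and that $X_{(k)}$ has the same distribution as $F_P^{-1}(U_{(k)})$ (since $F_P$ is invertible, the probability integral transform lets us realize $m$ draws from $P$ as $F_P^{-1}$ applied to $m$ uniform draws, and this transformation preserves order). Using this coupling,
\[
\bigl|\E[X_{(k)}] - \tfrac{k}{m+1}\bigr| = \bigl|\E[F_P^{-1}(U_{(k)}) - U_{(k)}]\bigr| \le \E\bigl[\bigl|F_P^{-1}(U_{(k)})-U_{(k)}\bigr|\bigr],
\]
so it suffices to show $|F_P^{-1}(y)-y|\le\epsilon$ for every $y\in(0,1)$.

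Next I would derive this pointwise bound from the TV assumption. Applying the TV inequality to the measurable set $A=[0,t]$ yields $|F_P(t)-t|\le\epsilon$ for every $t\in[0,1]$. Rewriting this as $t-\epsilon\le F_P(t)\le t+\epsilon$, I would substitute $t=y+\epsilon$ (clipped to $[0,1]$ if necessary) to get $F_P(y+\epsilon)\ge y$, so $F_P^{-1}(y)\le y+\epsilon$; and $t=y-\epsilon$ to get $F_P(y-\epsilon)\le y$, so $F_P^{-1}(y)\ge y-\epsilon$. Combining, $|F_P^{-1}(y)-y|\le\epsilon$, and plugging this into the previous display gives $|\E[X_{(k)}]-\frac{k}{m+1}|\le\epsilon$.

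I do not expect any serious obstacle: the main subtlety is a bit of bookkeeping at the endpoints (making sure that when $y\pm\epsilon$ falls outside $[0,1]$ the clipped inequalities still yield the right conclusion, and that the invertibility assumption on $F_P$ lets us talk about $F_P^{-1}(y)$ unambiguously). The whole argument is essentially a pointwise-Lipschitz bound on the inverse-CDF coupling, and the TV hypothesis feeds into it only through Kolmogorov-type distance (i.e., the TV bound applied to half-lines).
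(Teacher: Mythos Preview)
Your proposal is correct and follows essentially the same approach as the paper: the paper also uses the probability integral transform to write $X_{(k)}\stackrel{d}{=}F_P^{-1}(U_{(k)})$, derives $|F_P^{-1}(y)-y|\le\epsilon$ from the CDF bound $|F_P(x)-x|\le\epsilon$, and then bounds $\bigl|\E[F_P^{-1}(U_{(k)})]-\E[U_{(k)}]\bigr|$ by the pointwise bound integrated against the density of $U_{(k)}$. Your substitution argument for the inverse bound is in fact slightly more explicit than the paper's one-line justification.
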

\begin{proof}
    Let $F_P$ and $F_U$ be the cdfs of $P$ and $U$ respectively. Under the assumption, for all $x \in [0,1]$:
    \[
        |F_P(x) - F_U(x)| = |F_P(x) - x| \le \epsilon.
    \]
    Because the slope of $x$ is $1$ and these distributions are supported on $[0,1]$, this implies that for all $y \in [0,1]$:
    \[
        |F^{-1}_P(y) - y| \le \epsilon.
    \]
    The probability integral transform for order statistics gives that $X_{(k)} \sim F^{-1}(U_{(k)})$ where $U_{(k)}$ is the $k$th order statistic from $m$ i.i.d. draws from $U$. Using that $\E[U_{(k)}] = \frac{k}{m+1}$, we have the following result for the expectation of $X_{(k)}$.
    \begin{align*}
        \left|\E[X_{(k)}]- \frac{k}{m+1}\right| &= \left|\E[F^{-1}(U_{(k)})]- \frac{k}{m+1}\right|  \\
        &= \left|\int_0^1 F^{-1}(y)f_{U_{(k)}}(y)\:dy- \frac{k}{m+1}\right| \\
        &=  \left|\int_0^1 (F^{-1}(y) - y)f_{U_{(k)}}(y)\:dy + \E[U_{(k)}] - \frac{k}{m+1}\right| \\
        &=  \left|\int_0^1 (F^{-1}(y) - y)f_{U_{(k)}}(y)\:dy\right|\\
        &\le  \int_0^1 \left|F^{-1}(y) - y\right|f_{U_{(k)}}(y)\:dy\\
        &\le  \int_0^1 \epsilon f_{U_{(k)}}(y)\:dy\\
        &=  \epsilon.
    \end{align*}    
\end{proof}

\begin{thm}\label{thm:borda_tv}
    Let $P$ be a continuous distribution with invertible cdf that is supported on $[0,1]$ such that $TV(U(0,1),P) \le \epsilon \le \frac{1}{10}$. Let $\scr$ represent the Borda score. Then the Borda count winner is a $5\epsilon$-approximation of the expected welfare maximizing rule.
    \[
        \E[\SW(\brda, \bsigma)] \ge (1-5\epsilon)\E[\SW(\ewmr, \bsigma)].
    \]
\end{thm}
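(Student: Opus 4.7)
The plan is to exploit two facts. First, by Lemma~\ref{borda}, Borda maximizes the quantity $\frac{\scr(j)}{m+1} = \sum_{i=1}^n \frac{m+1-r_i^j}{m+1}$, which equals $\E_U[\SW(j,\bsigma)]$, the expected welfare under the uniform distribution. Second, by Lemma~\ref{tv_exp}, for any $k$ the $k$th order-statistic expectation under $P$ is within $\epsilon$ of the one under $U$. Combining these via linearity of expectation, for every alternative $j$ with rankings $r_1^j,\ldots,r_n^j$,
\[
    \left|\E_P[\SW(j,\bsigma)] - \tfrac{\scr(j)}{m+1}\right| = \left|\sum_{i=1}^n \bigl(\E_P[X_{(m+1-r_i^j)}] - \tfrac{m+1-r_i^j}{m+1}\bigr)\right| \le n\epsilon.
\]

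The plan is then a short sandwich argument. Let $j^\star = \ewmr(\bsigma)$ and $j^B = \brda(\bsigma)$. Since Borda maximizes $\scr(\cdot)$,
\[
    \E_P[\SW(\brda,\bsigma)] \ge \tfrac{\scr(j^B)}{m+1} - n\epsilon \ge \tfrac{\scr(j^\star)}{m+1} - n\epsilon \ge \E_P[\SW(\ewmr,\bsigma)] - 2n\epsilon.
\]
So it suffices to show $2n\epsilon \le 5\epsilon \cdot \E_P[\SW(\ewmr,\bsigma)]$, i.e.\ $\E_P[\SW(\ewmr,\bsigma)] \ge \tfrac{2n}{5}$.

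The remaining step, which I view as the only substantive one, is to lower-bound $\E_P[\SW(\ewmr,\bsigma)]$ in terms of the mean $\mu_P$ of $P$. Since the EWMR's expected welfare dominates the average expected welfare across alternatives, we get $\E_P[\SW(\ewmr,\bsigma)] \ge n\mu_P$. It remains to show $\mu_P \ge \tfrac{2}{5}$, which follows from TV-closeness to $U$: writing $\mu_P = \int_0^1 (1-F_P(x))\,dx$ and $\tfrac{1}{2} = \int_0^1 (1-F_U(x))\,dx$, and noting that $|F_P(x)-F_U(x)|\le\epsilon$ for every $x$ (a consequence of $TV(P,U)\le\epsilon$ applied to the set $(-\infty,x]$), we get $|\mu_P-\tfrac{1}{2}|\le\epsilon$, hence $\mu_P\ge\tfrac{1}{2}-\epsilon\ge\tfrac{2}{5}$ using $\epsilon\le\tfrac{1}{10}$.

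The hard part is really just remembering that TV distance controls the mean (no surprises, just the pointwise CDF bound already used in the proof of Lemma~\ref{tv_exp}). The rest is bookkeeping: combine the three inequalities to conclude $\E_P[\SW(\brda,\bsigma)] \ge \E_P[\SW(\ewmr,\bsigma)] - 2n\epsilon \ge (1-5\epsilon)\,\E_P[\SW(\ewmr,\bsigma)]$, as desired.
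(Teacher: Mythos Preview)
Your proposal is correct and follows essentially the same argument as the paper: the same sandwich via Lemmas~\ref{borda} and~\ref{tv_exp} to obtain $\E[\SW(\brda,\bsigma)] \ge \E[\SW(\ewmr,\bsigma)] - 2n\epsilon$, the same lower bound $\E[\SW(\ewmr,\bsigma)] \ge n\mu_P \ge n(\tfrac{1}{2}-\epsilon)$ from the TV bound on the mean, and the same final arithmetic (your $n(\tfrac{1}{2}-\epsilon)\ge \tfrac{2n}{5}$ is exactly the paper's $\tfrac{2\epsilon}{1/2-\epsilon}\le 5\epsilon$). The only cosmetic difference is that you spell out the mean bound via $\int_0^1(1-F_P(x))\,dx$, whereas the paper states it in one line and appeals to Lemma~\ref{lem:bdd_dist_IDSW_bound}.
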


\begin{proof}
    Recall that lemma \ref{borda} shows that 
    \begin{align*}
        \scr(j) &= \sum_{i=1}^n m+1 - r_i^j \\
        &= (m+1)\left( n - \sum_{i=1}^n \frac{r^i_j}{m+1}\right).
    \end{align*}
    Define $X_{(m + 1-r_i^j)}$ to be the $(m + 1-r_i^j)^{th}$ order statistic. Recall that Lemma \ref{tv_exp} states that for all $r_i^j$,
    \[
        \left|\E[X_{(m + 1-r_i^j)}] - \frac{m+1 - r_i^j}{m+1} \right| \le \epsilon.
    \]
    Furthermore, 
    \[
        E[\SW(j, \bsigma)] = \sum_{i=1}^n \E[X_{(m+1-r_i^j)}].
    \]
    Combining these two equations gives
    \[
        \left|\E[\SW(j, \bsigma)] - \sum_{i=1}^n \frac{m+1 - r_i^j}{m+1}\right| \le n\epsilon,
    \]
    or equivalently,
    \[
        \left|\E[\SW(j, \bsigma)] - \frac{\scr(j)}{m+1}\right| \le n\epsilon.
    \]
    Because $\brda(\bsigma)$ is the alternative that maximizes $\scr(j)$, we can apply this inequality twice to get
    \begin{align*}
        \E[\SW(\brda, \bsigma)] + n\epsilon &\ge \frac{\scr(\brda(\bsigma))}{m+1} \\
        &\ge \frac{\scr(\ewmr(\bsigma))}{m+1} \\
        &\ge \E[\SW(\ewmr, \bsigma)] - n\epsilon.
    \end{align*}
    To relate $n\epsilon$ to $\E[\SW(\ewmr, \bsigma)]$, we need to lower bound the expectation. If two distributions supported on $[0,1]$ have total variation distance of $\epsilon$, then the means of the two distributions differ by at most $\epsilon$. Therefore, if $\mu$ is the expected value of $P$, then $\mu \ge 0.5 - \epsilon$. This means Lemma \ref{lem:bdd_dist_IDSW_bound} implies: 
    \[
        \E[\SW(\ewmr, \bsigma)] \ge n\mu \ge n\left(0.5 - \epsilon\right).
    \]
    Therefore, the two equations above combine to:
    \begin{align*}
        \E[\SW(\brda, \bsigma)] &\ge \E[\SW(\ewmr, \bsigma)] - 2n\epsilon \\
        &= \left(1 - \frac{2n\epsilon}{\E[\SW(\ewmr, \bsigma)]}\right)\E[\SW(\ewmr, \bsigma)] \\
        &\ge \left(1 - \frac{2n\epsilon}{\frac{n}{2} - n\epsilon}\right)\E[\SW(\ewmr, \bsigma)] \\
        &\ge (1-5\epsilon)\E[\SW(\ewmr, \bsigma)]
    \end{align*}
    The last inequality comes from taking $\epsilon \le \frac{1}{10}$.
\end{proof}

\section{Plurality}\label{app:plurality_swmr}
\begin{thm}
    For every value of $n,m$ and every distribution $\mathcal{D}$, plurality is an $\alpha$-SWMR for
    \[
        \alpha =  \max\left(\frac{1}{m}, \frac{1}{n}\right)
    \]
\end{thm}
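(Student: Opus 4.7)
The plan is to establish the bound via a pigeonhole argument on first-place votes combined with an elementary pair of bounds on expected welfare in terms of $\mu_{(m)}$, the expected value of the top order statistic of $m$ i.i.d.\ draws from $\mathcal{D}$. Let $\mu_{(1)}\leq \cdots\leq \mu_{(m)}$ denote these expected order statistics. Recall from \cite{BCHL+15} (and as used throughout the paper) that if alternative $j$ is ranked $r_1^j,\ldots,r_n^j$ by the $n$ voters, then
\[
\mathbb{E}[\SW(j,\bsigma)] \;=\; \sum_{i=1}^{n}\mu_{(m+1-r_i^j)}.
\]

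Let $p$ denote the plurality winner, and let $k$ be the number of voters who rank $p$ first. By pigeonhole, since some alternative must receive at least $\lceil n/m\rceil$ first-place votes, we have $k \geq \lceil n/m\rceil \geq \max\bigl(1,\tfrac{n}{m}\bigr)$. Because all $\mu_{(i)}$ are nonnegative, I would lower-bound the plurality winner's expected welfare by discarding all voters who do not rank $p$ first:
\[
\mathbb{E}[\SW(p,\bsigma)] \;\geq\; k\cdot \mu_{(m)}.
\]
On the other side, every alternative $j$ (including $\ewmr(\bsigma)$) has $r_i^j\geq 1$ for every voter, so $\mu_{(m+1-r_i^j)}\leq \mu_{(m)}$, giving the matching upper bound $\mathbb{E}[\SW(\ewmr,\bsigma)]\leq n\cdot \mu_{(m)}$.

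Dividing yields
\[
\frac{\mathbb{E}[\SW(p,\bsigma)]}{\mathbb{E}[\SW(\ewmr,\bsigma)]} \;\geq\; \frac{k\,\mu_{(m)}}{n\,\mu_{(m)}} \;=\; \frac{k}{n} \;\geq\; \max\!\left(\tfrac{1}{n},\tfrac{1}{m}\right),
\]
as desired. The only edge case is when $\mu_{(m)}=0$; but then $\mathcal{D}$ is a point mass at $0$ and every alternative has expected welfare $0$ under every preference profile, so the guarantee is trivial (or vacuous, depending on the convention chosen for $0/0$). There is no real obstacle here: the argument is essentially a one-line pigeonhole bound, and the ``hard'' part is really just being careful to notice that the two quantities $\tfrac{1}{n}$ and $\tfrac{1}{m}$ both fall out of the same inequality $k\geq \max(1,n/m)$ and that one need not compare $\mu_{(m)}$ to any other order statistic, only use that $\mu_{(i)}\geq 0$ and $\mu_{(i)}\leq \mu_{(m)}$.
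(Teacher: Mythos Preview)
Your proof is correct and follows essentially the same approach as the paper: pigeonhole gives $k\geq\lceil n/m\rceil$ first-place votes for the plurality winner, the lower bound $\mathbb{E}[\SW(p,\bsigma)]\geq k\,\mu_{(m)}$ and the upper bound $\mathbb{E}[\SW(\ewmr,\bsigma)]\leq n\,\mu_{(m)}$ combine to yield $\lceil n/m\rceil/n\geq\max(1/m,1/n)$. Your handling of the $\mu_{(m)}=0$ edge case and the explicit justification of the order-statistic bounds are slightly more careful than the paper's version, but the argument is the same.
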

\begin{proof}
    Define $\plur(\bsigma)$ as the plurality winner. Then by definition, $\plur(\bsigma)$ is the top-ranked alternative by at least $\lceil \frac{n}{m} \rceil $ voters. The maximum total expected welfare for an alternative is $n\E[X_{(m)}]$, where $\E[X_{(m)}]$ is the expected value of the maximum from $m$ i.i.d. draws from $\mathcal{D}$. Therefore, the expected welfare of $\plur(\bsigma)$ is lower bounded as follows:
    \begin{align*}
        \E[\SW(\plur, \bsigma)] &\ge \lceil \tfrac{n}{m} \rceil \cdot \E[X_{(m)}] \\[1em]
        &\ge \frac{\lceil \tfrac{n}{m} \rceil}{n}\cdot   \E[\SW(\ewmr, \bsigma)] \\
        &\ge \max\left(\tfrac{1}{m}, \tfrac{1}{n}\right) \cdot \E[\SW(\ewmr, \bsigma)].
    \end{align*}
\end{proof}

\end{document}